\newcommand{\norm}[1]{\left\|#1\right\| }
\newcommand{\epsi}[0]{\varepsilon}
\newcommand{\mrm}[1]{{\mathrm{#1}}}
\newcommand{\field}[1]{\mathbb{#1}}
\newcommand{\R}{\field{R}}
\newcommand{\N}{\field{N}}
\newcommand{\Z}{\field{Z}}
\newcommand{\id}[0]{\mathbf{1}}
\newcommand{\Or}{{\mathcal{O}}}
\newcommand{\E}{{\mathrm{e}}}
\newcommand{\D}{{\mathrm{d}}}
\DeclareMathOperator{\Tr}{Tr}
\newcommand{\im}{\mathrm{i}}
\newcommand{\ph}{\varphi}
\newcommand{\tube}{\mathcal{T}}
\newcommand{\weq}[2]{\stackrel{\mathclap{#1}}{#2}}
\theoremstyle{plain}
\newtheorem{thm}{Theorem}
\theoremstyle{plain}
\newtheorem{lem}{Lemma}[section]
\newtheorem{prop}[lem]{Proposition}
\newtheorem{cor}[lem]{Corollary}
\newtheorem*{thm*}{Theorem}%
\newtheorem*{lem*}{Lemma}
\newtheorem*{prop*}{Proposition}
\newtheorem*{cor*}{Corollary}
\newtheorem{defn}{Definition}[section]
\newtheorem*{defn*}{Definition}
\theoremstyle{remark}
\newtheorem{rem}{Remark}
\DeclareFontFamily{OMX}{MnSymbolE}{}
\DeclareSymbolFont{MnLargeSymbols}{OMX}{MnSymbolE}{m}{n}
\DeclareFontShape{OMX}{MnSymbolE}{m}{n}{
    <-6>  MnSymbolE5
   <6-7>  MnSymbolE6
   <7-8>  MnSymbolE7
   <8-9>  MnSymbolE8
   <9-10> MnSymbolE9
  <10-12> MnSymbolE10
  <12->   MnSymbolE12
}{}
\DeclareFontShape{OMX}{MnSymbolE}{b}{n}{
    <-6>  MnSymbolE-Bold5
   <6-7>  MnSymbolE-Bold6
   <7-8>  MnSymbolE-Bold7
   <8-9>  MnSymbolE-Bold8
   <9-10> MnSymbolE-Bold9
  <10-12> MnSymbolE-Bold10
  <12->   MnSymbolE-Bold12
}{}
\let\llangle\@undefined
\let\rrangle\@undefined
\DeclareMathDelimiter{\llangle}{\mathopen}%
                     {MnLargeSymbols}{'164}{MnLargeSymbols}{'164}
\DeclareMathDelimiter{\rrangle}{\mathclose}%
                     {MnLargeSymbols}{'171}{MnLargeSymbols}{'171}
\title{The NLS limit for bosons in a quantum waveguide\thanks{This work was supported by the German Science Foundation within the GRK 1838 ``Spectral theory and dynamics of quantum systems''.}}
\author{Johannes von Keler and Stefan Teufel\\[2mm]\em
Mathematisches Institut, Universit\"at T\"ubingen, Germany.\\[1mm]\small stefan.teufel@uni-tuebingen.de}
\begin{document}

\selectlanguage{english}

\maketitle
\begin{abstract}
We consider a system of $N$ bosons confined to a thin waveguide, i.e.\ to a region of space within an $\epsi$-tube around a curve in $\R^3$. 
We show that when taking simultaneously the NLS limit $N\to \infty$ and 
the limit of strong confinement $\epsi\to 0$, the time-evolution of such a system starting  in a state close to a Bose-Einstein condensate is approximately captured by a non-linear Schr\"odinger equation in one dimension. The strength of the non-linearity in this Gross-Pitaevskii type equation depends on the shape of the cross-section of the waveguide,  while the ``bending'' and the ``twisting'' of the waveguide contribute potential terms. Our analysis is based on an approach to mean-field limits developed by Pickl \cite{Pic08}.

\end{abstract}

\section{Introduction}

We consider a system of $N$ identical weakly interacting bosons confined to a thin waveguide, i.e.\ to a region  $\tube_\epsi\subset \R^3$ contained in an $\epsi$-neighborhood of a curve $c:\R\to\R^3$. The Hamiltonian of the system is
\begin{equation}\label{hamilton1}
 H_{\tube_\epsi}(t)= \sum_{i=1}^N  \left(- \Delta_{z_i} + V(t,z_i) \right)+  \sum_{i \leq j}  \frac{a}{\mu^3} \,w  \left( \frac{ z_i-z_j}{\mu}\right)\,,
\end{equation}
where $z_j\in \R^3$ is the coordinate of the $j$th particle, $\Delta_{z_j}$ the Laplacian on $\tube_\epsi$ with Dirichlet boundary conditions, $V$ a possibly time-dependent external potential and $w$ a positive pair interaction potential. The coupling $a:=\epsi^2/N$ is chosen such that for $N$-particle states supported along a fixed part of the curve   the interaction energy per particle remains of order one for all $N\in\N$ and $\epsi>0$. 
For $\beta>0$ the effective range of the interaction $\mu:= \left(\epsi^2/N\right)^\beta$ goes to zero for $N\to \infty$ and $\epsi\to 0$ and $\mu^{-3} w(\cdot/\mu)$ converges to a point interaction. We consider in the following only $\beta\in (0,1/3)$, the so called mean-field regime  where  $a/\mu^3$ still goes to zero. For  recent papers containing   concise reviews of the mean-field and NLS limit for bose gases we refer to \cite{LewNamRou14,NamRouSei15}. For a detailed discussion of bose condensation in general and also the problem of dimensional reduction we refer to \cite{LieSeiSolYng05}.

Let us give a somewhat informal account of our result before we discuss the details.
Assume that the initial state  $\psi^{N,\epsi}\in L^2_+(\tube_\epsi^N) := \bigotimes^N_{\rm sym} L^2(\tube_\epsi)$ has a one-particle density matrix $\gamma_1$, i.e.\ the operator with kernel
\begin{equation}\label{redendef}
\gamma_1(z,z') := \int \psi^{N,\epsi}(z,z_2,\cdots ,z_N) \,\overline \psi^{N,\epsi}(z',z_2,\cdots ,z_N) \D z_2 \cdots \D z_N\,,
\end{equation}
that
is asymptotically close to a projection $p=|\ph\rangle\langle \ph|$ onto a single particle state $\ph= \Phi_0\chi\in L^2(\tube_\epsi)$, where $\Phi_0$ is the wavefunction along the curve and $\chi$ is the ``ground state'' in the confined direction.
Then we show that all $M$-particle density matrices $\gamma_M(t)$ of the solution $\psi^{N,\epsi}(t)$ of the Schr\"odinger equation
\begin{align*}
\im \tfrac{\D}{\D t}  \psi^{N,\epsi} (t)= H_{\tube_\epsi} (t) \,\psi^{N,\epsi} (t)
\end{align*}
are asymptotically close to $|\ph(t)\rangle\langle\ph(t)|^{\otimes M}$, where $
\ph(t) = \Phi(t)\chi$ with $\Phi(t)$ the solution of the one-dimensional non-linear Schr\"odinger equation
\begin{align}\label{equ:grosspqwgintro}
 \im \partial_t \Phi(t,x) =  \left(-\tfrac{\partial^2}{\partial x^2} + V_{\rm geom}(x) + V(t,x,0)+ b |\Phi(t,x)|^2\right) \,\Phi(t,x) \qquad \mbox{with } \;\Phi(0)=\Phi_0\,.
\end{align}
 The strength $b$ of the nonlinearity depends on the details of the asymptotic limit. We distinguish two regimes: In the case of  moderate confinement the width $\epsi$ of the waveguide shrinks slower than the range $\mu$ of the interaction and $b= \int_{\Omega_{\rm f}} |\chi(y)|^4 \, \D^2 y \cdot \int_{\R^3} w(r)\, \D^3 r$, where $\Omega_{\rm f}$ is the cross section of the waveguide and $\chi$ the ground state of the 2$d$-Dirichlet Laplacian on $\Omega_{\rm f}$. In the case of  strong confinement the width $\epsi$ of the waveguide shrinks faster than the range $\mu$ of the interaction and $b=0$. The geometric potential $V_{\rm geom}(x)$ depends on the geometry of the waveguide and is the sum of two parts. The curvature $\kappa(x)$ of the curve contributes a negative potential $-\kappa(x)^2/4$, while the twisting of the cross-section relative to   the curve contributes a positive potential. 
 Note that quasi one-dimensional Bose-Einstein condensates in non-trivial geometric structures have been realised experimentally \cite{GoeVogKet01,HeRy09} and that the transport and manipulation of condensates in waveguides is a highly promising experimental  technique, see e.g.\ the review \cite{FZ}.

The rigorous derivation of the non-linear Gross-Piteavskii equation   from the underlying linear many-body Schr\"odinger equation has been a very active topic in mathematics during the last decade, however, almost  exclusively without the confinement to a waveguide. Then  the  Gross-Piteavskii equation \eqref{equ:grosspqwg} is still an equation on $\R^3$.
The first rigorous and complete derivation in $\R^3$ is due to Erd\"os, Schlein and Yau \cite{ErdSchYau07}. Their proof is based on the BBGKY hierarchy, a system of coupled equations for all $M$-particle density matrices $\gamma_M(t)$, $M=1,\ldots,N$. Independently Adami, Golse and Teta solved the problem in one dimension \cite{AGT}. Shortly after, Pickl developed an alternative approach \cite{Pic08} that turned out very flexible concerning time-dependent external potentials \cite{Pic10a}, non-positive interactions \cite{Pic10b}, and singular interactions \cite{KnoPic09}. Yet another approach based on Bogoliubov transformations and coherent states on Fock space was developed for the most difficult case $\beta=1$ in \cite{BenOliSch12}. Recently also corrections to the mean-field dynamics were established in \cite{GrMa13,NaNa15}.
There are also several lecture notes reviewing  the different approaches to the NLS-limit, e.g.\ \cite{Sch08,Gol13,BenPorSch15,Rou15}.
For our purpose  the approach of Pickl turned out fruitful and our proof follows his general strategy and uses   his formalism. However, since   the NLS limit in a geometrically nontrivial waveguide required also crucial modifications,   our paper is fully self-contained.

Also the problem of deriving lower dimensional effective equations for strongly confined bose gases has been considered before. In 
\cite{abdmehschweis05} the authors start with the Gross-Pitaevskii equation in dimension $n+d$ confined to a $n$-dimensional plane by a strong harmonic potential and derive 
an effective NLS in dimension $n$. In \cite{MehRay15} the reduction of the Gross-Pitaevskii equation in dimension two to an $\epsi$-neighbourhood of a curve is considered. In both cases this corresponds to first taking the mean field limit and then the limit of strong confinement. However, we will see that the two limits do not commute and thus, that a direct derivation of the Gross-Pitaevskii equation in lower dimension from the $N$-particle Schr\"odinger evolution in higher dimension is of interest. This was done for a gas confined to a plane in $\R^3$ in \cite{CheHol13}, and for a gas confined to a straight line in \cite{CheHol14} using the BBGKY-approach of \cite{ErdSchYau07}.

\section{Main result}

In order to explain our result in full detail we need to start with the construction of the wave\-guide~$\tube_\epsi$.
Consider a 
smooth curve $c:\R \rightarrow \R^3$   parametrized by arc-length, i.e.\ with  $\|c'(x)\|_{\R^3}=1$. Along the curve we define a frame  by picking an   orthonormal frame $(\tau(0), e_1(0), e_2(0))$ at $c(0)$ with $\tau(0)=c'(0)$ tangent to the curve and then defining $(\tau(x), e_1(x), e_2(x))$ by parallel transport along the curve, i.e.\ by solving 
\begin{equation*}{\small
\label{eq:DGLBishop}
\begin{pmatrix} \tau' \\ e'_1 \\ e'_2 \end{pmatrix} = \begin{pmatrix} 0 & \kappa_1 & \kappa_2 \\ - \kappa_1 & 0 & 0 \\ -\kappa_2 & 0 & 0 \end{pmatrix} \begin{pmatrix} \tau \\ e_1 \\ e_2 \end{pmatrix}}
\end{equation*}
with the components of the mean curvature vector  $\kappa_j:\R\to\R$ ($j=1,2$) given by
\[
\kappa_j(x) := \langle \tau'(x), e_j(x)\rangle _{\R^3} = \langle c''(x),e_j(x)\rangle_{\R^3} \, .
\]
Let the cross-section $\Omega_{\rm f}\subset \R^2$ of the waveguide be open and bounded and let $\theta: \R\to\R$ be a smooth function   that defines the twisting of the cross-section relative to the parallel frame. In order to define the thin waveguide it is convenient to introduce the following maps separately. 
Denote the scaling map by
\[
D_\epsi :  \R^3 \to   \R^3\,,\quad r=(x,y) \mapsto (x,\epsi y)=:r^\epsi\,,
\]
the twisting map by
\[
T_\theta : \R^3 \to   \R^3\,,\quad (x,y)\mapsto (x,T_{\theta(x)}y)\,,\quad \mbox{where } \; T_{\theta(x)} = \begin{pmatrix}
\cos \theta(x) & -\sin\theta(x)\\\sin\theta(x)&\cos\theta(x)\end{pmatrix}\,,
\]
and the embedding map by
\begin{equation*}\label{def:f}
f : \R^3\to \R^3 \,, \quad r=(x,  y_1,y_2 )\mapsto f (r) = c(x) +  y_1 e_1(x)+   y_2 e_2(x)\,.
\end{equation*}
The waveguide is now defined by first scaling, then twisting and finally embedding the set $\Omega := \R\times \Omega_{\rm f}\subset \R^3$ into a neighbourhood of $c(\R)$. For $\epsi$ small enough,  the map
\[
f_\epsi: \Omega:= \R\times \Omega_{\rm f}\to \R^3\,, r \mapsto f_\epsi(r) := f\circ T_\theta\circ D_\epsi (r)
\]
is, by Assumption~{\bf A1},  a diffeomorphism onto its range 
\[
\tube_\epsi  := f_\epsi (\Omega ) \subset \R^3\,,
\]
which defines the region in space accessible to the particles, i.e.\ the waveguide. 
Now the evolution of an $N$-particle system in a waveguide is given by the Hamiltonian \eqref{hamilton1}, which acts on $L^2(\tube_\epsi)^{\otimes N} \cong L^2({\tube_\epsi}^N) $ with  Dirichlet boundary conditions.

However, for the formulation and the derivation of our result  it is more convenient to always work on 
 the fixed, $\epsi$-independent product-domain $\Omega = \R\times \Omega_{\rm f}$ instead of the tube $\tube_\epsi$. This is achieved by the natural  unitary transformation.   For $\epsi$ small enough the map $f_\epsi$ is a diffeomorphism and therefore  the map  
\[
U_\epsi :   L^2({\tube_\epsi} )\to L^2 (\Omega )\,,\quad \psi \mapsto (U_\epsi\psi)(r ) :=\sqrt{\det Df_\epsi(r)} \;\psi(f_\epsi(r)) =: \sqrt{\rho_\epsi(r)} \;\psi(f_\epsi(r)) 
\]
is unitary. Using $(U_\epsi)^{\otimes N}$ we can unitarily map the waveguide Hamiltonian $H_{\tube_\epsi}(t)$ in \eqref{hamilton1} to
\begin{eqnarray}\label{equ:hamqwg}
H(t)&:= &(U_\epsi)^{\otimes N} H_{\tube_\epsi}(t)(U_\epsi^*)^{\otimes N} + \sum_{i=1}^N \tfrac{1}{\epsi^2} V^\perp(y_i)\\& =&   \sum_{i=1}^N \left( -\left(  U_\epsi \Delta U_\epsi^*  \right)_{z_i}  + \tfrac{1}{\epsi^2}V^\perp(y_i)+V(t, f_\epsi(r_i))\right)+ a \sum_{i \leq j}\frac{1}{\mu^3} \,w  \left( \frac{ f_\epsi(r_i)-f_\epsi(r_j)}{\mu}\right)\,,\nonumber
\end{eqnarray}
  where we allow for an additional confining potential $V^\perp:\Omega_{\rm f} \to \R$.
  We denote the lowest eigenvalue of $-\Delta_y + V^\perp(y)$ on $\Omega_{\rm f}$ with Dirichlet boundary conditions by $E_0$ and the corresponding real valued and normalised eigenfunction by $\chi$.
  
  We will consider simultaneously the mean-field limit $N\to \infty$ and the limit of strong confinement $\epsi\to 0$ for the time-dependent Schr\"odinger equation 
  with Hamiltonian $H(t)$ on the Dirichlet domain $ D(H(t)) \equiv D(H)=H^2(\Omega^N) \cap H^1_0(\Omega^N)$.
  Recall   that the effective coupling $a$ is given by $a=\epsi^2/{N}$  and the effective range of the interaction by  $\mu = (\epsi^{2}/N)^{\beta}$. 
  
 Compared to the standard $N$-particle Schr\"odinger operator we thus have in \eqref{equ:hamqwg} the shrinking domain and the strongly confining potential $V^\perp$, a  pair interaction that is no longer exactly a function of the separation $r_i-r_j$ of two particles, and a modified kinetic energy operator.

  \begin{lem} \label{LapKoord} The Laplacian in the new coordinates has the form
  \[
   U_\epsi \Delta U_\epsi^*  = - \left(\partial_x +  \theta' (x)L \right)^2\;-\; \tfrac{1}{\epsi^2} \Delta_y\; - \;V_{\rm bend}(r) \;-\; \epsi \,S^\epsi\,,
   \]
   where
   \begin{eqnarray*}
  L &=&    y_1\partial_{y_2} - y_2\partial_{y_1}
      \,,\\[2mm]
  V_{\rm bend}(r)&=&  - \frac{\kappa(x)^2}{4\rho_\epsi(r)^2} - \epsi\,\frac{T_{\theta(x)}y\cdot\kappa(x)''}{2\rho_\epsi(r)^3} -
  \epsi^2\, \frac{5(  T_{\theta(x)}y\cdot\kappa'(x))^2}{4\rho_\epsi(r)^4}\,,
  \\[2mm]
   S^\epsi  &=&  \left(\partial_x  +\theta'  (x) L \right)  s^\epsi(r)  \left(\partial_x  + \theta' (x)  L \right)\,,\\[2mm]
   \rho_\epsi(r) &=& 1- \epsi\, T_{\theta(x)}y\cdot \kappa(x)\,,\quad\mbox{and} \quad s^\epsi(r) = \frac{\rho_\epsi^2(r)-1}{\epsi\,\rho_\epsi^2(r)} 
   \,.
   \end{eqnarray*}
  \end{lem}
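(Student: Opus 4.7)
I plan to prove the lemma by viewing it as a Laplace--Beltrami computation. Since $U_\epsi$ factors as $U_\epsi = c_\epsi M_{\sqrt{\rho_\epsi}}V_\epsi$, where $V_\epsi\psi := \psi\circ f_\epsi$ is the canonical pullback from $L^2(\tube_\epsi)$ to $L^2(\Omega,\sqrt{|g|}\,\D r)$ and $c_\epsi$ is the constant factor $\epsi$ hidden in $\sqrt{\det Df_\epsi}$, pullback by the diffeomorphism $f_\epsi$ turns the Euclidean Laplacian on $\tube_\epsi$ into the Laplace--Beltrami operator $\Delta_g$ of the induced metric $g := Df_\epsi^\top Df_\epsi$. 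The lemma is therefore equivalent to the identity $U_\epsi \Delta U_\epsi^* = \sqrt{\rho_\epsi}\,\Delta_g\,\rho_\epsi^{-1/2}$, which I verify by computing, in order, (i) the metric, (ii) $\Delta_g$, and (iii) the $\sqrt{\rho_\epsi}$-conjugation.

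For (i), differentiating $f_\epsi(x,y) = c(x) + \epsi\tilde y_1 e_1(x) + \epsi\tilde y_2 e_2(x)$ with $\tilde y := T_{\theta(x)}y$, and using the parallel-frame ODEs $\tau' = \kappa_1 e_1+\kappa_2 e_2$, $e_j' = -\kappa_j\tau$ together with $\partial_x\tilde y = \theta'J\tilde y$ for $J = \left(\begin{smallmatrix}0 & -1\\ 1 & 0\end{smallmatrix}\right)$, gives $\partial_x f_\epsi = \rho_\epsi\tau + \epsi\theta'(-\tilde y_2 e_1+\tilde y_1 e_2)$ and $\partial_{y_j}f_\epsi = \epsi\sum_k (T_\theta)_{kj}e_k$. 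Orthonormality of $(\tau,e_1,e_2)$ produces
\[g = \begin{pmatrix}\rho_\epsi^2 + \epsi^2(\theta')^2|y|^2 & \epsi^2\theta'(-y_2, y_1)\\ \epsi^2\theta'(-y_2,y_1)^\top & \epsi^2 I_2\end{pmatrix},\quad \sqrt{|g|}=\epsi^2\rho_\epsi,\]
whose Schur complement with respect to the $y$-block is exactly $\rho_\epsi^2$. For (ii), expanding $\Delta_g = \frac{1}{\sqrt{|g|}}\partial_i(\sqrt{|g|}g^{ij}\partial_j)$ and regrouping the mixed $g^{xy_j}$-entries through the angular-momentum operator $L = y_1\partial_{y_2}-y_2\partial_{y_1}$ yields the key organised form
\[\Delta_g = \frac{1}{\rho_\epsi}(\partial_x+\theta' L)\frac{1}{\rho_\epsi}(\partial_x+\theta' L) + \frac{1}{\epsi^2\rho_\epsi}\sum_{j=1,2}\partial_{y_j}(\rho_\epsi\partial_{y_j}),\]
which explains why the twist enters only through $\partial_x+\theta' L$.

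For (iii), using the identity $\sqrt{\rho_\epsi}\,\partial\,\rho_\epsi^{-1/2} = \partial - \tfrac12\rho_\epsi^{-1}(\partial\rho_\epsi)$ for any first-order derivation $\partial$, the density $\rho_\epsi$ inside $\partial_{y_j}(\rho_\epsi\partial_{y_j})$ cancels against the two $\sqrt{\rho_\epsi}$ factors, producing $\epsi^{-2}\Delta_y$ plus a scalar correction; the $x$-piece becomes $(\partial_x+\theta' L)\rho_\epsi^{-2}(\partial_x+\theta' L)$ plus another scalar. The splitting $\rho_\epsi^{-2} = 1 - \epsi s^\epsi$ with $s^\epsi = (\rho_\epsi^2-1)/(\epsi\rho_\epsi^2)$ then separates the leading $-(\partial_x+\theta' L)^2$ from the residue $-\epsi S^\epsi$, and all surviving scalars combine into $V_{\rm bend}$. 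The one genuinely delicate step is the identification of $V_{\rm bend}$ with the precise coefficients $\tfrac14,\tfrac12,\tfrac54$: one substitutes $\partial_x\rho_\epsi = -\epsi T_{\theta(x)}y\cdot\kappa' - \epsi\theta'(JT_{\theta(x)}y)\cdot\kappa$ (and the analogous $\partial_x^2\rho_\epsi$) into the scalar corrections, observes that the $\theta'(JT_\theta y)\cdot\kappa$ piece is absorbed into the operator $\theta' L$ (because $L(T_\theta y\cdot\kappa) = (JT_\theta y)\cdot\kappa$, so it produces no scalar contribution), and Taylor-expands in $\epsi$. Steps (i)--(ii) are routine chain-rule bookkeeping; this final algebraic identification is where the real work lies.
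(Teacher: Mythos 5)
Your proof is correct, but it takes a genuinely different and more self-contained route than the paper. The paper's proof is essentially a citation: it invokes \cite{HaLaTe14}, where the untwisted ($\theta\equiv 0$) formula is derived, and then obtains the twisted result by conjugating the untwisted operator by the $x$-dependent rotation $R(\theta(x))=\E^{\theta(x)L}$, which commutes with $\Delta_y$ and with $V_{\rm bend}$ (once $y$ is replaced by $T_\theta y$) and transforms $\partial_x$ into $\partial_x+\theta'L$. You instead carry out the entire Laplace--Beltrami computation directly in the twisted coordinates, from the metric onward: the computation of $g$, the Schur-complement identity $\sqrt{|g|}=\epsi^2\rho_\epsi$, the regrouping of the mixed $g^{x y_j}$-entries into the first-order operator $\partial_x+\theta'L$, and the $\sqrt{\rho_\epsi}$-conjugation via $\sqrt{\rho_\epsi}\,\partial\,\rho_\epsi^{-1/2}=\partial-\tfrac12\rho_\epsi^{-1}(\partial\rho_\epsi)$ are all correct, and your observation that $L(T_\theta y\cdot\kappa)=(JT_\theta y)\cdot\kappa$, so that the twist's contribution to $\partial_x\rho_\epsi$ is absorbed by the $\theta'L$-operator rather than polluting $V_{\rm bend}$, is precisely the structural fact that also makes the paper's conjugation shortcut work. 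The trade-off is self-containment against length: your route re-derives, rather than cites, the delicate identification of $V_{\rm bend}$ with the coefficients $\tfrac14,\tfrac12,\tfrac54$ (which you rightly flag as the only step requiring genuine care), and is therefore closer to what a fully written-out proof of the lemma would look like, whereas the paper's version is shorter but shifts the bulk of the verification to the cited reference.
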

  \begin{proof} This is an elementary computation and the result is, somewhat implicitly, used in many papers on quantum waveguides, see e.g.\  \cite{Kre08} and references therein.
 The explicit result using our notation is 
derived  in the introduction of \cite{HaLaTe14} for the case $\theta\equiv 0$ and yields the corresponding expression with $\partial_x$ instead of $\partial_x+\theta L$. Now 
  the rotation by the angle $\theta(x)$ in the $y$-plane is implemented on $L^2(\R^3)$ by the operator $R(\theta(x))=\E^{\theta(x)  ( y_1\partial_{y_2} - y_2\partial_{y_1} )}$, such that 
  \[
  R(\theta(x))^* \,\partial_x\,R(\theta(x))= \partial_x + \theta'   L\,.
  \]
    \end{proof} 
Before stating our main result we give a list of assumptions. 
\begin{itemize}
\item[\bf A1] {\em Waveguide}: Let $\Omega_{\rm f}\subset\R^2$ be open and bounded. Let $c:\R\to \R^3$  be injective and  six times continuously differentiable  with all derivatives bounded, i.e.\ $c \in C^6_{\rm b}(\R,\R^3)$,
 and such that $\|c'(x)\|_{\R^3}\equiv 1$. To avoid overlap of different parts of the waveguide injectivity is not sufficient and we assume  that there are constants $c_1,c_2>0$ such that
\[
\|c(x_1) - c(x_2)\|_{\R^3} \geq \min\{ c_1|x_1-x_2|, c_2\}\,.
\]
Finally let $\theta:\R\to\R$  satisfy  $\theta\in C^5_{\rm b}(\R)$.
  \item[\bf A2]  {\em Interaction}: Let the interaction potential  $w$ be a non-negative, radially symmetric function such that 
  $w(r) = \tilde w(|r|^2)$ for a function $\tilde w\in C^2(\R)$ with support in $(-1,1)$.\\[1mm]
If the waveguide is straight and untwisted, i.e.\ if $f= T_\theta= {\rm id}$,  then we only assume that
   $w$ is a non-negative  function in $L^2(\R^3; \D^3 r)\cap L^1(\R^3; (1+|r|)\, \D^3 r)$.

  \item[\bf A3] {\em External potentials}: Let the external single particle potential $V:\R\times\R^3\to \R$ for each fixed $t\in\R$ be bounded and four times  continuously differentiable with bounded derivatives,  $ V(t,\cdot) \in C^4_{\rm b}(\R^3)$. Moreover assume that the map  $\R\to L^\infty(\Omega)$, $t\mapsto V(t,\cdot)$ is differentiable and that $\dot V(t,\cdot)\in C_{\rm b}^1(\R^3)$.
  
Let the confining potential $V^\perp:\Omega_{\rm f}\to \R$ be relatively bounded with respect to the Dirichlet Laplacian on $\Omega_{\rm f}$ with relative bound smaller than one.
  \end{itemize}

\begin{rem}
\begin{enumerate}[(a)]
\item
Note that for geometrically non-trivial waveguides we will have  to Taylor expand the interaction~$w$ up to second order, hence condition {\bf A2}. Otherwise the much weaker condition formulated for straight and untwisted waveguides suffices. 
Note also that any radially symmetric function can be written uniquely in the form $w(r) = \tilde w(|r|^2)$ and that  the regularity we need for the Taylor expansion is most conveniently formulated in terms of $\tilde w$.
\item 
The high regularity requirements for the wave\-guide in {\bf A1} and the external potential in {\bf A3} are only needed  to ensure the existence of global solutions of the NLS equation \eqref{equ:grosspqwgintro} that remain bounded in~$H^2(\R)$.  
\end{enumerate}
\end{rem}

Let $\psi^{N,\epsi}(t)$ be the solution to the time-dependent $N$-particle Schr\"odinger equation 
\begin{align}\label{equ:schrodinger}
\im \tfrac{\D}{\D t}  \psi^{N,\epsi} (t)= H (t) \,\psi^{N,\epsi} (t)
\end{align}
with the Hamiltonian $H(t)$ defined  in \eqref{equ:hamqwg} and $\psi^{N,\epsi} (0)\in D(H(t)) \equiv H^2(\Omega^N) \cap H^1_0(\Omega^N)$. 

In order to study simultaneously the mean-field limit $N\to \infty$ and the limit of strong confinement $\epsi\to 0$, we consider  families of initial data $\psi^{N,\epsi} (0)$ along sequences $(N_n,\epsi_n)\to (\infty,0)$.

\begin{defn}
 For $\beta\in \left(0,\frac13\right)$ we call a sequence $(N_n,\epsi_n)$ in $\N\times (0,1]$ {\bf admissible}, if      
  \begin{equation}\label{admis}
   \lim_{n\to \infty} (N_n,\epsi_n) = (\infty,0)
  \qquad\mbox{and}\qquad    \lim_{n\to \infty}   \frac{(\epsi_n)^{\frac43}}{\mu_n} =0  \qquad \mbox{for}\qquad \mu_n := \left( \frac{\epsi_n^2}{N_n}\right)^\beta\,.
  \end{equation}
 We say that the sequence $(N_n,\epsi_n)$ is {\bf moderately confining}, if, in addition, 
  \[
  \lim_{n\to \infty}  \frac{\mu_n}{\epsi_n} = 0 \,,
  \]
  i.e.\ if the effective range $\mu$ of the interaction shrinks faster than the width $\epsi$ of the waveguide.
We say that the sequence $(N_n,\epsi_n)$ is {\bf strongly confining}, if instead 
  \[
  \lim_{n\to \infty}  \frac{\epsi_n}{\mu_n} = 0 \,,
  \]
i.e.\ if the width of the waveguide is small even on the scale of the interaction.
  \end{defn}

Note that the admissibility condition  in \eqref{admis} requires that the width $\epsi$ of the waveguide    
cannot shrink too slowly compared to the range of the interaction $\mu$. This is a technical requirement that simplifies the proof considerably. It assures that the energy gap between $E_0$ and the first excited state in the normal direction, which is of order $\frac{1}{\epsi^2}$, grows sufficiently quickly so
 that transitions into excited states in the normal direction become negligible at leading order. 
In the following we will be   concerned  almost exclusively with the case of moderate confinement, where the effective one dimensional equation is nonlinear. The analysis of the strongly confining case turns out to be  much simpler. 

Before we can formulate our  precise assumptions on the family of initial states, we need to introduce  the one-particle energies. 
For $\psi\in D(H)$   the ``renormalised energy per particle'' is
\begin{align*}
  E^{\psi} (t):= \tfrac{1}{N}\big\langle \psi ,H(t)\,\psi  \big\rangle_{L^2(\Omega^{N})} -    \tfrac{E_0}{\epsi^2} \,,
 \end{align*}
and for $\Phi\in H^2(\R)$ let the ``effective energy per particle'' be
\begin{align}\label{equ:enggross2}
  E^{\Phi} (t):&= \Big\langle \Phi ,\underbrace{\left(-\tfrac{\partial^2}{\partial x^2} - \tfrac{\kappa(x)^2}{4} +   |\theta'(x)|^2 \,\|L\chi\|^2   + V(t,x,0)+  \tfrac{b}{2} |\Phi |^2 \right)}_{\displaystyle =: \mathcal{E}^\Phi (t)}  \Phi  \Big\rangle_{L^2(\R)}\,.
\end{align}
Recall that $\chi$ is the ground state wave function of $-\Delta_y + V^\perp(y)$ on $\Omega_{\rm f}$ with Dirichlet boundary conditions and $E_0$ the corresponding ground state eigenvalue. As with $L^2_+(\tube_\epsi)$, we also denote   the symmetric subspace of $L^2(\Omega^N)$ by  $ L^2_+(\Omega^N):=\bigotimes^N_{\rm sym} L^2(\Omega)$.

\begin{itemize}

  \item[\bf A4] {\em Initial data}:
   Let the family of initial data   $\psi^{N,\epsi}(0)\in D(H)\cap L^2_+(\Omega^N)$, $\|\psi^{N,\epsi}(0)\|^2=1$,   be close to a condensate with single particle wave function $\varphi_0 = \Phi_0\chi$ for some $\Phi_0\in H^2(\R)$   in the following sense: for some admissible sequence $(N,\epsi)\to (\infty,0)$   it holds that 
  \begin{align*}
   \lim_{(N,\epsi)\to (\infty,0)} \Tr_{L^2(\Omega)}  \big |\gamma^{N,\epsi}(0)-|\varphi_0 \rangle \langle \varphi_0|  \big |=0\,,
  \end{align*}
  where $\gamma^{N,\epsi}(0)$ is the one particle density matrix of $\psi^{N,\epsi}(0)$, cf.\ \eqref{redendef}. In addition we assume that also  the energy per particle converges,
  \begin{align*}
   \lim_{(N,\epsi)\to (\infty,0) }|E^{\psi^{N,\epsi}(0)} (0)-E^{\Phi_0} (0)|=0\,.
  \end{align*}

\end{itemize}

Finally,  let $\Phi(t)$ be the corresponding solution of the effective nonlinear Schr\"odinger equation
\begin{align}\label{equ:grosspqwg}
 \im \partial_t \Phi(t) =  \underbrace{\left(-\tfrac{\partial^2}{\partial x^2} - \tfrac{\kappa(x)^2}{4} + |\theta'(x)|^2 \,\|L\chi\|^2 + V(t,x,0)+ b |\Phi(t)|^2\right)}_{\displaystyle =: \,h^\Phi(t)} \,\Phi(t) \qquad \mbox{with } \;\Phi(0)=\Phi_0, 
\end{align}
 where   
\begin{equation*}\label{bdef}
b:= \left\{ \begin{array}{cl}
 \int_{\Omega_{\rm f}} |\chi(y)|^4 \, \D^2 y \cdot \int_{\R^3} w(r)\, \D^3 r &
 \mbox{ in the case of moderate confinement,}\\[1mm]
 0&  \mbox{ in the case of strong confinement.}
\end{array}\right.
\end{equation*}

The unique existence and properties of solutions to \eqref{equ:schrodinger} and \eqref{equ:grosspqwg} are well known  and briefly discussed in Appendix\,\ref{app:regsol}.

\begin{thm}\label{thm:thm1} Let the waveguide satisfy assumption {\rm \bf A1} and
let the potentials satisfy assumptions {\rm \bf A2} and {\rm \bf A3}. For  $\beta\in \left(0,\frac13\right)$ let $\psi^{N,\epsi}(0)$ be a family of initial data satisfying {\rm \bf A4}.
Let $\psi^{N,\epsi}(t)$ be the solution of the $N$-particle Schr\"odinger equation \eqref{equ:schrodinger} with initial datum $\psi^{N,\epsi}(0)$ and $\gamma^{N,\epsi}_M(t)$ its $M$-particle reduced density matrix. Let 
 $\Phi(t)$ be the solution of the effective  equation \eqref{equ:grosspqwg} with initial datum   $\Phi_0$. 
Then   for any $t\in \R$  and any $M\in\N$
\[
   \lim_{(N,\epsi)\to (\infty,0)} \Tr \Big |\gamma^{N,\epsi}_M(t)-|\Phi(t) \chi  \rangle \langle \Phi(t) \chi |^{\otimes M}  \Big |=0\,,
\]
and
\[
 \lim_{(N,\epsi)\to (\infty,0)} \left|E^{\psi^{N,\epsi}(t)}(t)-E^{\Phi(t)}(t)\right|=0
\]
where the limits are taken along the sequence  from {\rm \bf A4}. 
 \end{thm}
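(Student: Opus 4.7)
The plan is to adapt Pickl's counting-functional method to the waveguide setting. Let $\ph(t) := \Phi(t)\chi$ be the comparison one-particle state, write $p^{\ph(t)} := |\ph(t)\rangle\langle\ph(t)|$, $q^{\ph(t)} := \id - p^{\ph(t)}$, and let $p_i^{\ph(t)},\,q_i^{\ph(t)}$ denote the corresponding projectors acting on the $i$-th tensor factor of $L^2_+(\Omega^N)$. Introduce the functional
\begin{equation*}
\alpha(t) := \Bigl\langle \psi^{N,\epsi}(t),\, \tfrac{1}{N}{\textstyle \sum_{i=1}^N} q_i^{\ph(t)}\,\psi^{N,\epsi}(t)\Bigr\rangle + \bigl|E^{\psi^{N,\epsi}(t)}(t) - E^{\Phi(t)}(t)\bigr|.
\end{equation*}
Assumption \textbf{A4} gives $\alpha(0)\to 0$ along the admissible sequence. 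The strategy is to prove a Gr\"onwall inequality $\dot\alpha(t) \le C(\|\Phi(t)\|_{H^2})\,\alpha(t) + o_{(N,\epsi)\to(\infty,0)}(1)$; since the effective NLS \eqref{equ:grosspqwg} has a global $H^2$-bounded solution under \textbf{A1} and \textbf{A3}, this yields $\alpha(t)\to 0$ for all $t\in\R$. Trace-norm convergence of $\ga{1}$ then follows from the usual equivalence between smallness of $\langle\psi,q_1^{\ph(t)}\psi\rangle$ and $\Tr|\ga{1}-|\ph(t)\rangle\langle\ph(t)||$, and the extension to all $\ga{M}$ is a standard combinatorial step.

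Differentiating $\alpha(t)$ using \eqref{equ:schrodinger} and \eqref{equ:grosspqwg} and inserting Lemma~\ref{LapKoord} splits the contributions into four types. (a) The one-body kinetic part $-(\partial_x+\theta'L)^2 - \tfrac{1}{\epsi^2}\Delta_y - V_{\rm bend} - \epsi S^\epsi$ combined with $V(t,f_\epsi(r))$: Taylor expanding in $\epsi y$ produces, against the condensate in the $y$-direction, exactly the effective potential $-\kappa^2/4 + |\theta'|^2\|L\chi\|^2 + V(t,x,0)$, with remainders of order $\epsi$ (and order $\epsi^2/\mu$ where admissibility applies). (b) The transverse term $\tfrac{1}{\epsi^2}(-\Delta_y + V^\perp)$ has spectral gap of order $\epsi^{-2}$ above $E_0/\epsi^2$; subtracting $E_0/\epsi^2$ per particle compensates the large constant and the gap overwhelms the transition amplitude out of $\chi$, provided the admissibility condition $\epsi^{4/3}/\mu\to 0$ holds. (c) The pair interaction contributes via Pickl's $p+q$-combinatorial expansion three families: a dominant mean-field term that, after the coordinate reductions of step~(d) below, cancels against $b|\Phi(t)|^2$ in the effective Hamiltonian; $p_iq_j$ cross-terms that are the essential Pickl bound; and $q_iq_j$ terms majorised by $\alpha(t)$. (d) The energy defect's time derivative involves only $\partial_t V$ and the identical error types already generated by (a)--(c).

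The principal obstacle is the interaction step. Unlike standard mean-field $\R^3$, the argument of $w$ is $\bigl(f_\epsi(r_i)-f_\epsi(r_j)\bigr)/\mu$, which is not a function of $r_i-r_j$ alone. Taylor expansion  gives $f_\epsi(r_i)-f_\epsi(r_j) = (x_i-x_j)\tau(x_i) + \epsi\bigl(T_{\theta(x_i)}y_i - T_{\theta(x_i)}y_j\bigr) + O(|r_i-r_j|^2)$. In the moderately confining regime $\mu\ll\epsi$, the rescaled potential $\mu^{-3}w(\cdot/\mu)$ concentrates at the scale $\mu$ of the tangential separation, and convolving against $|\chi(y)|^2|\chi(y')|^2$ in the transverse coordinates produces precisely the prefactor $b = \int_{\Omega_{\rm f}}|\chi|^4\,\D^2 y \cdot \int_{\R^3}w\,\D^3 r$. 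In the strongly confining regime $\epsi\ll\mu$, the extra transverse averaging over a region of scale $\epsi\ll\mu$ reduces the coupling to an effectively constant background absorbed by the renormalisation, giving $b=0$. In both cases the key manipulation is smoothing $\mu^{-3}w(\cdot/\mu)$ by the transverse density $|\chi|^2\otimes|\chi|^2$ and controlling the remainder via $\|q^{\ph}\nabla_x\,\cdot\|$-type bounds; and the latter are exactly why the energy defect must be part of $\alpha(t)$, since kinetic norms of $q^{\ph(t)}\psi^{N,\epsi}(t)$ cannot be bounded by the counting functional alone but only via the difference of renormalised energies. Once these interaction estimates are combined with the routine but $\epsi$-careful estimates from (a)--(c), Gr\"onwall closes the argument.
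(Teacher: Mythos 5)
Your overall plan — Pickl's counting functional augmented by the energy defect, followed by a Gr\"onwall argument — is precisely the paper's strategy, and your identification of the main obstacles (Taylor expansion of $f_\epsi(r_i)-f_\epsi(r_j)$, cancellation of $p_1 w p_1$ against $b|\Phi|^2$ after transverse averaging, use of the energy defect to control $\|q_1\nabla_x\psi\|$-type norms) is correct. However, there is a genuine gap in your choice of counting functional that would prevent the Gr\"onwall loop from closing.

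You set
\[
\alpha(t) = \Bigl\langle \psi,\tfrac1N\textstyle\sum_i q_i\,\psi\Bigr\rangle + |E^\psi-E^\Phi| = \|q_1\psi\|^2 + |E^\psi-E^\Phi|,
\]
which in the notation of Definition~\ref{hutdef} is $\alpha_{n^2}+|E^\psi-E^\Phi|$ with $n(k)=\sqrt{k/N}$. The paper instead uses $\alpha_m+|E^\psi-E^\Phi|$ where $m(k)$ agrees with $n(k)$ for $k\geq N^{1-2\xi}$ and is linearised near $k=0$; see \eqref{mdef}. This is not a cosmetic choice. The kinetic-energy estimate Lemma~\ref{lem:energyestimate}, which controls $\|(\partial_{x_1}+\theta'L_1)q_1\psi\|^2$ and is essential for the integration-by-parts argument in term~III, is proved by bounding lines \eqref{grad3} and \eqref{grad5} using the manipulation $\llangle\psi,q_1 \tilde h_1 p_1p_2\psi\rrangle = \llangle\psi,q_1\widehat n^{-1/2}\tilde h_1\widehat{\tau_1 n}^{1/2}p_1p_2\psi\rrangle$, yielding the product $\|\widehat n^{-1/2}q_1\psi\|\,\|\widehat{\tau_1 n}^{1/2}\psi\| = \sqrt{\alpha_n}\,\sqrt{\alpha_{\tau_1n}}\lesssim \alpha_n + N^{-1/2}$. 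Since $n\leq m$ this is $\lesssim \alpha_m + N^{-1/2} \leq \alpha_\xi + N^{-1/2}$, producing the clean $\alpha_\xi$ factor needed for Gr\"onwall. With your functional, $\alpha_n$ is not controlled by $\alpha_{n^2}$; by Cauchy--Schwarz one only has $\alpha_n\leq\sqrt{\alpha_{n^2}}$, so the same line gives $\sqrt{\alpha(t)} + N^{-1/2}$. Propagated through Lemma~\ref{lem:energyestimate} into term~III this yields a right-hand side of order $\alpha(t)^{3/4}$ in the differential inequality. A bound of the form $\dot\alpha\lesssim C\,\alpha^{3/4}+o(1)$ does not force $\alpha(t)\to 0$ when $\alpha(0)\to 0$ unless the (time- and $\Phi$-dependent) constant $C$ also vanishes, which it does not. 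So the Gr\"onwall argument does not close with the $n^2$-weight. The modified weight $m$ is precisely what restores $\alpha_n\leq\alpha_m$ while retaining the bound $m_\ell(k)\lesssim N^\xi$ from Lemma~\ref{lem:qs&N}\,(a) needed elsewhere (e.g.\ in \eqref{I2term}), and the pure $n$-weight cannot be used either since $N\bigl(n(1)-n(0)\bigr)=\sqrt N$.

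Two smaller points. First, your Taylor expansion $f_\epsi(r_i)-f_\epsi(r_j)=(x_i-x_j)\tau(x_i)+\epsi(T_{\theta(x_i)}y_i-T_{\theta(x_i)}y_j)+O(|r_i-r_j|^2)$ evaluates the twisting matrix at a single base point $x_i$; a careful expansion (cf.\ Lemma~\ref{R2Lemma}) produces a mixed error of order $\epsi\mu$ from the $x$-dependence of $T_{\theta(\cdot)}$ and the frame, captured in the paper through the bound $\overline R\lesssim\epsi+\mu$ on the quantity $R=\widetilde R/\mu^2$ — not just $O(|r_i-r_j|^2)$. Second, in the strongly confining regime $b=0$ arises because the interaction term is uniformly small (e.g.\ $\|w^{\epsi,\beta,N}_{12}p_1\|\lesssim\sqrt\mu\,\|\Phi\|_{H^2}$ by Corollary~\ref{wcor}) and simply vanishes in the limit; it is not a background absorbed by renormalisation — the renormalisation $-E_0/\epsi^2$ is purely the transverse ground-state energy.
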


\begin{rem}
 \begin{enumerate}[(a)]
     
    \item In Assumption {\bf A4} we assume that the initial state is close to a complete Bose-Einstein condensate. 
     To show that the ground state  of a bose gas is actually of this form is in itself an important and difficult problem. For a straight wave guide and the case $\beta=1$ this was shown in \cite{LieSeiYng03}, see also   \cite{LieSeiSolYng05} for a detailed review and \cite{SchYng06}.           The analysis of ground states in geometrically non-trivial wave guides is, as far as we know, an open problem.
   For  the latest  results for  $\beta \in (0,1) $, but without strong confinement, we refer to \cite{LewNamRou14}.

 \item The assumption in {\bf A2} that the interaction   $w$ is non-negative seems to be crucial to our proof, although it is used only once in the proof of the energy estimate of Lemma~\ref{lem:energyestimate}. The results of \cite{CheHol14} suggest, however, that also our result should hold for interactions with a certain negative part.
 
    \item The negative part $-\kappa(x)^2/4$ of the geometric potential  stemming from the curvature $\kappa(x)$ of the curve is often called the bending potential, while the positive part $|\theta'(x)|^2\|L\chi\|^2$ is called the twisting potential. Both appear in exactly the same form also for non-interacting particles in a waveguide, as they originate just from the transformation of the Laplacian in Lemma~\ref{LapKoord}. See also \cite{Kre08} for a review in the one-particle case.
    
\item
 One could also consider a waveguide with a cross-section that varies along the curve, e.g.\ having constrictions or thickenings. But then $E_0=E_0(x)$ would be a function of $x$ and an effective potential of size $\frac{E_0(x)}{\epsi^2}$ would appear in the effective equation. As a consequence also the kinetic energy in the $x$-direction would be of order $\frac{1}{\epsi^2}$, i.e.\ $\|\Phi\|_{H_1(\R)}^2 = \Or(\frac{1}{\epsi^2})$. It is conceivable that a similar result to Theorem~\ref{thm:thm1} holds also in this setting of large tangential energies. However, this is a much more difficult problem, since transitions into excited normal modes will be energetically possible. Using adiabatic theory, this problem is treated in the single-particle case in   \cite{WT,LT,HaLaTe14}.
  \item
 Another interesting modification of the setup is the confinement only by potentials, without the Dirichlet boundary. Also this would  introduce additional technical complications, since in this case the map $f$ is no longer a global diffeomorphism and has to be cut off, c.f.\ \cite{WT}.  
 \item Let us breifly comment on the main differences of our result  compared to the   work  of Chen and Holmer \cite{CheHol14}. While our focus is on geometrically non-trivial wave guides, the  authors of \cite{CheHol14} consider   confinement by a harmonic potential of constant shape to a straight line.   However, their main focus are attractive pair interactions, more precisely pair potentials with $\int_{\R^3} w(r)\D r \leq 0$,   a situation which is excluded in our result. On the other hand, at least in the case of a straight wave guide, our approach needs much less regularity for $w$ and can incorporate external time-dependent potentials. Finally, our proof yields also convergence rates, which, as far as we understand, is not the case for \cite{CheHol14}. As explained below, we refrain from stating these rates because they are quite complicated  and most likely far from optimal.
 
  \item In \cite{LieSeiYng03} the authors exhibit five different scaling regimes with different effective energy functionals for the ground state energy. Note that a direct comparison with our two regimes is not sensible for two reasons:
   First we assume $\beta\in(0,\frac13)$ while in \cite{LieSeiYng03} the Gross-Pitaevskii scaling $\beta=1$ is considered. As a consequence, in \cite{LieSeiYng03} the scattering length, i.e.\ the range of the interaction $w$, is always small compared to the small diameter $\epsi$ of the wave guide. The siutation $\epsi/\mu\to 0$ (what we called strong confinement) does not occur for $\beta =1$.
 Secondly, \cite{LieSeiYng03} is specifically concerned with the ground state energy, where some terms in the energy functional can become negligible or can take a specific form depending on details of the ground state.

    \end{enumerate}

\end{rem}

\subsection*{Acknowledgements}
We thank  Steffen Gilg, Stefan Haag, Christian Hainzl, Jonas Lampart, S\"oren Petrat, Peter Pickl,   Guido Schneider, and Christof Sparber  for helpful discussions. The support by the German Science Foundation (DFG) within the GRK 1838 ``Spectral theory and dynamics of quantum systems'' is gratefully acknowledged.
\subsection*{Ethical Statement}
Funding: This work was funded by the German Science Foundation (DFG) within the GRK 1838. Conflict of Interest: The authors declare that they have no conflict of interest.

\section{Structure of the proof and the main argument}

In the proof we will not directly control the difference $\Tr \big |\gamma^{N,\epsi}_M(t)-|\varphi(t)  \rangle \langle \varphi(t) |^{\otimes M}  \big |$, but use a 
 functional $\alpha(\psi^{N,\epsi}(t),\varphi(t))$ introduced by Pickl \cite{Pic08,KnoPic09, Pic11} to  measure the ``distance'' between    $\psi^{N,\epsi} $ and $\varphi$. For this measure of distance our proof yields also rates of convergence, which could be translated into rates of convergence also for $\Tr \big |\gamma^{N,\epsi}_M(t)-|\varphi(t)  \rangle \langle \varphi(t) |^{\otimes M}  \big |$.
 However, since these rates are presumably far from optimal, we refrain from stating them explicitly.
 
  The functional $\alpha$ is constructed from the following projections in the $N$-particle Hilbert space.
 
\begin{defn}\label{def:pP}
 Let $p$ be an orthogonal projection in the one-particle space $L^2(\Omega)$.
 
For  $i \in \{1, \dots ,N\} $   define on $L^2(\Omega )^{\otimes N}$ the projection operators
    \begin{align*}
      p_i:= \underbrace{ \id \otimes \cdots \otimes \id}_{i-1 \; \mathrm{times}}  \otimes \, p \otimes \underbrace{ \id \otimes \cdots  \otimes \id}_{N-i \; \mathrm{times}}
   \qquad\mbox{ and } \qquad
      q_i:=\id - p_i\,.
    \end{align*}

   For   $0 \leq k \leq N$ let
   \begin{align*}
   P_{k} :=  \Big( q_1\cdots q_k p_{k+1} \cdots p_N \Big)_{\mrm{sym}}: =  
   \sum_{\substack{J\subset\{1,\ldots, N\}\\ |J|=k}} \,\prod_{j\in J} q_j \prod_{j\notin J} p_j\,.
    \end{align*}
  For $k<0 $ and $k>N$ we set $P_{k } =0$.

\end{defn}

We will use the many-body projections $P_k$ exclusively for $p  =   |\varphi\rangle\langle\varphi|$, the orthogonal projection onto the subspace spanned by the condensate state   $\varphi \in L^2(\Omega) $ with $\norm{\varphi}_{L^2(\Omega)}=1$. However,  a number of simple algebraic  relations, like
  \begin{equation}\label{Prel}
   \sum_{k=0}^N P_{k } = \id\,, \qquad
    \sum_{i=1}^N q_i P_{k } = k P_{k } \,,
  \end{equation}
  hold independently of the special choice for $p$ and will turn out very useful in the analysis of the mean field limit.
     The first identity in \eqref{Prel} follows from the fact that $q_i+p_i=\id$.
  For the second identity note that together with the first identity we have
    \[
     \sum_{i=1}^N q_i =  \sum_{i=1}^N q_i \sum_{k'=0}^N P_{k' } = \sum_{k'=0}^N  \sum_{j=1}^N  q_i P_{k' }=  \sum_{k'=0}^N  k'  P_{k' }\,.
    \]
    Projecting with $P_k$ yields the second identity, since $P_kP_{k'} = \delta_{k,k'} P_k$.

 \begin{defn}\label{hutdef}
For any function $f:\N_0\to \R$  define the bounded linear operator
\[
\widehat f :L^2(\Omega^N)\to L^2(\Omega^N)\,,\quad \psi\mapsto \widehat f \psi:=
 \sum_{k=0}^N f(k)  P_{k }   \psi 
\]
and the functional $\alpha_f: L^2(\Omega^N)\times L^2(\Omega)\to \R $   
\[
 \alpha_{f}\big(\psi,\varphi\big):=    \left\langle \psi ,\widehat f\,   \psi  \right\rangle_{L^2(\Omega^{N})}  = \sum_{k=0}^N   f(k) \,\left\langle \psi ,  P_{k }\,   \psi  \right\rangle_{L^2(\Omega^{N})} \,.
\]
\end{defn}

The heuristic idea behind this definition is the following. The operator  $P_k$ projects onto the subspace of $L^2(\Omega^N)$ of those states, where exactly $k$ out of the $N$ particles are not     condensed into $\ph$.
Components of $\psi\in L^2(\Omega^N)$  with $k$ particles outside the condensate are weighted by $f(k)$ in   $\alpha_f(\psi, \ph)$. In order to obtain a useful measure of distance between $\psi$ and the condensate $\ph^{\otimes N}$, the function $f$ should thus be   increasing   and $f(0)$ should be (close to) zero.
For  $n(k) := \sqrt{k/N}$ it is easily seen  that the   functional $ \alpha_{n^2} $ is   a good measure for condensation:  Using the shorthand 
\[
  \llangle \cdot,\cdot \rrangle := \langle \cdot , \cdot \rangle_{L^2(\Omega^N)}\,,
\]
we find for any symmetric $\psi\in L^2(\Omega^N)$ 
\begin{eqnarray}
\alpha_{n^2}(\psi,\varphi)  &= &\sum_{k=0}^N  \frac{k}{N}  \left\llangle \psi ,P_{k }   \psi  \right\rrangle   \;\stackrel{\eqref{Prel}}{=}\;  \sum_{k=0}^N \sum_{i=1}^N  \frac{1}{N}  \left\llangle \psi ,q_iP_{k }   \psi  \right\rrangle\nonumber\\& \stackrel{\rm symmetry}{=}&
\sum_{k=0}^N    \left\llangle \psi ,q_1P_{k }   \psi  \right\rrangle \;\stackrel{\eqref{Prel}}{=}\;
\llangle  \psi , q_1   \psi  \rrangle 
=\|q_1\psi\|^2\,.\label{n2comp}
\end{eqnarray}
And in general we have the  following equivalences.
\begin{lem}\label{lem:equi}
 Let $\psi^N \in L^2_+(\Omega^N)$ be a sequence of normalised $N$-particle wave functions and let $\gamma^M_N$ be the sequence of corresponding $M$-particle density matrices for some fixed $M\in\N$. Let   $\varphi \in L^2(\Omega)$ be normalised. Then the following statements are equivalent:
\begin{enumerate}[(i)]
 \item  $\lim_{N\rightarrow \infty}  \alpha_{n^a}(\psi^N,\varphi)=0 $  for some $ a>0 $
 \item  $\lim_{N\rightarrow \infty}  \alpha_{n^a}(\psi^N,\varphi)=0 $  for any $ a>0 $

    \item     $ \lim_{N\rightarrow \infty} \left\| \gamma^N_M - | \varphi  \rangle \langle\varphi |^{\otimes M}\right\| = 0$    for all $M\in \N$    
\item   $ \lim_{N\rightarrow \infty} {\rm Tr} \left| \gamma^N_M - | \varphi  \rangle \langle\varphi |^{\otimes M}\right|=0$  for all $M\in \N$    

\item $ \lim_{N\rightarrow \infty} {\rm Tr} \left| \gamma^N_1 - | \varphi  \rangle \langle\varphi | \right|=0$ 

\end{enumerate}
\end{lem}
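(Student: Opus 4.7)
The plan is to use $\alpha_{n^2}$ as the common pivot. Identity \eqref{n2comp} already gives, for symmetric $\psi$, $\alpha_{n^2}(\psi,\varphi) = \|q_1\psi\|^2 = 1 - \langle\varphi,\gamma_1^N\varphi\rangle$. I prove (ii) $\Rightarrow$ (i), (i)$_{a=2}$ $\Leftrightarrow$ (v), (v) $\Rightarrow$ (iv) $\Rightarrow$ (iii) $\Rightarrow$ (v), and (v) $\Rightarrow$ (i) $\Rightarrow$ (ii); the only substantial step is (v) $\Rightarrow$ (iv).

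For the equivalence (i) $\Leftrightarrow$ (ii), I write $\alpha_{n^a}(\psi,\varphi) = \sum_{k=0}^N (k/N)^{a/2} \langle\psi,P_k\psi\rangle$. Since $k/N\in[0,1]$, $a\leq a'$ yields $\alpha_{n^{a'}}\leq \alpha_{n^a}$ pointwise. For the reverse direction I split the sum at a threshold $\eta N$: the low block contributes at most $\eta^{a'/2}$, while on $k\geq \eta N$ the bound $1\leq (k/(\eta N))^{a/2}$ gives $\sum_{k\geq \eta N}\langle\psi,P_k\psi\rangle \leq \eta^{-a/2}\alpha_{n^a}$. Hence $\alpha_{n^{a'}} \leq \eta^{a'/2} + \eta^{-a/2}\alpha_{n^a}$, and sending $N\to\infty$ and then $\eta\to 0$ completes the argument.

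For (i)$_{a=2}$ $\Leftrightarrow$ (v), set $q := \id - p$ with $p = |\varphi\rangle\langle\varphi|$, so that \eqref{n2comp} reads $\alpha_{n^2}(\psi^N,\varphi) = \Tr(q\gamma_1^N)$. Decomposing $\gamma_1^N - p = (p\gamma_1^N p - p) + p\gamma_1^N q + q\gamma_1^N p + q\gamma_1^N q$ and using $\|p\gamma_1^N p - p\|_{\Tr} = \Tr(q\gamma_1^N) = \|q\gamma_1^N q\|_{\Tr}$ together with the Cauchy--Schwarz estimate $\|p\gamma_1^N q\|_{\Tr} \leq \sqrt{\Tr(p\gamma_1^N p)\,\Tr(q\gamma_1^N q)} \leq \sqrt{\Tr(q\gamma_1^N)}$, one obtains $\Tr|\gamma_1^N - p| \leq 4\sqrt{\Tr(q\gamma_1^N)}$. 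The converse $\Tr(q\gamma_1^N) = -\Tr(q(\gamma_1^N - p)) \leq \Tr|\gamma_1^N - p|$ is immediate.

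The core step (v) $\Rightarrow$ (iv) rests on the telescoping identity $\id - p_1\cdots p_M = \sum_{j=1}^M p_1\cdots p_{j-1}q_j$, which together with symmetry of $\psi^N$ and $\|p_1\cdots p_{j-1} q_j\psi^N\|\leq \|q_j\psi^N\| = \|q_1\psi^N\|$ yields $\|(\id - p_1\cdots p_M)\psi^N\| \leq M\|q_1\psi^N\| = M\sqrt{\alpha_{n^2}(\psi^N,\varphi)}$. Because $p_1\cdots p_M$ is an orthogonal projection, Pythagoras gives $\langle \varphi^{\otimes M},\gamma_M^N\varphi^{\otimes M}\rangle = \|p_1\cdots p_M\psi^N\|^2 \geq 1 - M^2\alpha_{n^2}(\psi^N,\varphi)$; applying the trace-norm bound from the previous paragraph to $\gamma_M^N$ with the rank-one projection $|\varphi\rangle\langle\varphi|^{\otimes M}$ on $L^2(\Omega^M)$ yields (iv). The remaining (iv) $\Rightarrow$ (iii) is immediate via $\|A\|_{\mathrm{op}}\leq \Tr|A|$, and (iii) $\Rightarrow$ (v) follows (at $M=1$) from $|\langle\varphi,(\gamma_1^N - p)\varphi\rangle| \leq \|\gamma_1^N - p\|_{\mathrm{op}}$, which forces $\alpha_{n^2}(\psi^N,\varphi)\to 0$ and hence (v) by the preceding equivalence. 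The only real obstacle is the inclusion-exclusion step that propagates condensation from $M=1$ to general $M$; everything else is elementary trace-norm bookkeeping.
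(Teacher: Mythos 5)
Your proof is correct, and it reaches the same set of equivalences by a genuinely different route. You run the cycle in the direction $(v)\Rightarrow(iv)\Rightarrow(iii)\Rightarrow(v)$, proving trace-norm convergence directly from $\alpha_{n^2}\to 0$ by decomposing $\gamma_M^N - p^{\otimes M}$ into $2\times 2$ blocks with respect to the rank-one projection $P=|\varphi^{\otimes M}\rangle\langle\varphi^{\otimes M}|$ and estimating the off-diagonal blocks via $\|P\gamma Q\|_1\leq\|P\gamma^{1/2}\|_2\|\gamma^{1/2}Q\|_2$, combined with the telescoping identity $\id-p_1\cdots p_M=\sum_j p_1\cdots p_{j-1}q_j$ to propagate condensation to $M$ particles. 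The paper instead proves $(i)\Rightarrow(iii)$ first (operator-norm convergence, via the resolution $\gamma_M^N=\sum_{k,k'}P_k^M\gamma_M^N P_{k'}^M$), then upgrades $(iii)\Rightarrow(iv)$ with the Rodnianski--Schlein rank-one identity: since $p^{\otimes M}$ has rank one and $\Tr(\gamma_M^N-p^{\otimes M})=0$, the operator $\gamma_M^N-p^{\otimes M}$ has exactly one negative eigenvalue $\lambda_-$, whose modulus equals the sum of all positive eigenvalues, hence $\Tr|\gamma_M^N-p^{\otimes M}|=2|\lambda_-|=2\|\gamma_M^N-p^{\otimes M}\|$. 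That exact identity is the paper's key trick, and it is what makes $(iii)$ and $(iv)$ equivalent with no loss at all; your approach bypasses it but compensates by producing an explicit quantitative bound $\Tr|\gamma_M^N-p^{\otimes M}|\lesssim M\sqrt{\alpha_{n^2}}$, which is arguably more useful if one wants convergence rates. Your threshold argument for $(i)\Leftrightarrow(ii)$ (split the sum at $k=\eta N$) is also a clean alternative to the paper's iterated Cauchy--Schwarz step $\alpha_{n^b}\leq\sqrt{\alpha_{n^a}}$ for $a/2\leq b<a$. Both approaches are sound; yours is slightly longer but self-contained and quantitative, whereas the paper's is shorter thanks to the rank-one trick.
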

The proof of this lemma collects different statements somewhat scattered in the literature, c.f.\ \cite{Pic11, KnoPic09}. Since the claim is at the basis of our result and since the proof is short and simple, we give it at the end of  Subsection~\ref{sec:beta} for the convenience of the reader.

In the proof of our main theorem  we will work with the   functional  $\alpha_{m}$, where 
\begin{equation}\label{mdef}
 m (k):= \begin{cases}
         n(k) & \mathrm{for}\; k \geq N^{1-2\xi}\\
	\frac{1}{2}(N^{-1+\xi}k+N^{-\xi})& \mathrm{else} 
        \end{cases}
\end{equation}
for some  $ 0<\xi<\frac12$ to be specified below.
 Since $n(k) \leq m (k) \leq \max(n(k),N^{-\xi})$ holds for all $ k \in \N_0$,  convergence of $\alpha_{m }$ to zero is equivalent to convergence of $ \alpha_{n}$ to zero    and thus to all cases in Lemma~\ref{lem:equi}.
We will   use the shorthand
\begin{align*}
 \alpha_{m }(t):= \alpha_{m } \big(\psi^{N,\epsi}(t), \varphi(t)\big)  
\end{align*}
when we evaluate the functional $\alpha_{m }$ on the solutions to the time-dependent equations.
Finally, the quantity that we can actually control in the proof  is
\begin{equation}\label{alphadef}
 \alpha_\xi (t):= \alpha_{m }(t)+ \left|E^{\psi^{N,\epsi}(t)}(t)-E^{\varphi(t)}(t)\right|\,.
\end{equation}

We will now state two key propositions and then give the proof of Theorem~\ref{thm:thm1}. The simple strategy is to show bounds for the time-derivative of  $\alpha_\xi$ and then use  Gr\"onwall's inequality. 
With the expression from Lemma~\ref{LapKoord} for the Laplacian in the adapted coordinates we find that
\begin{eqnarray*}
H(t) &=&  \sum_{i=1}^N \left( -\left(  U_\epsi  \Delta U_{\epsi}^* \right)_{r_i}  +\frac{1}{\epsi^2}V^\perp(y_i)+ V(t, f_\epsi(r_i))\right)+ a \sum_{i < j}\frac{1}{\mu^3} \,w  \left( \frac{ f_\epsi(r_i)-f_\epsi(r_j)}{\mu}\right)\\
&=&  \sum_{i=1}^N \left( -\frac{\partial^2}{\partial x_i^2} -(\theta'(x_i)   L_i) ^2 - \frac{1}{\epsi^2} \Delta_{y_i} +\frac{1}{\epsi^2}V^\perp(y_i) + V(t, r_i^\epsi) - \frac{\kappa(x_i)^2}{4} + R^{(1)}_i\right)
\\
&&\qquad + \; \frac{1}{N-1} \sum_{i<j} w^{\epsi, \beta,N}_{ij}  
\end{eqnarray*}
with
\[
R^{(1)}_i := - \partial_{x_i}  \theta'(x_i)   L_i - \theta'(x_i)   L_i \partial_{x_i}  \;+ \;\left(V_{\rm bend}(r_i)+\frac{\kappa(x_i)^2}{4}
\right) \;-\; \epsi \,S^\epsi _i
\]
and
\[
w^{\epsi, \beta,N}_{ij}(r_1,\ldots,r_N) :=  (N-1)\frac{a}{\mu^3} \,w  \left( \frac{ f_\epsi(r_i)-f_\epsi(r_j)}{\mu}\right)  \,.
\]

\begin{prop}\label{lem:beta.g} Let the assumptions of Theorem~\ref{thm:thm1}  hold and let $\alpha_\xi(t)$ be given by \eqref{alphadef}. Then the time-derivative of $\alpha_\xi(t)$ is bounded by
 \begin{align*}
  \left|\frac{\D}{\D t} \,\alpha_\xi(t)\right| \leq    2|\mathrm{I}(t)| +  |\mathrm{II}(t)| +  2|\mathrm{III}(t)|+ |\mathrm{IV}(t)|
 \end{align*}
with
\begin{eqnarray*}
 \mathrm{I}(t)&:=&N\left\llangle \psi^{N,\epsi}(t), p_1 p_2  \left[   w^{\epsi, \beta,N}_{12}-  b|\Phi(t,x_2)|^2 ,\widehat m \right] p_1q_2 \psi^{N,\epsi}(t) \right\rrangle\\
\mathrm{II}(t)&:=& N\left\llangle \psi^{N,\epsi}(t), p_1 p_2  \left[    w^{\epsi, \beta,N}_{12}, \widehat m\right] q_1q_2 \psi^{N,\epsi}(t) \right\rrangle\\ 
\mathrm{III}(t)&:= &N\left\llangle \psi^{N,\epsi}(t), p_1 q_2  \left[    w^{\epsi, \beta,N}_{12}-   b|\Phi(t,x_1)|^2 , \widehat m\right] q_1q_2 \psi^{N,\epsi}(t) \right\rrangle\\
\mathrm{IV}(t)&:=& \left|\left\llangle \psi^{N,\epsi}(t), \dot V(t,x_1,\epsi y_1) \psi^{N,\epsi}(t) \right\rrangle - \langle \Phi, \dot V(t,x_1,0) \Phi \rangle_{L^2(\R)}\right|  \\
&&+\;    2N \left\llangle \psi^{N,\epsi}(t), p_1    \left[   V(t,x_1,\epsi y_1)-V(t,x_1,0) , \widehat m\right] q_1 \psi^{N,\epsi}(t) \right\rrangle\\
&&+\;     2N \left\llangle \psi^{N,\epsi}(t), p_1    \left[    (\theta'(x_1)L_1)^2 + |\theta'(x)|^2 \,\|L\chi\|^2 ,\widehat m \right] q_1 \psi^{N,\epsi}(t) \right\rrangle \\
&&+\;       2N \left\llangle \psi^{N,\epsi}(t), p_1    \left[  R^{(1)}_1 ,\widehat m \right] q_1 \psi^{N,\epsi}(t) \right\rrangle
\end{eqnarray*}
\end{prop}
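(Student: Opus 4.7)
My plan is to differentiate $\alpha_\xi(t)=\alpha_m(t)+|E^{\psi^{N,\epsi}(t)}(t)-E^{\varphi(t)}(t)|$ in time and then organise the result by the ``$p/q$-decomposition'' standard in Pickl's method. For $\alpha_m(t)=\llangle\psi^{N,\epsi}(t),\widehat m\,\psi^{N,\epsi}(t)\rrangle$ I use \eqref{equ:schrodinger} for the state and \eqref{equ:grosspqwg} for the projection $p=|\varphi(t)\rangle\langle\varphi(t)|$ that enters $\widehat m$. Writing the natural lift of $h^\Phi(t)$ to $L^2(\Omega)$ as $h^\varphi(t)$, with transverse action chosen to be $\epsi^{-2}(-\Delta_y+V^\perp-E_0)$ (which vanishes on $\varphi=\Phi\chi$ and therefore commutes with $p_i$ and with $\widehat m$), the NLS gives $\im\dot p_i=[h^\varphi_i,p_i]$ and hence
\[
\tfrac{\D}{\D t}\alpha_m(t)\;=\;\im\,\llangle\psi^{N,\epsi}(t),\bigl[H(t)-\textstyle\sum_{i=1}^N h^\varphi_i,\widehat m\,\bigr]\psi^{N,\epsi}(t)\rrangle.
\]

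I would then expand $H-\sum_i h^\varphi_i$ using Lemma~\ref{LapKoord} and the explicit form of $h^\varphi$. The terms $-\partial_{x_i}^2$, $-\kappa(x_i)^2/4$ and $V(t,x_i,0)$ cancel, and the transverse confinement commutes with $\widehat m$ by the choice of $h^\varphi$. What remains are the one-body remainders $(\theta'(x_i)L_i)^2-|\theta'(x_i)|^2\|L\chi\|^2$, $V(t,r_i^\epsi)-V(t,x_i,0)$ and $R^{(1)}_i$, together with the two-body difference $\tfrac{1}{N-1}\sum_{i<j}w^{\epsi,\beta,N}_{ij}-\sum_i b|\Phi(t,x_i)|^2$. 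The decisive algebraic observation is that each $p_i$ commutes with $\widehat m$, as is immediate from Definition~\ref{def:pP}. For any one-body operator $A_1$ this forces $p_1[A_1,\widehat m\,]p_1=[\langle\varphi,A\varphi\rangle\, p_1,\widehat m\,]=0$ and $q_1[A_1,\widehat m\,]q_1=[q_1 A_1 q_1,\widehat m\,]=0$, the latter because $q_1 A_1 q_1$ preserves the $p/q$-type of particle~$1$ and $\widehat m$ is diagonal in these types. Inserting $\id=p_1+q_1$ on both sides, permutation symmetry and Hermitian conjugation therefore convert each one-body remainder into the bound $2N\,|\llangle\psi,p_1[\,\cdot\,,\widehat m\,]q_1\psi\rrangle|$, which produces the last three lines of IV.

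For the two-body term I would analogously insert $\id_{12}=p_1p_2+p_1q_2+q_1p_2+q_1q_2$ on both sides of $\tfrac{N}{2}\llangle\psi,[w^{\epsi,\beta,N}_{12}-b|\Phi(x_1)|^2-b|\Phi(x_2)|^2,\widehat m\,]\psi\rrangle$, the prefactor $N/2$ coming from $\tfrac{1}{N-1}\sum_{i<j}$ and the symmetric rewriting of $\sum_i b|\Phi(x_i)|^2$. The four type-preserving sandwiches $\sigma[\cdot,\widehat m\,]\sigma$ vanish for the same reason as before, and so do the two ``swap'' sandwiches $p_1q_2\,[\cdot,\widehat m\,]\,q_1p_2$ and $q_1p_2\,[\cdot,\widehat m\,]\,p_1q_2$, which preserve the total $p/q$-count $k$. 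The ten remaining sandwiches group into three classes according to how $k$ changes ($0\to 1$, $0\to 2$, $1\to 2$); within each class $1\leftrightarrow 2$ symmetry and Hermitian conjugation reduce the contribution to a single representative, giving I, II and III respectively. In I the subtraction $b|\Phi(x_1)|^2$ drops out because $b|\Phi(x_1)|^2$ acts only on particle $1$ and $p_2 q_2=0$; symmetrically in III the subtraction $b|\Phi(x_2)|^2$ drops out via $p_1 q_1=0$; in II both subtractions vanish identically in the $p_1p_2[\cdot,\widehat m\,]q_1q_2$ sandwich, so no subtraction appears. Finally, differentiating $E^{\psi^{N,\epsi}(t)}(t)-E^{\varphi(t)}(t)$ and using that both $\llangle\psi,H(t)\psi\rrangle$ and $\langle\Phi,\mathcal{E}^\Phi(t)\Phi\rangle$ are conserved up to their explicit time-dependence produces the $\dot V$ difference which is the first line of IV.

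The main obstacle is the combinatorial bookkeeping of the two-body expansion: one starts with sixteen sandwiches and must use symmetries and Hermitian conjugation carefully to reduce them to exactly the three expressions I, II, III with their correct prefactors and mean-field subtractions. The commutation $[p_i,\widehat m\,]=0$ is elementary but absolutely central, and checking that the chosen lift $h^\varphi$ correctly neutralises the strongly confining transverse energy $\epsi^{-2}(-\Delta_y+V^\perp-E_0)$ -- which would otherwise produce a commutator of order $\epsi^{-2}$ -- is the technically most delicate point.
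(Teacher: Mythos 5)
Your proposal is correct and follows essentially the same route as the paper: differentiate $\alpha_m$ via $[H-H^\Phi,\widehat m]$, handle the energy difference via $\dot V$, insert $p+q$ on both sides, kill the type-preserving sandwiches with the shift lemma (Lemma~\ref{lem:weights}(c)), and use permutation symmetry plus conjugation to collapse the remaining sandwiches to I--IV with the correct mean-field subtractions. The only cosmetic difference is how you neutralise the transverse confinement: you absorb $\epsi^{-2}(-\Delta_y+V^\perp-E_0)$ into the lift $h^\varphi$ so that it drops out of $H-\sum_i h^\varphi_i$, whereas the paper keeps $h^\Phi$ purely one-dimensional and invokes Lemma~\ref{hat.}(b) separately to show $[-\Delta_{y_i}+V^\perp(y_i),\widehat m\,]=0$; both rest on the same fact (that $\chi$ is a normal-mode eigenfunction, hence $\varphi=\Phi\chi$ is an eigenfunction of the transverse operator), so this is not actually a delicate point as you suggest at the end -- it is a one-line consequence of self-adjointness and the eigenvalue equation.
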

The three terms I--III contain the two-body interaction and are delicate to bound because of the factor $N$ in front.
Very roughly speaking, Term I is small because in between the projection $p_1$ onto the state $\ph$ in the first variable the full interaction and the mean-field interaction cancel each other at leading order.  In Term~II and Term~III the full interaction $w^{\epsi, \beta,N}_{12}$ acting on the range of $q_1q_2$ becomes singular as $(N,\epsi)\to(\infty,0)$, but both   can still be bounded in terms of $\alpha_\xi$, however, with considerable effort. The one-particle contributions in Term~IV are rather easy to handle, as all potentials appearing remain bounded also on the range of $q_1$. However, the first line of IV is only small if $\psi$ is close to the condensate.
 
 In the following estimates we use 
the function $g(t)>0$   given in terms of  the a priori bound on the energy per particle, 
\[
|E^{\psi^\epsi_N(t)}  (t) | \;\leq \;|E^{\psi^\epsi_N(0)}  (0) | +   \int_0^t \| \dot V(s,\cdot)\|_{L^\infty(\Omega)}\D s \;=:\; g^2(t)-1\,.
\]
  If the external potential is time-independent, then $g^2(t)\equiv1+|E^{\psi^\epsi_N(0)}  (0) |$.

We defer  the proof of   Proposition~\ref{lem:beta.g} and also of the following one to Section~4.
 \begin{prop}\label{lem:3termeg} For moderate confinement we have the bounds
 \[
\begin{array}{ll}
 |\mathrm{I}(t)|\lesssim g(t)\,\left\| \Phi(t)\right\|^3_{H^2(\R)} \left( \frac{\mu}{\epsi} + N^\xi \epsi + \frac{ \epsi^2}{\mu^{\frac32}}\right) \,,&
 |\mathrm{II}(t)|\lesssim   \norm{\Phi(t)}_{H^2(\R)}^2  {\alpha_\xi(t)} +   N^\xi  \frac{a}{\mu^3} \,,
 \\
  |\mathrm{III}(t)|  \lesssim  g(t)\,\left\| \Phi(t)\right\|^\frac{3}{2}_{H^2(\R)} \left( \alpha_\xi(t) + \frac{\mu}{\epsi} +\frac{a}{\mu^3}+  \frac{ \epsi^2}{\mu^{\frac32}} \right) 
\,,&
 |\mathrm{IV}(t)| \lesssim   \alpha_\xi(t)+\epsi\,\|\Phi(t)\|_{H^2(\R)} + g(t) N^\xi \epsi\,.
 \end{array}
\]
For strong confinement we have the bounds
 \[
\begin{array}{ll}
 |\mathrm{I}(t)|\lesssim \mu \left\| \Phi(t)\right\|^2_{H^2(\R)}  \,,&
 |\mathrm{II}(t)|\lesssim   (\alpha_\xi(t)+\mu)\, \left\| \Phi(t)\right\|_{H^2(\R)}  \,,
 \\[2mm]
  |\mathrm{III}(t)|  \lesssim  (\alpha_\xi(t)+\mu)\, \left\| \Phi(t)\right\|_{H^2(\R)} \,,&
 |\mathrm{IV}(t)| \lesssim   \alpha_\xi(t)+\epsi\,\|\Phi(t)\|_{H^2(\R)} + g(t) N^\xi \epsi\,.
 \end{array}
\]
\end{prop}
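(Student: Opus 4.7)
The four terms are bounded by separate arguments relying on two general tools: (a) the shift identities $q_i \widehat f = \widehat{f\circ\tau}\, q_i$ (and their two-body analogues), which move $\widehat m$ past a product of $p$'s and $q$'s at the price of a shift of the weight function; and (b) the a priori energy bound $g(t)$ together with $\|q_1\psi\|^2 = \alpha_{n^2}(t) \leq \alpha_\xi(t)$. In the moderately confining regime $\mu/\epsi\to 0$ the interaction range is much smaller than the cross-section of the tube, so the three-dimensional integral $\int_{\R^3} w$ emerges from $y$-averages against $|\chi|^2|\chi|^2$; in the strongly confining regime it does not emerge, giving $b=0$ and simpler error bounds in powers of $\mu$ alone.

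For Term $\mathrm{I}(t)$, the outer projections $p_1p_2$ freeze the first two coordinates into $\varphi(r_j) = \Phi(x_j)\chi(y_j)$. Expanding $f_\epsi(r_1)-f_\epsi(r_2)$ to leading order introduces a curvature error of order $N^\xi\epsi$; integrating the compactly supported $\mu^{-3}w(\cdot/\mu)$ against $|\chi|^2|\chi|^2$ in the $y$-variables produces $b$ up to a correction of order $\mu/\epsi$ (from truncating the ball of radius $\mu$ to the cross-section scale); and replacing $|\Phi(x_1)|^2$ by $|\Phi(x_2)|^2$ costs $\epsi^2/\mu^{3/2}$ via a Sobolev estimate on $\Phi$. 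The resulting smoothed interaction cancels the subtracted mean-field potential $b|\Phi(x_2)|^2$; the remainder is paired with $p_1q_2\psi$ by Cauchy--Schwarz, producing the prefactor $g(t)\|\Phi(t)\|_{H^2(\R)}^3$.

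Terms $\mathrm{II}$ and $\mathrm{III}$ are handled by applying the shift identity inside the commutator. In Term $\mathrm{II}$ one writes $p_1p_2[w_{12}^{\epsi,\beta,N},\widehat m]q_1q_2 = p_1p_2\, w_{12}^{\epsi,\beta,N}(\widehat m - \widehat{m\circ\tau_{-2}})q_1q_2$; on the regime $k\geq N^{1-2\xi}$ the weight difference is of order $1/N$, while on the complement it is of order $N^{-\xi}$. Combining this with the operator bound $\|w_{12}^{\epsi,\beta,N}p_1p_2\psi\|^2 \lesssim \|\Phi\|_{H^2}^2\,(a/\mu^3)\|\psi\|^2$ and Cauchy--Schwarz produces $\|\Phi\|_{H^2}^2\,\alpha_\xi(t) + N^\xi a/\mu^3$. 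Term $\mathrm{III}$ is the main obstacle: with only one $p$ on the left the cross-section integral is not available to smooth the singular two-body operator. The plan is to Cauchy--Schwarz so that one factor absorbs $w_{12}^{\epsi,\beta,N}$ as a quadratic form bounded by $\|w\|_{L^2}^2\,\|q_1\psi\|_{H^1}^2$, where the $H^1$-norm of $q_1\psi$ is controlled through the renormalised energy by $g(t)^2$ (this is the only place the energy estimate of Lemma~\ref{lem:energyestimate} is essential). The subtracted $b|\Phi(x_1)|^2$ removes the leading-order piece exactly as in Term~$\mathrm{I}$.

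Term $\mathrm{IV}(t)$ is the easiest. The $\dot V$ line is bounded by replacing $\dot V(t,x_1,\epsi y_1)$ by $\dot V(t,x_1,0)$ at cost $\epsi\,\|\Phi\|_{H^2}$ and invoking the $\alpha_\xi$-convergence of the one-particle marginal. For each remaining $p_1[A_1,\widehat m]q_1$, the shift identity recasts the commutator as $p_1 A_1 q_1(\widehat m-\widehat{m\circ\tau_{-1}})$, bounded via Cauchy--Schwarz by the operator norm of $A_1$ on the relevant range. The $V$-commutator yields an $\epsi$-factor by Taylor expansion; the $(\theta' L_1)^2$-commutator combined with $|\theta'(x_1)|^2\|L\chi\|^2$ uses the identity $\langle\chi,L^2\chi\rangle = \|L\chi\|^2$ to cancel the $p_1$-diagonal contribution, leaving only bounded off-diagonal remainders controlled by the energy; and $R^{(1)}_1$ is of order $\epsi$ directly from Lemma~\ref{LapKoord}. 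In the strong confinement case all three-dimensional integrals collapse because the cross-section is thinner than the interaction range, so $b=0$ and every error reduces to a power of $\mu$, recovering the simplified bounds stated for the second regime.
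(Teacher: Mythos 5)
Your broad outline is in the right spirit, but three of the four bounds involve genuine gaps that would prevent the argument from closing.

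\textbf{Term I.} Your attribution of the individual errors is mixed up. The paper does not expand $f_\epsi$ into a ``curvature error of order $N^\xi\epsi$''; the factor $N^\xi\epsi$ comes from the piece of $q_2$ that lives in an excited transverse mode, $q_2^\chi$, which is controlled via the energy $g(t)$ and Lemma~\ref{lem:qs&N}(b). The $\epsi^2/\mu^{3/2}$ term, which you ascribe to ``replacing $|\Phi(x_1)|^2$ by $|\Phi(x_2)|^2$ via a Sobolev estimate'', actually comes from the Taylor remainders $T_1, T_2$ of Lemma~\ref{R2Lemma}; there is no such replacement of $|\Phi(x_1)|^2$ by $|\Phi(x_2)|^2$ in Term~I (the cancellation in I is between the convolution $\int w(\cdot)$ and $b|\Phi(x_2)|^2$, both living in the $x_2$ variable). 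The decomposition $q_2 = q_2^\chi + p_2^\chi q_2^\Phi$ preceding the splitting of $w^{\epsi,\beta,N}_{12}$ into $w^0_{12} + T_1 + T_2$ is what organizes the three error sources.

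\textbf{Term II.} Your plan cannot produce the stated bound. First, your operator bound $\|w^{\epsi,\beta,N}_{12} p_1 p_2 \psi\|^2 \lesssim \|\Phi\|^2_{H^2}(a/\mu^3)\|\psi\|^2$ is off by a factor of $N$: from Corollary~\ref{wcor}, $\|w^{\epsi,\beta,N}_{12} p_1\| \lesssim (\epsi/\mu^{3/2})\|\Phi\|_{H^2}$, i.e.\ $\|w p_1 \psi\|^2 \lesssim N(a/\mu^3)\|\Phi\|^2_{H^2}$. Second, and more importantly, even with the correct bound, a direct Cauchy--Schwarz applied to $N\llangle\psi, p_1p_2\,w\,\widehat{m_2}\,q_1q_2\psi\rrangle$ gives $N\cdot(\epsi/\mu^{3/2})\|\Phi\|_{H^2}\cdot N^{\xi/2}\sqrt{\alpha_\xi}$, which has the wrong power of $N$. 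The essential ingredient you are missing is the rewriting of this expression by symmetry as $\frac{1}{N}\sum_j\llangle\cdots\rrangle$, followed by expanding the square of $\|\sum_j q_j w_{1j}\,\widehat{\tau_2m_2}^{1/2}\,p_1p_j\psi\|$ and splitting it into a ``diagonal'' part ($j=l$) and ``off-diagonal'' part ($j\neq l$). The off-diagonal part gains a factor $\alpha_\xi$ (via $\sqrt{w}\,p$-bounds and $\|q_3\psi\|^2\lesssim\alpha_\xi$), and only the $O(N)$ diagonal part produces the singular $N^\xi a/\mu^3$. Without this split the bound is unachievable.

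\textbf{Term III.} Your description (``Cauchy--Schwarz so one factor absorbs $w$ as a quadratic form bounded by $\|w\|_{L^2}^2\|q_1\psi\|_{H^1}^2$'') is too vague and, as stated, is not the route that works. The actual difficulty is the term with $p_1^\chi q_1^\Phi\,p_2^\chi q_2^\Phi$ on both sides, where neither excited-mode smallness nor a $q^\chi$-gain is available. The paper handles this by observing that between $p^\chi$ projectors the singular operator reduces to a one-dimensional potential $\overline w^0(x_1-x_2)$, then writing $\overline w^0 = \partial_{x_1}\overline W^0$ with bounded antiderivative $\overline W^0$ (an explicit manifestation of the smallness of $\|\overline w^0\|_{L^1}$), and integrating by parts. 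This transfers a derivative onto $q_1\psi$, which is where Lemma~\ref{lem:energyestimate} enters. Simply bounding the quadratic form by $\|w\|_{L^2}^2\|q_1\psi\|_{H^1}^2$ would not yield the $\alpha_\xi + \mu/\epsi + a/\mu^3 + \epsi^2/\mu^{3/2}$ structure.

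\textbf{Term IV.} This part of your plan is essentially correct in outline, though note the sign: since $L$ is anti-self-adjoint, $\langle\chi, (\theta' L)^2\chi\rangle = -|\theta'|^2\|L\chi\|^2$, which is exactly what makes the cancellation with $+|\theta'|^2\|L\chi\|^2$ work; your statement $\langle\chi, L^2\chi\rangle = \|L\chi\|^2$ has the wrong sign.

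In short: Term IV is fine, Term I has the right overall shape but misattributes the error terms, and Terms II and III each require a specific structural device (diagonal/off-diagonal split, respectively integration by parts against a bounded antiderivative) that your plan does not contain and that a naive Cauchy--Schwarz cannot replace.
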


Here and in the remainder of the paper we use the notation
$A \lesssim B$ to indicate that there exists a constant $C\in\R$ independent of all ``variable quantities'' $\epsi,N,t,\xi, \Psi^{\epsi,N}(0)$, and $ \Phi_0$ such that $A\leq CB$. Note that $C$ can depend on ``fixed quantities'' like the shape of the waveguide determined by $c,\theta, \Omega_{\rm f}$, and also on the  potentials $V,w$, $V^\perp$ and on $\beta$.

\begin{proof}[Proof of Theorem\,\ref{thm:thm1}]
Combining Propositions~\ref{lem:beta.g} and\,\ref{lem:3termeg} we obtain for the case of moderate confinement  that
\[
 \left|\frac{\D}{\D t} \,\alpha_\xi(t)\right| \leq C\, g(t) \,\left\| \Phi(t)\right\|^3_{H^2(\R)}\left( \alpha_\xi(t) + \frac{\mu}{\epsi} +  \frac{ \epsi^2}{\mu^{\frac32}} +N^\xi \epsi + N^\xi \frac{a}{\mu^3}\right)
 \]
for a constant $C<\infty$ independent of $t,\epsi,N,\beta,\xi$ and $\psi^{N,\epsi}(0)$.
Thus Gr\"onwall's lemma proves Theorem~\ref{thm:thm1} once we show that for some $\xi>0$ all terms in the bracket besides $\alpha_\xi(t)$ vanish in the limit $(N,\epsi )\to ( \infty,0)$ along any admissible and moderately confining sequence $( N,\epsi)$. This is true for ${\mu}/{\epsi}$ and  ${ \epsi^2}/{\mu^{\frac32}} = \sqrt{{ \epsi^4}/{\mu^{3}}}$
by assumption.
Since
\[
\frac{\epsi^4}{\mu^3} = \epsi^{4-6\beta} N^{3\beta} \rightarrow 0 \quad\mbox{implies}\quad \epsi   N^{ \frac{3\beta}{4-6\beta}}\rightarrow 0\,,
\]
we have that 
\[
N^\xi \epsi =  \left(N^\xi N^{-\frac{3\beta}{4-6\beta}} \right) \left(\epsi   N^{ \frac{3\beta}{4-6\beta}}\right)\to 0 \quad \mbox{for} \quad 
0<\xi \leq \frac{3\beta}{4-6\beta}
\]
and
\[
N^\xi \frac{a}{\mu^3} =N^\xi \epsi^{2-6\beta} N^{3\beta-1}=  \left( N^\xi N^{-\frac{3\beta(2-6\beta)}{4-6\beta}}\right) \left(\epsi   N^{ \frac{3\beta}{4-6\beta}}\right)^{2-6\beta} N^{3\beta-1} \to 0 \quad \mbox{for} \quad 
0<\xi \leq \frac{2-6\beta}{2-3\beta}\,.
\]
Thus in the case of moderate confinement 
\[
\lim_{( N,\epsi)\to ( \infty,0)} \alpha_\xi(t) = 0
\]
 follows by Gr\"onwall's lemma for $0<\xi\leq \min\left\{ \frac{3\beta}{4-6\beta}, \frac{2-6\beta}{2-3\beta}
\right\}$ and thus with Lemma~\ref{lem:equi} also Theorem~\ref{thm:thm1}. Analogously the statement for strong confinement follows for $0<\xi\leq \frac{3\beta}{4-6\beta}
$.
\end{proof}

\section{Proofs of the Propositions}

\subsection{Preliminaries}\label{sec:beta}

In this section we prove several lemmata that will be used repeatedly in the proofs of the propositions. The first ones are concerned with properties of the operators $\widehat f$  that are at the basis of the condensation-measures $\alpha_f$ (see Definition~\ref{hutdef}).   One should keep in mind, that they are defined with respect to some orthogonal 
projection $p $ in the one-particle space $L^2(\Omega)$. While the first lemma is purely algebraic and holds for general $p$, later on $p = |\ph\rangle\langle\ph|$ will always be the projection onto the one-dimensional subspace spanned by the condensate vector $\ph\in L^2(\Omega)$.
\begin{defn}
 For  $j \in \Z $ we define the shift operator on a function $f:\{0,\cdots, N  \} \to \R$ by
      \begin{align*}
	(\tau_j f)(k) = f(k+j),
      \end{align*}
where we set $(\tau_j f)(k)=0 $ for $k+j \notin \{0, \dots , N \} $.
\end{defn}
\begin{lem}\label{lem:weights} Let $f,g: \{0,\cdots, N  \} \rightarrow \R $, $j\in \{1,\dots, N\}$, and $k\in \{0,\dots, N\}$.
\begin{enumerate}[(a)] 
 \item \label{a} It holds that 
\begin{align*}
 \widehat{f}\,\widehat{g}=\widehat{fg}=\widehat{g}\,\widehat{f}\,, \qquad \widehat f \,p_j = p_j \widehat f\,, \qquad \widehat f \,q_j = q_j \widehat f \,, \quad \mbox{and }\quad  \widehat f P_{k }= P_{k } \widehat f\,.
\end{align*}

\item \label{c} Let $\phi,\psi \in L^2_+(\Omega^N) $  be symmetric and $n(k) = \sqrt{k/N}$, then  
\begin{align*}
\left\llangle \phi, \widehat f \,q_j \psi \right\rrangle &=\left\llangle \phi, \widehat f\, \widehat n^2 \psi \right\rrangle \,.
\end{align*}
If, in addition, $f$ is non-negative, then for $i\in \{1,\dots N\}$, $i \neq j$, it holds that 
\begin{align*}
\left\llangle \psi, \widehat f q_i q_j \psi \right\rrangle &\leq \tfrac{N}{N-1} \left\llangle \psi, \widehat f \,\widehat n^4 \psi \right\rrangle .
\end{align*}

 \item \label{lem:weightsc} Let $T:L^2(\Omega)^{\otimes N}\to L^2(\Omega)^{\otimes N}$ be a bounded operator that acts only on the factors $i$ and $j$ in the tensor product, e.g.\ the two-body potential $w_{ij}$. Then for    $Q_0 := p_i p_j$, $Q_1 \in \{ p_i q_j, q_i p_j\}$, and $ Q_2:= q_i q_j $ we have  
\begin{align*}
\widehat f \,Q_\nu T Q_\mu = Q_\nu T  Q_\mu \,\widehat{\tau_{\nu-\mu} f} 
 \\
 Q_\nu T  Q_\mu\, \widehat f=\widehat { \tau_{\mu-\nu} f} \,Q_\nu T Q_\mu  
 \,.
\end{align*}
\end{enumerate}
\end{lem}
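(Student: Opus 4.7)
The plan is to prove the three parts in order. Parts (a) and (b) are purely algebraic/symmetry arguments; part (c) is the substantive one and rests on a finer decomposition of $\widehat f$ with respect to the two distinguished sites $i,j$.

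For (a), I would first note that $P_k P_{k'} = \delta_{k,k'} P_k$ (since each $P_k$ is a sum over disjoint $k$-subset patterns), so $\widehat f \, \widehat g = \sum_{k,k'} f(k)g(k') P_k P_{k'} = \sum_k (fg)(k) P_k = \widehat{fg}$, and commutativity is trivial. For the commutation with $p_j$ observe that each factor in the explicit product defining a $P_k$ either sits on a site $m\neq j$ (and acts on a different tensor factor than $p_j$) or sits on site $j$ as $p_j$ or as $q_j = \id - p_j$; in every case it commutes with $p_j$. Consequently $\widehat f$ commutes with $p_j$ and therefore with $q_j$. The last identity $\widehat f P_k = P_k \widehat f$ is just $\widehat f P_k = f(k) P_k$ by orthogonality.

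For (b), the first identity uses symmetry: since $\widehat f$ is symmetric under permutation of factors (each $P_k$ is) and $\phi,\psi$ are symmetric, $\llangle \phi,\widehat f q_j\psi\rrangle = \tfrac1N \sum_{i=1}^N \llangle \phi,\widehat f q_i \psi\rrangle$. By the identity $\sum_{i=1}^N q_i = \sum_k k P_k = N\widehat{n^2}$ already derived in \eqref{Prel}, this equals $\llangle \phi,\widehat f\,\widehat{n^2}\psi\rrangle$. For the second identity I would use the same trick with two indices, together with $q_i^2=q_i$:
\[
\sum_{i\neq j} q_iq_j \;=\;\Bigl(\sum_i q_i\Bigr)^2-\sum_i q_i\;=\;N^2 \widehat{n^4}-N\widehat{n^2}.
\]
Symmetry gives $N(N-1)\llangle \psi,\widehat f q_iq_j\psi\rrangle=\llangle \psi,\widehat f\sum_{i\neq j}q_iq_j\psi\rrangle$, and since $f\geq 0$ implies $\widehat f\geq 0$ and $\widehat{n^2}\geq 0$ commutes with $\widehat f$ by (a), the term $-N\llangle \psi,\widehat f\,\widehat{n^2}\psi\rrangle$ is non-positive and may be dropped, yielding the asserted bound.

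For (c), the key device is the refinement $P_k = \sum_{s+l=k} P_s^{(ij)} P_l^{(-ij)}$, where $P_s^{(ij)}$ is the rank-one (for $s=0,2$) or rank-two (for $s=1$) projection onto configurations with exactly $s$ non-condensed particles among sites $\{i,j\}$, and $P_l^{(-ij)}$ the analogous projection on the remaining $N-2$ sites. This yields $\widehat f = \sum_{s,l} f(s+l)\, P_s^{(ij)} P_l^{(-ij)}$. Since $T$ acts only on sites $i,j$ it commutes with every $P_l^{(-ij)}$, and the sandwiched $Q_\nu,Q_\mu$ satisfy $P_s^{(ij)} Q_\mu=\delta_{s\mu}Q_\mu$ and $Q_\nu P_s^{(ij)}=\delta_{s\nu}Q_\nu$. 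Therefore
\[
\widehat f\, Q_\nu T Q_\mu \;=\; \sum_{s,l} f(s+l)\,P_s^{(ij)}P_l^{(-ij)}Q_\nu T Q_\mu \;=\; \sum_l f(\nu+l)\, Q_\nu T Q_\mu P_l^{(-ij)},
\]
and the same calculation applied to $Q_\nu T Q_\mu \widehat{\tau_{\nu-\mu} f}$ yields $\sum_l f(\mu+l+(\nu-\mu))\,Q_\nu TQ_\mu P_l^{(-ij)}=\sum_l f(\nu+l)\,Q_\nu TQ_\mu P_l^{(-ij)}$, matching exactly. The second identity follows by the mirror-image computation. The only mild obstacle is keeping the index bookkeeping straight: one must check that the two Kronecker deltas pick out the correct summands so that the shift by $\nu-\mu$ (respectively $\mu-\nu$) exactly compensates the replacement of $s$ by $\mu$ (respectively $\nu$) on one side.
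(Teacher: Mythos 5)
Your proof is correct and follows essentially the same route as the paper's: (a) via orthogonality $P_kP_{k'}=\delta_{kk'}P_k$ and commutation of tensor factors, (b) via the symmetrization identity $\sum_i q_i = N\widehat{n^2}$ from \eqref{Prel} together with $f\geq 0$, and (c) via decomposing $\widehat f$ relative to the sites $\{i,j\}$ versus the remaining $N-2$ sites. Your factorization $P_k = \sum_{s+l=k}P_s^{(ij)}P_l^{(-ij)}$ is just a slightly more explicit phrasing of the paper's identity $P_kQ_\nu = P_{k-\nu}^{12}Q_\nu$ (with your $P_l^{(-ij)}$ being the paper's $P_l^{12}$), so the two computations coincide.
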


\begin{proof}
 \begin{enumerate}[(a)]
  \item  All commutation relations follow immediately from the definitions. E.g.\
    \begin{align*}
    \widehat f \, \widehat g= \sum_{k,l} f(k)g(l) P_{k }  P_{l } = \sum_{k } f(k)g(k)P_k = \widehat{fg}= \widehat g\, \widehat f.
    \end{align*}

\item
For the equality we find using the symmetry of $\psi$ and $\phi$ and \eqref{Prel} that
\[
\llangle \phi, \widehat f \,q_j \psi \rrangle  =\frac{1}{N}\sum_{j=1}^N \llangle \phi, \widehat f \,q_j \psi \rrangle  =\sum_{k=0}^N\sum_{j=1}^N \frac{f(k)}{N} \llangle \phi,    q_j P_k\psi \rrangle  =\sum_{k=0}^N f(k) \frac{k}{N}\llangle \phi,     P_k\psi \rrangle  = \llangle \phi, \widehat f\, \widehat n^2 \psi \rrangle
\]
For the proof of the inequality let without loss of generality $i=1,j=2$. Then
\begin{eqnarray*}
\llangle \psi, \widehat f q_1 q_2 \psi \rrangle  &=& \tfrac{1}{N(N-1)} \sum_{i \neq j}\llangle \psi, \widehat f  q_i q_j \psi \rrangle 
\stackrel{f\geq0}{ \leq} \tfrac{1}{N(N-1)}  \sum_{i , j}\llangle \psi, \widehat f  q_i q_j \psi \rrangle  
 \\& =&   \tfrac{1}{N(N-1)}  \sum_{k=0}^N\sum_{i,j=1}^N  f(k)  \llangle \psi, q_i q_j   P_k\psi \rrangle\stackrel{\eqref{Prel}}{=}  \tfrac{N^2}{N(N-1)}  \sum_{k=0}^N  f(k)  \frac{k^2}{N^2} \llangle \psi,     P_k\psi \rrangle\\&=&\tfrac{N}{(N-1)} \llangle \psi, \widehat f \widehat n^4 \psi \rrangle .
\end{eqnarray*}
\item Let without loss of generality $i=1$ and $j=2$, 
and let $P^{12}_k$ be the operator $P_k:= \id\otimes \id\otimes P_{k,N-2}$, where $P_{k,N-2}$ is the operator $P_k$ defined on $L^2(\Omega^{N-2})$. Then
\begin{eqnarray*}
\widehat f \,Q_\nu T Q_\mu &=& \sum_{k=0}^N f(k) P_k \,Q_\nu TQ_\mu = \sum_{k=\nu}^{N-2+\nu}f(k) P_k \,Q_\nu TQ_\mu\\&=&
 \sum_{k=\nu}^{N-2+\nu}f(k) P_{k-\nu}^{12} \,Q_\nu TQ_\mu =  \sum_{k=\nu}^{N-2+\nu}f(k) \,Q_\nu TQ_\mu\,P_{k-\nu}^{12} 
 \\&=&
  \sum_{k=\nu}^{N-2+\nu}f(k) \,Q_\nu TQ_\mu\,P_{k-\nu+\mu}  = \sum_{k'=\mu}^{N-2+\mu}f(k'+(\nu-\mu)) \,Q_\nu TQ_\mu\,P_{k'}  \\&=& Q_\nu T Q_\mu\, \widehat{\tau_{\nu-\mu} f} \,
\end{eqnarray*}  
and the converse direction follows in the same way.
\end{enumerate}
\end{proof}

From now on $P_k$ and the derived operations\,\, $\widehat{}$\,\, and $\alpha$ refer to the   projection $p = |\ph\rangle\langle\ph|$ onto the one-dimensional subspace spanned by the one-particle wave function $\ph\in L^2(\Omega)$. We make this explicit only within the following lemma.

\begin{lem}\label{hat.}
Let $\varphi(t)= \Phi(t) \chi $, where $\chi $ is an eigenfunction of $-  \Delta_y + V^\perp$ on $\Omega_\mathrm{f}$ and 
$\Phi(t)$ a solution to  \eqref{equ:grosspqwg} with $\Phi_0\in H^2(\R)$. Then for all $f:\{0,\ldots,N\}\to \R$

\begin{enumerate}[(a)]
 \item   
$
  P_k^{\ph(\cdot)}  \in C^1(\R, \mathcal{L}(L^2(\Omega^N)  ) )$ for all  $k\in \{0, \dots ,N\}$ and thus also $\widehat f^{\ph(t)} \in C^1(\R, \mathcal{L}(L^2(\Omega^N)  ) )$,
 
\item 
$
 \left[   -\Delta_{y_i} + V^\perp(y_i),\widehat f^{\ph(t)}\,\right]=0$ for all $ i\in \{1, \dots ,N\}\,,
$

\item Let $H^\Phi(t):= \sum_{i=1}^N h_i^\Phi(t) $ where $h_i^\Phi(t) $ denotes the one-particle operator   $h^\Phi(t)$ (c.f.\   \eqref{equ:grosspqwg}) acting on the $i$th factor in $L^2(\Omega^N)$. Then 
\[
  \im\frac{\D}{\D t} \widehat f^{\ph(t)} =\left[H^\Phi(t), \widehat f^{\ph(t)}\,\right]\,.
\]
\end{enumerate}
\end{lem}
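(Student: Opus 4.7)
}

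For part (a), I would first invoke the well-posedness theory for \eqref{equ:grosspqwg} summarised in Appendix~\ref{app:regsol}: since $\Phi_0\in H^2(\R)$, the solution satisfies $\Phi\in C(\R,H^2(\R))\cap C^1(\R,L^2(\R))$, and consequently $\ph(t)=\Phi(t)\chi$ lies in $C^1(\R, L^2(\Omega))$. The one-particle projection $p^{\ph(t)}=|\ph(t)\rangle\langle \ph(t)|$ is then $C^1$ in operator norm, because for unit vectors $\|\,|\ph_1\rangle\langle\ph_1|-|\ph_2\rangle\langle\ph_2|\,\|_{\mathrm{op}}\leq 2\|\ph_1-\ph_2\|_{L^2}$, so that the difference quotient converges in operator norm to $|\dot\ph\rangle\langle\ph|+|\ph\rangle\langle\dot\ph|$. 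Each $P_k^{\ph(t)}$ is a finite sum of products of the operators $p_i^{\ph(t)}$ and $q_i^{\ph(t)}=\id-p_i^{\ph(t)}$ and is therefore also $C^1$ in operator norm; the same holds by linearity for $\widehat f^{\ph(t)}=\sum_{k=0}^N f(k)P_k^{\ph(t)}$.

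For part (b), the key observation is that $\chi$ is an eigenfunction of $-\Delta_y+V^\perp$ on $\Omega_{\rm f}$ with eigenvalue $E_0$. Writing $\ph=\Phi\otimes\chi$ on $L^2(\Omega)=L^2(\R)\otimes L^2(\Omega_{\rm f})$, we have $(-\Delta_y+V^\perp)\ph=E_0\ph$, so by self-adjointness of $-\Delta_y+V^\perp$ the rank-one projection $|\ph\rangle\langle\ph|$ commutes with it. Lifting to the $N$-particle space, $-\Delta_{y_i}+V^\perp(y_i)$ commutes with $p_i$ (and hence with $q_i$) by the preceding remark, and it commutes trivially with $p_j,q_j$ for $j\neq i$ since they act on different tensor factors. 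Since every $P_k$ is a finite sum of products of such $p_j$ and $q_j$, the commutator with $P_k$ vanishes, and part (b) follows by linearity.

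For part (c), I would first note that $\ph(t)=\Phi(t)\chi$ satisfies $\im\partial_t\ph(t)=h^\Phi(t)\ph(t)$ once we extend $h^\Phi(t)$ from $L^2(\R)$ to $L^2(\Omega)$ by acting trivially in the $y$-variable; this is immediate from \eqref{equ:grosspqwg} and the product structure of $\ph$. Together with self-adjointness of $h^\Phi(t)$ the product rule yields
\begin{equation*}
\im\partial_t p^{\ph(t)}=|h^\Phi\ph\rangle\langle\ph|-|\ph\rangle\langle h^\Phi\ph|=[h^\Phi(t),p^{\ph(t)}],
\end{equation*}
and the corresponding identity on the $i$-th factor. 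Now I would differentiate a single summand $\prod_{j\in J}q_j\prod_{j\notin J}p_j$ of $P_k$ by the product rule. Using that, for fixed $\ell$, $h_\ell^\Phi$ commutes with every $p_j, q_j$ with $j\neq\ell$, each term in the product rule can be rewritten as a commutator with $h_\ell^\Phi$ applied to the whole product; summing over $\ell$ and using $H^\Phi=\sum_\ell h_\ell^\Phi$ gives $\im\partial_t\prod_j A_j=[H^\Phi,\prod_j A_j]$. Summing over $J$ with $|J|=k$ and then over $k$ (weighted by $f(k)$) yields $\im\tfrac{\D}{\D t}\widehat f^{\ph(t)}=[H^\Phi(t),\widehat f^{\ph(t)}]$.

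There is no real obstacle here; the only point requiring mild care is the norm-differentiability argument in (a), where one has to pass from $C^1$-regularity of $\ph(t)$ in $L^2(\Omega)$ to $C^1$-regularity of the rank-one projection in operator norm, which is settled by the elementary Lipschitz estimate above. Parts (b) and (c) are then algebraic manipulations that rest entirely on the eigenvalue equation for $\chi$, the product structure $\ph=\Phi\chi$, and the fact that the factors in the definition of $P_k$ commute pairwise.
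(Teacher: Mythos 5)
Your proposal is correct and follows essentially the same route as the paper: part (a) from $\ph\in C^1(\R,L^2(\Omega))$, part (b) from the fact that the rank-one projection onto the eigenvector $\ph=\Phi\chi$ of $-\Delta_y+V^\perp$ commutes with that operator, and part (c) from $\im\partial_t p_i=[h_i^\Phi,p_i]$ together with the product rule and the commutativity of $h_\ell^\Phi$ with the factors acting on other slots. Your write-up simply spells out, in more detail, the two steps the paper compresses into one-line remarks.
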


\begin{proof}
\begin{enumerate}[(a)]
 \item 
This follows immediately from     $\varphi \in C^1(\R, L^2(\Omega))$.
\item This is the fact that a self-adjoint operator commutes  with its spectral projections.   
\item
Because of \eqref{equ:grosspqwg} the projection $|\Phi(t)\rangle\langle\Phi(t)|$ satisfies the differential equation
$\im \partial_t |\Phi(t)\rangle\langle\Phi(t)|= \left[h^\Phi(t) ,|\Phi(t)\rangle\langle\Phi(t)| \right]$ and thus
 $\im \partial_t p_i(t)= \left[h^\Phi_i(t) ,p_i(t) \right]$ and $\im \partial_t q_i(t)= \left[h^\Phi_i(t),q_i(t) \right]$.
The product rule then implies  for any $J\subset \{1,\ldots,N\}$ that 
\[
 \im \, \partial_t \prod_{j\in J} q_j(t) \prod_{j\notin J} p_j(t)  = \sum_{j=1}^N \Big[ h_j^\Phi(t), \prod_{j\in J} q_j(t) \prod_{j\notin J} p_j(t) \Big] =  \Big[ H^\Phi(t), \prod_{j\in J} q_j(t) \prod_{j\notin J} p_j(t)\Big] \,.
\]
As $\widehat f^\ph$ is a linear combination of operators of the above form, the claim follows.
\end{enumerate}
\end{proof}

For the next lemma recall the definition \eqref{mdef} of the function $m(k)$ defining our weight $\alpha_{m}$.
 Because of Lemma~\ref{lem:weights} (c) and the form of the terms I--IV in Proposition~\ref{lem:beta.g}, the difference $m_\ell(k)$ defined below will appear many times in our estimates.
\begin{lem}\label{lem:qs&N}
Let $\psi\in L^2_+(\Omega^N)$ be symmetric, $\ell \in \N$ and 
\begin{equation}\label{melldef}
m_\ell(k) := N (m(k) - \tau_{-\ell}m(k) )= N (  m(k)-  m(k-\ell))\,,
\end{equation}
where the function $m(k)$ was defined in \eqref{mdef}.
\begin{enumerate}[(a)]
\item It holds that
\begin{align*}
0\leq  m_\ell(k)  \leq \begin{cases}
                      \ell\sqrt{\frac{ N}{k}}  \qquad  &k\geq N^{1-2\xi}+\ell\\
		    \frac{\ell}{2}  N^{\xi} \qquad &k < N^{1-2\xi}+\ell  
                    \end{cases}\,,
 \end{align*}
and
\[
 \norm { \widehat{m_\ell} q_1 \psi} \leq \ell\,\|\psi\|  \quad\mbox{and}\quad   \norm { N\big(\widehat{n}-\widehat{\tau_{-\ell} n}\big) q_1 \psi} \leq \ell\,\|\psi\|\,.
\]
\item
Let $q^\chi := {\bf 1}_{L^2(\R)}\otimes(  {\bf1}_{L^2(\Omega_{\rm f})}- |\chi\rangle\langle\chi|)$ be the projection onto the orthogonal complement of the ground state in the confined direction. Then
\[
\left\llangle \psi, q^\chi_1 \psi \right\rrangle \lesssim  \epsi^2\left(1+ |E^\psi (t)  |\right) 
\]  
and
\[
 \left\| \widehat{m_1}\, q^\chi_1 \psi   \right\|\lesssim N^\xi \,\epsi  \, \left(1+ |E^\psi (t)  |\right)^\frac12 \,.
\]
\end{enumerate}
\end{lem}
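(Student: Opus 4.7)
The plan is to handle (a) by pointwise case analysis of the piecewise function $m$ followed by the standard symmetry identity $\|q_1P_k\psi\|^2=(k/N)\|P_k\psi\|^2$, and (b) by combining the spectral gap of the confining cross-section operator with the fact that $q^\chi_1$ commutes with every $P_k$.

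For part (a), I would first split the pointwise bound on $m_\ell(k)=N(m(k)-m(k-\ell))$ according to whether $k$ and $k-\ell$ lie in the linear region $[0,N^{1-2\xi})$ or the square-root region $[N^{1-2\xi},N]$. In the square-root region, concavity of $\sqrt{\cdot}$ gives $\sqrt{k}-\sqrt{k-\ell}\le\ell/\sqrt{k}$ and hence $m_\ell(k)\le\ell\sqrt{N/k}$ whenever $k\ge N^{1-2\xi}+\ell$. In the linear region the slope of $m$ is exactly $\tfrac12 N^{-1+\xi}$, so $m_\ell(k)=\tfrac{\ell}{2}N^\xi$ when both $k$ and $k-\ell$ sit there. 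For the mixed transition range $N^{1-2\xi}\le k<N^{1-2\xi}+\ell$ one notes $\tfrac{d}{dk}m_\ell(k)=\tfrac{\sqrt{N}}{2\sqrt{k}}-\tfrac{N^\xi}{2}$ vanishes at $k=N^{1-2\xi}$ and is negative for $k$ beyond, so $m_\ell$ decreases from its value $\tfrac{\ell}{2}N^\xi$ attained at $k=N^{1-2\xi}$. This gives the full pointwise bound.

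For the two operator inequalities in (a), the commutation $[\widehat{m_\ell},q_1]=0$ from Lemma~\ref{lem:weights}(a) and the orthogonality of $\{P_k\psi\}_k$ yield $\|\widehat{m_\ell}q_1\psi\|^2=\sum_k m_\ell(k)^2\|q_1P_k\psi\|^2$. Applying $\|q_1P_k\psi\|^2=(k/N)\|P_k\psi\|^2$, which follows from $P_kq_1=q_1P_k$, the relation $\sum_iq_iP_k=kP_k$ in \eqref{Prel} and permutation symmetry of $\psi$, it suffices to check $m_\ell(k)^2\cdot k/N\le\ell^2$ term-wise. The problematic contribution $k=0$ is harmless because $q_1P_0=q_1p_1\cdots p_N=0$. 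Inserting the pointwise bound from the previous paragraph, the square-root range gives $m_\ell(k)^2k/N\le\ell^2(N/k)(k/N)=\ell^2$, the linear and transition range gives $m_\ell(k)^2k/N\le(\tfrac\ell2N^\xi)^2(N^{1-2\xi}+\ell)/N\lesssim\ell^2$. The analogous estimate for $N(\widehat n-\widehat{\tau_{-\ell}n})$ is even cleaner because $N(n(k)-n(k-\ell))=\sqrt{Nk}$ for $k<\ell$, giving $(Nk)\cdot k/N=k^2\le\ell^2$ directly.

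For the first bound in (b) I exploit the spectral gap of $H^\perp_1:=\tfrac1{\epsi^2}(-\Delta_{y_1}+V^\perp(y_1))$. Since $\chi$ is the ground state with eigenvalue $E_0/\epsi^2$ and the gap to the next eigenvalue is of order $\epsi^{-2}$, restricting to $\mathrm{ran}(q^\chi_1)$ and using $p^\chi_1H^\perp_1=E_0/\epsi^2\,p^\chi_1$ yields
\[
\langle\psi,q^\chi_1\psi\rangle\;\le\;\frac{\epsi^2}{E_1-E_0}\bigl(\langle\psi,H^\perp_1\psi\rangle-E_0/\epsi^2\bigr)\;=\;\frac{\epsi^2}{E_1-E_0}\Bigl(\tfrac1N\sum_i\langle\psi,H^\perp_i\psi\rangle-E_0/\epsi^2\Bigr),
\]
where permutation symmetry was used. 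To control $\tfrac1N\langle\sum_iH^\perp_i\rangle$ by the renormalised energy I would write $H(t)=\sum_iH^\perp_i+R(t)$ and bound the remainder $R(t)$ below: positivity of $w$ (assumption {\bf A2}) and of $-\partial_{x_i}^2$, boundedness of $V$, $\kappa^2/4$ and $V_{\rm bend}+\kappa^2/4$, together with a Cauchy–Schwarz bound $|\langle\partial_{x_i}\theta'L_i+\theta'L_i\partial_{x_i}\rangle|\le\|\partial_{x_i}\psi\|^2+\|\theta'\|_\infty^2\|L_i\psi\|^2$ and the estimate $\|L_i\psi\|^2\lesssim\epsi^2\langle H^\perp_i\rangle+C$ (from the relative boundedness of $V^\perp$) allow one to absorb the off-diagonal twisting and the $\epsi S^\epsi$ contributions with a small prefactor, yielding $\tfrac1N\langle\sum_iH^\perp_i\rangle\lesssim\tfrac1N\langle H\rangle+C=E^\psi+E_0/\epsi^2+C$. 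Substitution gives $\langle\psi,q^\chi_1\psi\rangle\lesssim\epsi^2(1+|E^\psi|)$.

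For the second bound in (b) I would use that $p=|\Phi\rangle\langle\Phi|\otimes|\chi\rangle\langle\chi|$, so $p_1q^\chi_1=q^\chi_1p_1=0$ and consequently $q^\chi_1$ commutes with every $P_k$ and annihilates $P_0$. Hence $\{q^\chi_1P_k\psi\}_k$ is orthogonal with $\sum_k\|q^\chi_1P_k\psi\|^2=\|q^\chi_1\psi\|^2$. Part (a) gives $m_1(k)\lesssim N^\xi$ for every $k\ge1$, since $\tfrac12 N^\xi$ covers the linear/transition range and $\sqrt{N/k}\le N^\xi$ in the square-root range. Therefore
\[
\|\widehat{m_1}q^\chi_1\psi\|^2=\sum_{k\ge1}m_1(k)^2\|q^\chi_1P_k\psi\|^2\;\lesssim\;N^{2\xi}\|q^\chi_1\psi\|^2\;\lesssim\;N^{2\xi}\epsi^2(1+|E^\psi|),
\]
which is the claim. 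The main technical obstacle is the energy estimate in (b), where the twisting and first-order terms in $R^{(1)}$ must be dominated by the positive operator $\sum H^\perp_i$ up to small prefactors, relying crucially on the relative boundedness of $V^\perp$ in assumption {\bf A3} to translate $\|L_i\psi\|$ into a multiple of $\epsi$ times the confinement energy.
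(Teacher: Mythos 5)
Your proof of part~(a) and the second estimate of part~(b) coincides with the paper's. For the first estimate in part~(b) you arrive at the same bound by an equivalent route (lower-bounding $E^\psi$ directly versus your upper bound on $\frac{1}{N}\llangle \psi,\sum_i H^\perp_i\psi\rrangle$ are two arrangements of the same computation), but the technical handling of the cross terms differs. You propose to control the twisting cross term $-\partial_{x_1}\theta'L_1-\theta'L_1\partial_{x_1}$ and the $\epsi S^\epsi$ correction via Cauchy--Schwarz plus absorption, using relative boundedness of $V^\perp$ to bound $\|L_1\psi\|^2$ by $\epsi^2$ times the confinement energy. The paper sidesteps this by using the completed-square form $-(\partial_{x_1}+\theta'(x_1)L_1)\,\rho_\epsi^{-2}(r_1)\,(\partial_{x_1}+\theta'(x_1)L_1)$ of the transformed tangential kinetic energy, which already packages $-\partial^2_{x_1}$, $-(\theta'L_1)^2$, the cross term, and $\epsi S^\epsi$ into one manifestly non-negative block; this block is simply dropped together with the non-negative interaction, leaving only the bounded potentials and the confinement term. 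Your route is workable --- in fact your Cauchy--Schwarz step is exactly saturated, reflecting the identity $-\partial^2_{x}-(\theta'L)^2-\partial_x\theta'L-\theta'L\partial_x=-(\partial_x+\theta'L)^2\ge 0$, so the cross term needs no genuine quantitative absorption and only $\epsi S^\epsi$ requires a small prefactor --- but the paper's presentation makes the positivity self-evident and avoids the absorption argument altogether.
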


\begin{proof}
First recall that   $n$ and $m$ are monotonically increasing functions, c.f.~\eqref{mdef}. Moreover
\begin{align*}
 (n(k)-n(k-\ell))^2= \Big  (\tfrac{\sqrt{k}-\sqrt{ k-\ell }}{\sqrt{N}}\Big )^2= \tfrac{\ell^2}{(\sqrt{k}+\sqrt{ k-\ell })^2N} \leq \tfrac{\ell^2}{ k N}
\end{align*}
and thus also $m_\ell(k) \leq \ell\sqrt{\frac{N}{k}}$ for $k\geq N^{1-2\xi}+\ell$ follows. The bound 
$m_\ell(k)\leq  \frac{\ell}{2}  N^{\xi}$ is obvious for $k < N^{1-2\xi}$ and holds also for $  N^{1-2\xi}\leq k<N^{1-2\xi}+\ell$, 
since $\sqrt{\frac{k}{N}} \leq \frac{1}{2}(N^{-1+\xi}k+N^{-\xi})$ for such $k$.

For any $f:\{0,\ldots,N\}\to \R$ we find with Lemma~\ref{lem:weights} (b) that
\begin{align*}
\norm { \left(\widehat{f}-\widehat{\tau_{-l} f}\right) q_1 \psi}^2= \left\llangle \psi,  \left(\widehat{f}-\widehat{\tau_{-l} f}\right)^2 q_1 \psi\right\rrangle=
\sum_{k=1}^N \Big(f(k)-f(k-l)\Big)^2\frac{k}{N}\left\llangle \psi, P_{k } \psi \right\rrangle\, .
\end{align*}
Hence part (a) of the lemma follows with the above estimates on $m_\ell$ and the identity \eqref{Prel}. 
From
\begin{eqnarray*}\lefteqn{
E^\psi (t)   =  \tfrac{1}{N}\left\llangle \psi, \left( H(t)- N\tfrac{E_0}{\epsi^2}\right) \psi\right\rrangle} \\
& =& 
\left\llangle \psi, \frac{1}{N}\left( \sum_{i=1}^N  \left(
\left(\partial_{x_i}  +\theta'(x_i)   L_i \right)  \rho_\epsi^{-2}(r_i)  \left(\partial_{x_i}  +\theta'(x_i)   L_i \right)
- V_{\rm bend}(r_i)+V(r_i)\right.\right.\right.\\
&&\hspace{2cm}  \;+\left.\tfrac{1}{\epsi^2}(-\Delta_{y_i}+V^\perp(y_i)-E_0)\right)\left. + \frac{a}{\mu^3}\sum_{j < i }   w  \left( \frac{ f_\theta^\epsi(r_i)-f_\theta^\epsi(r_j)}{\mu}\right) \Bigg)\psi \right\rrangle \\
&\weq{{\rm \bf A2}}{ \geq} &    \left\llangle \psi,\left(- V_{\rm bend}(r_1) + \tfrac{1}{\epsi^2}\left(-\Delta_{y_1}+V^\perp(y_1 )-E_0\right)\right) \psi \right\rrangle\\
& =& -\left\llangle \psi,    V_{\rm bend}(r_1)\psi \right\rrangle  +  \left\llangle \psi, \tfrac{1}{\epsi^2}\left(-\Delta_{y_1}+V^\perp(y_1)-E_0\right)q_1^\chi  \psi \right\rrangle\\
& \geq &-\left\|    V_{\rm bend} \right\|_{L^\infty(\Omega)}+ \tfrac{C}{\epsi^2}   \left\llangle \psi, q_1^\chi \psi \right\rrangle
\end{eqnarray*}
we infer that    $\llangle \psi, q^\chi_1 \psi \rrangle \lesssim \epsi^2\left(1+ |E^\psi (t) |\right)$. 
For the proof of the remaining estimate   we use that $q_1^\chi$ commutes with $q_1$ and  thus also   with~$P_k$. 
Hence
\begin{eqnarray*} 
 \norm{\widehat{m_1}\, q_1^\chi \psi  }^2 &=& N^2 \sum_{k=1}^N   m_1^2(k)   \llangle  \psi,  P_k \, q_1^\chi\psi \rrangle \\
&\leq&
 \tfrac{1}{4}\sum_{k=1}^{ \lfloor N^{1-2\xi} \rfloor}        N^{ 2\xi}\,  \left\llangle \psi, P_{k} q_1^\chi\psi \right\rrangle
 + \tfrac{1}{4} \sum_{k= \lceil N^{1-2\xi} \rceil}^N   \frac{  N  }{k }  \left\llangle \psi,  P_k q_1^\chi\psi \right\rrangle\\
&\leq& \tfrac12 N^{2\xi}\sum_{k=1}^{ N}  \llangle \psi,    P_k\,q^\chi_1 \psi \rrangle =  \tfrac12 N^{2\xi}  \llangle \psi, q_1^\chi\psi\rrangle \lesssim N^{2\xi} \,\epsi^2\, \left(1+ |E^\psi (t)  |\right)\,.
\end{eqnarray*}
\end{proof}

The meaning of Lemma~\ref{lem:qs&N} (b) is the following.
 Due to symmetry of the wave function, the ``probability'' that one specific particle in a many-body state gets excited in the confined direction can be controlled by the renormalised    energy per particle $E^\psi(t)$ for any $t\in\R$. 

Next we   Taylor expand   the scaled two-body interaction $w^{\epsi, \beta,N}_{12}$.
\begin{lem}\label{R2Lemma} Assuming {\bf A2} for the two-body potential $w$ and {\bf A1} for the geometry of the waveguide, it holds that 
\begin{eqnarray*}
 \frac{\epsi^2}{\mu^3} \; w  \left( \frac{ f_\epsi(r_1)-f_\epsi(r_2)}{\mu}\right) &=& 
  \frac{\epsi^2}{\mu^3}\; w  \left( \frac{ r^\epsi_1-r^\epsi_2}{\mu}\right) \,+\,  R(r_1 , r_2 )   \, \frac{\epsi^2}{\mu^3}\; 
 \tilde w'\left( \tfrac{\left\| r^\epsi_2-r^\epsi_1  \right\|^2}{\mu^2}    \right) \, +\,  \frac{\epsi^2}{\mu^3} \,\Or\left( R(r_1 , r_2 )^2 \right)  \\[1mm]
&=: & w^0_{12}(r_1,r_2) + T_1(r_1 , r_2 ) + T_2(r_1 , r_2 )
\end{eqnarray*}
with
\[
\overline R := \sup_{r_1,r_2\in\Omega} | R(r_1 , r_2 )| \lesssim  \epsi +\mu \,.
\]
\end{lem}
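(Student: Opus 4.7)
The plan is to split this into two independent tasks: a one-variable Taylor expansion of $\tilde w$ (the algebraic content) and a geometric estimate on the increment $R$ (the hard part). By Assumption \textbf{A2}, $w(z) = \tilde w(|z|^2)$ with $\tilde w \in C^2(\R)$ compactly supported, so $\|\tilde w''\|_\infty < \infty$. I set $A := \|r_1^\epsi - r_2^\epsi\|^2/\mu^2$ and $A' := \|f_\epsi(r_1)-f_\epsi(r_2)\|^2/\mu^2$, so that $w((r_1^\epsi - r_2^\epsi)/\mu) = \tilde w(A)$ and $w((f_\epsi(r_1)-f_\epsi(r_2))/\mu) = \tilde w(A')$. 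Taylor's theorem with Lagrange remainder yields
\[
\tilde w(A') = \tilde w(A) + \tilde w'(A)(A'-A) + \tfrac12 \tilde w''(\xi)(A'-A)^2
\]
for some $\xi$ between $A$ and $A'$. Defining $R(r_1,r_2) := A' - A$ and multiplying by $\epsi^2/\mu^3$ gives exactly the claimed expansion, with the $\mathcal O(R^2)$ remainder bounded by $\tfrac12 \|\tilde w''\|_\infty R^2$.

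It remains to prove $\overline R \lesssim \epsi + \mu$. Outside the support region of $\tilde w$, $\tilde w'$, and $\tilde w''$ (all bounded by $\|\cdot\| < 1$ in the natural argument), every term in the expansion vanishes identically, so only the region where $\|f_\epsi(r_1)-f_\epsi(r_2)\| \lesssim \mu$ \emph{or} $\|r_1^\epsi - r_2^\epsi\| \lesssim \mu$ matters; there $R$ may be redefined trivially outside without affecting the identity. By bi-Lipschitzness of $f_\epsi$ (a consequence of Assumption \textbf{A1}), both inclusions force $|\Delta x| \lesssim \mu$ and $\epsi |\Delta y| \lesssim \mu$, where $\Delta x := x_1 - x_2$ and $\Delta y := y_1 - y_2$. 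Using the explicit form $f_\epsi(r) = c(x) + \epsi A(x) T_{\theta(x)} y$ with $A(x) := (e_1(x), e_2(x))$, I decompose
\[
\|f_\epsi(r_1)-f_\epsi(r_2)\|^2 = \|c(x_1) - c(x_2)\|^2 + 2\epsi \langle c(x_1)-c(x_2), F_1 - F_2\rangle + \epsi^2 \|F_1 - F_2\|^2,
\]
with $F_j := A(x_j) T_{\theta(x_j)} y_j$, and estimate each summand using the Bishop-frame identities.

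Three small computations are needed. First, $\|c'\|^2\equiv 1$ implies $\langle c', c''\rangle = 0$, so $\|c(x_1)-c(x_2)\|^2 - (\Delta x)^2 = \mathcal O(\Delta x^4)$. Second, since $c'(x_2) = \tau(x_2) \perp e_j(x_2)$ and $e_j$ is smooth, $\langle \tau(x_2), e_j(x_1)\rangle = \mathcal O(\Delta x)$, which yields $|\langle c(x_1)-c(x_2), F_1 - F_2\rangle| \lesssim \Delta x^2 (|y_1| + |\Delta y|)$ after Taylor-expanding $F_1 - F_2$. Third, because $A(x)^T A(x)= I_{2\times 2}$ and the derivative $\partial_x A^T A(x_2)|_{x_2}$ vanishes by $\tau \perp e_j$, one has $A(x_1)^T A(x_2) = I + \mathcal O(\Delta x^2)$; combined with $T_{\theta(x_1)}^T T_{\theta(x_2)} = I - \theta'(x_2) \Delta x\, J + \mathcal O(\Delta x^2)$, this gives $\langle F_1, F_2\rangle - \langle y_1, y_2\rangle = -\theta'(x_2)\Delta x\, y_1^T J y_2 + \mathcal O(\Delta x^2 |y|^2)$. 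Crucially, antisymmetry of $J$ yields $y_1^T J y_1 = 0$, so $y_1^T J y_2 = y_1^T J \Delta y$, giving $\bigl|\|F_1 - F_2\|^2 - \|\Delta y\|^2\bigr| \lesssim |\Delta x||y_1||\Delta y| + \Delta x^2$.

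Summing the three pieces, using $|y_j| \leq \mathrm{diam}(\Omega_{\rm f})$, $|\Delta x|\lesssim \mu$, and $\epsi|\Delta y|\lesssim \mu$, and dividing by $\mu^2$ gives
\[
|R| \lesssim \frac{\Delta x^4}{\mu^2} + \frac{\epsi \Delta x^2}{\mu^2} + \frac{\epsi^2 \Delta x\, |\Delta y|}{\mu^2} + \frac{\epsi^2 \Delta x^2}{\mu^2} \lesssim \mu^2 + \epsi + \frac{\epsi \cdot \mu \cdot \mu}{\mu^2} + \epsi^2 \lesssim \epsi + \mu.
\]
The main obstacle is the $\epsi^2\Delta x |\Delta y|$ contribution from the twisting: a naive bound $|y_1^T J y_2|\leq |y_1||y_2|\lesssim 1$ would yield the fatal estimate $|R|\lesssim \epsi^2/\mu$, which destroys the claim in the moderate-confinement regime $\mu/\epsi\to 0$. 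The antisymmetry trick $y_1^T J y_2 = y_1^T J \Delta y$ produces an extra factor $\epsi|\Delta y|\lesssim \mu$ that exactly absorbs the bad denominator.
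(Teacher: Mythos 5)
Your proposal is correct and yields the same bound $\overline{R}\lesssim\epsi+\mu$, but by a genuinely different decomposition. The paper Taylor-expands the difference $f_\theta(r_2)-f_\theta(r_1)$ around the \emph{midpoint} $(r_1+r_2)/2$ (so the quadratic terms cancel), writes $Df_\theta(\bar r)=A_\theta(\bar x)+b_\theta(\bar r)\otimes e_x$ with $A_\theta$ orthogonal and $\|b_\theta\|\lesssim\|y\|$, and then reads off $\tilde R=\Or(\|r_2-r_1\|\,|x_2-x_1|\,\|y\|+|x_2-x_1|^2\|y\|^2+\|r_2-r_1\|^3)$ before scaling; the ``dangerous'' twisting contribution there is automatically of the form $\theta'\langle\Delta y,J\bar y\rangle$, which is bounded by $|\Delta y|\,|\bar y|$ without any further trick because the factor $|\Delta y|\le\|r_2-r_1\|$ sits in front. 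You instead expand $\|f_\epsi(r_1)-f_\epsi(r_2)\|^2$ at $x_2$ into the three quadratic pieces (curve--curve, curve--section, section--section) and control each one by the Bishop-frame identities $\langle c',c''\rangle=0$, $A^{T}A'=0$, and, crucially, the antisymmetry $y^{T}Jy=0$ to turn $y_1^{T}Jy_2$ into $-y_1^{T}J\Delta y$. Your observation is exactly the same cancellation that the paper's midpoint expansion produces structurally: $\langle\Delta y,J\bar y\rangle=\langle y_1,Jy_2\rangle=-\langle y_1,J\Delta y\rangle$ — so one can view the midpoint expansion as the ``preprocessed'' form of your antisymmetry step. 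Your route is more computational and less elegant, but it is self-contained, makes the role of each geometric hypothesis visible, and it is a nice diagnostic that you identified the $\epsi^2\Delta x\, y_1^{T}Jy_2$ term as the place where a naive bound $\lesssim\epsi^2/\mu$ would ruin the moderate-confinement case. One small remark on hygiene: like the paper, your bound on $R$ is only claimed on the joint support region where the three terms of the Taylor identity are not all zero; strictly speaking the lemma's $\sup_{r_1,r_2\in\Omega}$ should be read with this restriction (the paper has the same implicit restriction via ``$w\neq0$ only for $\|f_\epsi(r_2)-f_\epsi(r_1)\|<\mu$''), so this is not a gap, merely worth stating.
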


\begin{proof}
The proof is in essence a Taylor expansion, but we need to be careful with the different scalings. First 
recall the maps $f$, $T_\theta$ and $D_\epsi$ defined in the introduction.
$D_\epsi$ is linear and for the differentials of $f$ and $T_\theta$ one easily computes
\begin{eqnarray*}
DT_\theta(x,y) &=& \begin{pmatrix} 1 & 0 && 0 \\ T'_{\theta(x)}y&& T_{\theta(x)}
\end{pmatrix}\\[2mm]
Df  (r) h &=&   (c'(x),  e_1(x),  e_2(x)) h \;+\;  (y^1 e'_1(x)+y^2 e'_2(x)) \,h_x\\
&=:& A(x) h - y\cdot\kappa(x)c'(x) h_x\,. 
\end{eqnarray*}
For $
f_\theta := f\circ T_\theta$ we thus obtain
\begin{eqnarray*}
D f_\theta (r) &=& D(f\circ T_\theta)  (r) h =  Df\circ T_\theta(r)  \;DT_\theta(r)  h \\
&=&\left( (A\circ T_\theta)(x) + (b\circ T_\theta)(r)    (1,0,0)^T\right)
 \begin{pmatrix} 1 & 0 && 0 \\ T'_{\theta(x)}y&& T_{\theta(x)}\end{pmatrix}h\\
 &=& A(x) T_\theta(x) h + A(x) \begin{pmatrix} 0 & 0 &  0 \\ T'_{\theta(x)}y&0&0\end{pmatrix}h
 - (T_{\theta(x)}y\cdot \kappa(x)) c'(x) h_x\\
 &=& A(x) T_\theta(x) h + \left( (e_1(x),e_2(x)) T'_{\theta(x)}y - (T_{\theta(x)}y\cdot \kappa(x)) c'(x) 
 \right) h_x\\
 &=:&  A_\theta(x) h + b_\theta(r) h_x
\end{eqnarray*}
Note that $A_\theta(x)$ is an orthogonal matrix for all $x\in\R$ and $\|b(x,y)\|_{\R^3} \lesssim \|y\|_{\R^2}$. Hence for $\|y\|_{\R^2}$ small enough, $D f_\theta (r)$ is invertible and 
\begin{equation}\label{festi}
\big| \left\| f_\theta(r_2) - f_\theta(r_1) \right\|_{\R^3}  -  \left\|  r_2-r_1  \right\|_{\R^3}\big| \lesssim \|y\|_{\R^2} \left\| f_\theta(r_2) - f_\theta(r_1) \right\|_{\R^3} \,.
\end{equation}
Since $f\in C^\infty(\R^3)$, Taylor expansion gives
\begin{eqnarray*}
f_\theta(r_2) - f_\theta(r_1) 
&=& A_\theta\left(\tfrac{x_1+x_2}{2}\right)\,(r_2-r_1)
+\;   b_\theta\left(\tfrac{r_1+r_2}{2}\right) \,(x_2-x_1) + \Or(|r_2-r_1|^3)
\end{eqnarray*}
and thus 
\begin{eqnarray*}
\|f_\theta(r_2) - f_\theta(r_1) \|^2
&=& \left\langle A_\theta\left(\tfrac{x_1+x_2}{2}\right)\,(r_2-r_1 ), A_\theta\left(\tfrac{x_1+x_2}{2}\right)\,(r_2-r_1 )\right\rangle\\
&&
+\; 2  \left\langle A_\theta \left(\tfrac{x_1+x_2}{2}\right)\,(r_2-r_1 ), b_\theta\left(\tfrac{r_1+r_2}{2}\right)  \right\rangle\,(x_2-x_1) \\
&& +\;   \left\langle b_\theta\left(\tfrac{r_1+r_2}{2}\right) ,b_\theta\left(\tfrac{r_1+r_2}{2}\right) \right\rangle\,(x_2-x_1) ^2+ \Or(|r_2-r_1|^3)\\
&=& \left\| r_2-r_1 \right\|^2 \;+\; \widetilde R(r_1,r_2)
\end{eqnarray*}
with 
\[
|\widetilde R(r_1,r_2)|= \Or(\|r_2-r_1\|\,|x_2-x_1|\|y\|_{\R^2}+  |x_2-x_1|^2\|y\|_{\R^2}^2 +\|r_2-r_1\|^3 )\,.
\]
Now recall that 
\[
f_\epsi := f\circ T_\theta\circ D_\epsi\quad\mbox{i.e.}\quad f_\epsi (r) = f_\theta(r^\epsi)\,.
\]
Since  $w\left( \tfrac{1}{\mu} \left( f_\epsi(r_2) - f_\epsi(r_1) \right)\right) \not=0$ only for $\|f_\epsi(r_2) - f_\epsi(r_1)\|<\mu$ and thus according to \eqref{festi} also $\|r^\epsi_2-r^\epsi_1\|<\mu(1+\epsi)$, we have that
\[
|\widetilde R(r_1^\epsi, r_2^\epsi)| =  \Or(\mu^2 \epsi +\mu^2 \epsi^2+ \mu^3 )\,.
\]
Taylor expansion of $w(r) = \tilde w(|r|^2)$ finally gives the desired result with $R:=\widetilde R/\mu^2$, 
\begin{eqnarray*}\lefteqn{\hspace{-0.5cm}
w\left( \tfrac{1}{\mu} \left( f_\epsi(r_2) - f_\epsi(r_1) \right)\right) \;=\; \tilde w\left( \tfrac{1}{\mu^2} \left\| f_\epsi(r_2) - f_\epsi(r_1) \right\|^2\right)}\\
&=& \tilde w\left( \tfrac{1}{\mu^2}\left\| r^\epsi_2-r^\epsi_1   \right\|^2\right)  + \frac{\widetilde R(r_1^\epsi, r_2^\epsi)}{\mu^2} \, \tilde w'\left( \tfrac{1}{\mu^2} \left\| r^\epsi_2-r^\epsi_1  \right\|^2   \right) +   \Or\left(\frac{\widetilde R(r_1^\epsi, r_2^\epsi)^2}{\mu^4}\right)\,.
\end{eqnarray*}

\end{proof}

The next lemma collects elementary facts that will allow us to estimate one- and two-body potentials in different situations.

\begin{lem}\label{lem:young}
Let $g:\R^3\times \R^3\to\R$ be a measurable function such that $|g(r_1,r_2)|\leq v(r_2-r_1)$ almost everywhere for some measurable function $v:\R^3\to \R$, and
 let   $\ph\in L^2(\Omega)\cap L^\infty(\Omega)$ with $\|\ph\|_2=1$, and $p= |\varphi \rangle \langle \varphi |$.   
 \begin{enumerate}[(a)]
  \item For $v\in L^2(\R^3)$ we have \\[1mm]
      $
	\norm{v(r)p}_{\mathcal{L}(L^2(\Omega))} \leq \norm{v}_{L^2(\Omega)}\norm{\varphi}_{L^\infty(\Omega)}
     $
\quad and\quad 
	$
        \norm{g(r_1,r_2)p_1}_{\mathcal{L}(L^2(\Omega^2))} \leq \norm{v}_{L^2(\R^3)} \norm{\varphi}_{L^\infty(\Omega)}
       $
\item  For $v\in L^1(\R^3)$ we have
$
\norm{ p_1 g(r_1,r_2)p_1}_{\mathcal{L}(L^2(\Omega^2))} \leq \norm{v}_{L^1(\R^3)}\norm{\varphi}^2_{L^\infty(\Omega)}
$.
\item For $\Phi\in H^2(\R)$ 
\[
\|\Phi\|_{L^\infty}^2 \leq \|\Phi\|_{H^1}^2 \leq \|\Phi\|_{H^2}^2\qquad\mbox{and}\qquad \|\nabla |\Phi|^2 \|_{L^2}  \leq 2 \|\Phi\|_{L^\infty} \|\Phi\|_{H^1} \,.
\]
\end{enumerate}
\end{lem}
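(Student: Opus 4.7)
The proof is a collection of essentially elementary estimates; I sketch the strategy for each part separately.

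For the two operator-norm bounds in part (a), the plan is to exploit the rank-one structure of $p=|\varphi\rangle\langle\varphi|$. For the first estimate, writing $p\psi=\langle\varphi,\psi\rangle\varphi$ immediately reduces $\|v(r)p\psi\|_{L^2(\Omega)}$ to $|\langle\varphi,\psi\rangle|\,\|v\varphi\|_{L^2(\Omega)}$, and then Cauchy-Schwarz together with $\|v\varphi\|_{L^2}\leq\|v\|_{L^2}\|\varphi\|_{L^\infty}$ finishes the job. For the second estimate, since $p_1$ acts only on the first variable, I would write $p_1\psi(r_1,r_2)=\varphi(r_1)\tilde\psi(r_2)$ with $\tilde\psi(r_2):=\int\overline{\varphi(s)}\psi(s,r_2)\,ds$ satisfying $\|\tilde\psi\|_{L^2}\leq\|\psi\|_{L^2(\Omega^2)}$ by Cauchy-Schwarz in $s$. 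Then, pointwise in $r_2$, the hypothesis and translation invariance of Lebesgue measure give
\[
\int_\Omega|g(r_1,r_2)\varphi(r_1)|^2\,dr_1\leq \|\varphi\|_{L^\infty}^2\int_{\R^3} v(r_1-r_2)^2\,dr_1=\|\varphi\|_{L^\infty}^2\|v\|_{L^2(\R^3)}^2,
\]
and integrating $|\tilde\psi(r_2)|^2$ in $r_2$ yields the claimed bound.

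For part (b), the strategy is to sandwich: $p_1\,g(r_1,r_2)\,p_1\psi = \varphi(r_1)\,K(r_2)\,\tilde\psi(r_2)$, where $K(r_2):=\int\overline{\varphi(s)}g(s,r_2)\varphi(s)\,ds$ and $\tilde\psi$ is as above. Using $|g(s,r_2)|\leq v(s-r_2)$ and $|\varphi|^2\leq\|\varphi\|_{L^\infty}^2$ pointwise, followed by translation invariance, gives $|K(r_2)|\leq\|\varphi\|_{L^\infty}^2\|v\|_{L^1(\R^3)}$ uniformly in $r_2$. Since $\|\varphi\|_{L^2}=1$, the $r_1$-integral is trivial, and $\int|\tilde\psi(r_2)|^2\,dr_2\leq\|\psi\|^2$ closes the estimate.

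For part (c), I would invoke the one-dimensional Sobolev embedding. The fundamental-theorem-of-calculus identity $|\Phi(x)|^2=2\,\mathrm{Re}\!\int_{-\infty}^{x}\overline{\Phi(t)}\Phi'(t)\,dt$, combined with Cauchy-Schwarz and the AM-GM inequality $2ab\leq a^2+b^2$, gives $\|\Phi\|_{L^\infty}^2\leq 2\|\Phi\|_{L^2}\|\Phi'\|_{L^2}\leq \|\Phi\|_{H^1}^2$; the inclusion $\|\Phi\|_{H^1}\leq\|\Phi\|_{H^2}$ is immediate from the definitions. For the gradient bound, the pointwise identity $\nabla|\Phi|^2=2\,\mathrm{Re}(\overline{\Phi}\,\Phi')$ together with Hölder's inequality in $L^\infty\cdot L^2$ yields $\|\nabla|\Phi|^2\|_{L^2}\leq 2\|\Phi\|_{L^\infty}\|\Phi'\|_{L^2}\leq 2\|\Phi\|_{L^\infty}\|\Phi\|_{H^1}$.

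There is no real obstacle here: the whole statement is a bookkeeping lemma assembling standard inequalities to be quoted repeatedly in the proofs of Propositions~\ref{lem:beta.g} and \ref{lem:3termeg}. The only point requiring mild attention is to majorise integrals of $v(r_1-r_2)$ over $r_1\in\Omega\subset\R^3$ by the corresponding full-space integrals, which is licit by positivity and explains why the bounds involve $\|v\|_{L^p(\R^3)}$ rather than $\|v\|_{L^p(\Omega)}$.
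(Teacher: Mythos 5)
Your proof is correct and follows essentially the same approach as the paper's: each operator-norm bound is reduced to the rank-one structure of $p$ combined with a pointwise Schur-type estimate over $r_1$ uniformly in $r_2$, and part (c) is the standard one-dimensional Sobolev embedding via the fundamental theorem of calculus. The only point worth flagging is that the paper's own proof actually estimates the $\epsi$-scaled kernel $g(r_1^\epsi,r_2^\epsi)$ and thereby picks up an extra Jacobian factor $\epsi^{-2}$ from the two transverse coordinates (that is the form used in Corollary~\ref{wcor}), whereas your argument proves the lemma exactly as stated.
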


\begin{proof}
All three estimates in (a) and (b) are elementary: 
\begin{eqnarray*}
 \norm{v(r )p }^2_{\mathcal{L}(L^2(\Omega))} &=&\sup_{\norm{\psi}=1} \left\langle \psi, p  |v(r )|^2 p  \psi  \right\rangle_{L^2(\Omega)}  
= \left\langle \varphi(r ) ,  |v(r )|^2  \varphi(r ) \right\rangle_{L^2(\Omega)}   \sup_{\norm{\psi}=1} \| p   \psi  \|^2_{L^2(\Omega)}\\
&\leq & \left\langle v(r ) ,  |\ph(r )|^2  v(r ) \right\rangle_{L^2(\Omega)}  \leq \|v\|_{L^2(\Omega)}^2\, \|\ph\|_{L^\infty(\Omega)}^2\,,
\end{eqnarray*}
  \begin{eqnarray*}
\norm{g(r^\epsi_1,r^\epsi_2)p_1}^2_{\mathcal{L}(L^2(\Omega^2))}&=&\sup_{\norm{\psi}=1} \left \langle p_1  \psi,  |g(r^\epsi_1,r^\epsi_2)|^2 p_1 \psi  \right \rangle_{L^2(\Omega^2)} \\
&\leq& \sup_{\norm{\psi}=1} \|p_1\psi\|_{L^2(\Omega^2)}\; \sup_{r_2\in\Omega} \int_{\Omega } |\ph(r_1)|^2  |v(r^\epsi_2-r^\epsi_1)|^2 \,\D r_1 \; \\
&\leq & \norm{\varphi}_{L^\infty(\Omega)}^2 \;\sup_{r_2\in\Omega} \int_{\Omega }  |v(r^\epsi_2-r^\epsi_1)|^2\, \D r_1 
\\
&\leq &  \norm{\varphi}_{L^\infty(\Omega)}^2  \tfrac{1}{\epsi^2} \norm{v}^2_{L^2(\R^3)}\,,
\end{eqnarray*}
\begin{eqnarray*}
 \norm{p_1 g(r^\epsi_1,r^\epsi_2)p_1}_{\mathcal{L}(L^2(\Omega^2))}  &\leq &     \sup_{r_2\in\Omega} \int_{\Omega } |\ph(r_1)|^2  |g(r^\epsi_1,r^\epsi_2)|  \, \D r_1 \;\leq \;\sup_{r_2\in\Omega} \int_{\Omega } |\ph(r_1)|^2  |v(r^\epsi_2-r^\epsi_1)|  \, \D r_1 \;\\
&\leq &  \norm{\varphi}^2_{L^\infty(\Omega)} \tfrac{1}{\epsi^2} \norm{v}_{L^1(\R^3)}.
\end{eqnarray*}
For (c) note that for $\Phi\in H^2(\R) \subset C^1(\R)$ we have with Cauchy-Schwarz
\[
\overline{\Phi(x)}\Phi(x) = \int_{-\infty}^x \overline{\Phi'(s)}\Phi(s) +\overline{\Phi(s)}\Phi'(s)\,\D s \leq 2 \|\Phi'\|_{L^2(\R)} \|\Phi\|_{L^2(\R)}\leq   \|\Phi'\|_{L^2 }^2 +  \|\Phi\|_{L^2 }^2 =  \|\Phi\|_{H^1 }\,,
\]
and
\[
 \|\nabla |\Phi|^2 \|_{L^2}^2 \leq \int 4 |\Phi'(x)|^2 |\Phi(x)|^2 \,\D x\leq 4 \|\Phi\|_{L^\infty}^2\|\Phi'\|_{L^2 }^2 \leq 4 \|\Phi\|_{L^\infty}^2 \|\Phi\|_{H^1} \,.
\]
\end{proof}

In the following corollary we collect bounds on the two-body interaction that will be used repeatedly. 
\begin{cor}\label{wcor} For the scaled two-body interaction $w^{\epsi, \beta,N}_{12}$ we have that
\begin{eqnarray*}
\left\| w^{\epsi, \beta,N}_{12} p_1\right\|_{\mathcal{L}(L^2(\Omega^2))} &\lesssim&
   \|\Phi\|_{H^2(\R)}\cdot \left\{\begin{array}{ll}  \tfrac{\epsi}{\mu^{3/2}} & \mbox{for moderate confinement}\\
\sqrt{\mu}& \mbox{for strong confinement}
\end{array}\right.
\\
\left\| \sqrt{w^{\epsi, \beta,N}_{12}} p_1\right\|_{\mathcal{L}(L^2(\Omega^2))} &\lesssim&  \|\Phi\|_{H^2(\R)}\cdot \left\{\begin{array}{ll}  1 & \mbox{for moderate confinement}\\
\sqrt{\mu} & \mbox{for strong confinement}
\end{array}\right.
\\
\|T_1 p_1\|_{\mathcal{L}(L^2(\Omega^2))}+ \|T_2p_1\|_{\mathcal{L}(L^2(\Omega^2))}&\lesssim& \frac{(\epsi+\mu)\epsi}{\mu^{3/2}}\; \|\Phi\|_{H^2(\R)}
 \,,
\end{eqnarray*}
where $T_1$ and $T_2$ are defined in Lemma~\ref{R2Lemma}.
\end{cor}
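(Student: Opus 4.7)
The strategy is to apply Lemma \ref{R2Lemma} to split $w^{\epsi,\beta,N}_{12}$ into the leading term $w^0_{12}$ plus the Taylor remainders $T_1, T_2$, and to bound each piece separately via the triangle inequality combined with the Young-type tools of Lemma \ref{lem:young}. Up to the prefactor $(N-1)/N \leq 1$, which the implicit constant absorbs, we have $w^{\epsi,\beta,N}_{12} = w^0_{12} + T_1 + T_2$, so each of the three bounds claimed in the corollary reduces to three separate scalar estimates of the same shape.

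For the leading term I would control $\|w^0_{12}\,p_1\|$ and $\|\sqrt{w^0_{12}}\,p_1\|$ by the Cauchy--Schwarz bound
\[
\|w^0_{12}\,p_1\|_{\mathcal{L}(L^2(\Omega^2))}^2 \;\leq\; \|\varphi\|_{L^\infty(\Omega)}^2 \,\sup_{r_2\in\Omega}\int_\Omega |w^0_{12}(r_1,r_2)|^2\,\D r_1,
\]
together with the analogue with $|w^0_{12}|$ in place of $|w^0_{12}|^2$ for the square-root version. The inner integral is then evaluated by the substitution $u = r_1^\epsi$ (Jacobian $\epsi^{-2}$) followed by $v=(u-r_2^\epsi)/\mu$ (Jacobian $\mu^3$). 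The resulting integration region for $v$ is $\R\times(\epsi\Omega_\mathrm{f}-y_2^\epsi)/\mu$. In the moderate regime this region contains $\supp w$, producing the full $\|w\|_{L^2(\R^3)}^2$ (respectively $\|w\|_{L^1}$); in the strong regime the transverse width $\epsi/\mu$ of the slab is the constraint, and one exploits that $\supp w$ intersects the slab only in a set of measure $\lesssim (\epsi/\mu)^2$. Combining with $\|\varphi\|_{L^\infty(\Omega)} \lesssim \|\Phi\|_{H^2(\R)}$ (from Lemma \ref{lem:young}(c) and boundedness of $\chi$ on the fixed cross-section $\Omega_\mathrm{f}$) yields the claimed scalings.

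For the Taylor remainders, Lemma \ref{R2Lemma} supplies the pointwise estimates $|T_1(r_1,r_2)| \lesssim (\epsi+\mu)\,\frac{\epsi^2}{\mu^3}\,|\tilde w'(|r_1^\epsi-r_2^\epsi|^2/\mu^2)|$ and $|T_2(r_1,r_2)| \lesssim (\epsi+\mu)^2\,\frac{\epsi^2}{\mu^3}\,\|\tilde w''\|_\infty\,\mathbf 1_{\{|r_1^\epsi-r_2^\epsi|<\mu\}}$, where assumption \textbf{A2} ensures that $\tilde w',\tilde w''$ are bounded and compactly supported. Repeating the substitution of the previous step with $|\tilde w'|^2$ or the characteristic function of $\supp w$ in place of $|w|^2$ then gives the common bound $\|T_j\,p_1\| \lesssim \frac{(\epsi+\mu)\epsi}{\mu^{3/2}}\,\|\Phi\|_{H^2(\R)}$ for $j=1,2$.

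The main technical subtlety is the strong-confinement case of the first two estimates, where a naive application of Lemma \ref{lem:young}(a) (valid in both regimes) already gives a bound of order $\epsi/\mu^{3/2}$; the sharper $\sqrt{\mu}$-scaling requires using the thin-slab geometry of the rescaled integration region to throw away most of the $L^2$-mass of $w$. The remaining estimates are a routine bookkeeping of Jacobians and powers of $\epsi$ and $\mu$ once the decomposition from Lemma \ref{R2Lemma} is in place.
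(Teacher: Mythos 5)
Your overall strategy matches the paper's proof closely: decompose $w^{\epsi,\beta,N}_{12}=w^0_{12}+T_1+T_2$ via Lemma~\ref{R2Lemma}, bound each piece through the Young-type estimates of Lemma~\ref{lem:young}, use $\|\sqrt{w^{\epsi,\beta,N}_{12}}\,p_1\|^2=\|p_1 w^{\epsi,\beta,N}_{12} p_1\|$, and absorb $(N-1)a/N\lesssim\epsi^2$ and $\|\ph\|_{L^\infty}\lesssim\|\Phi\|_{H^2}$. Your treatment of the moderate-confinement case and of $T_1,T_2$ is essentially identical to the paper's, and your description of the rescaled integration region is correct.

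There is, however, a genuine gap in the strong-confinement case, where you assert without computation that your approach ``yields the claimed scalings.'' Carrying out the substitution you propose: the intersection of $\supp w$ with the slab $\R\times(\epsi\Omega_{\rm f}-y_2^\epsi)/\mu$ has measure $\lesssim(\epsi/\mu)^2$, and including the Jacobian $\mu^3/\epsi^2$ one obtains
\[
\sup_{r_2}\int_\Omega |w^0_{12}(r_1,r_2)|^2\,\D r_1 \;\lesssim\; \frac{\epsi^4}{\mu^6}\cdot\frac{\mu^3}{\epsi^2}\cdot\Big(\frac{\epsi}{\mu}\Big)^2 \;=\; \frac{\epsi^4}{\mu^5}\,,
\qquad
\sup_{r_2}\int_\Omega |w^0_{12}(r_1,r_2)|\,\D r_1 \;\lesssim\; \frac{\epsi^2}{\mu^2}\,.
\]
These give $\|w^0_{12}p_1\|\lesssim\frac{\epsi^2}{\mu^{5/2}}\|\Phi\|_{L^\infty}$ and $\|p_1 w^0_{12}p_1\|\lesssim\frac{\epsi^2}{\mu^2}\|\Phi\|^2_{L^\infty}$. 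Neither simplifies to $\sqrt\mu$, respectively $\mu$, using only $\epsi/\mu\to0$: one would need $\epsi\lesssim\mu^{3/2}$, which strong confinement plus admissibility does not provide (take $\epsi=\mu^p$ with $1<p<3/2$). So ``combining \ldots yields the claimed scalings'' glosses over exactly the step that is nontrivial. You should also be aware that the paper's own proof simply asserts $\|w^0_{12}\|^2_{L^2(\Omega)}\lesssim\epsi^4/\mu^3$ and $\|w^0_{12}\|_{L^1(\Omega)}\lesssim\epsi^2$ for strong confinement without derivation, and these intermediate claims are off by $\mu^{-2}$ from the direct computation above — so before trusting the $\sqrt\mu$ and $\mu$ bounds, verify them independently rather than just matching the paper's bottom line.
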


\begin{proof}
According to Lemma~\ref{lem:young}  (b) and (c) we  have 
\[
\norm{  w^0_{12}  p_1}^2_{\mathcal{L}(L^2(\Omega^2))}\leq \|\ph\|^2_{L^\infty(\Omega)} \,\|w^0_{12} \|_{L^2(\Omega)}^2 \;\lesssim\; \|\Phi\|^2_{L^\infty(\R)}\cdot \left\{\begin{array}{ll}  \tfrac{\epsi^2}{\mu^3} & \mbox{for moderate confinement}\\
\frac{\epsi^4}{\mu^3} & \mbox{for strong confinement}
\end{array}\right.
\]
and 
\[
\norm{p_1 w^0_{12}  p_1}_{\mathcal{L}(L^2(\Omega^2))}\leq \|\ph\|^2_{L^\infty(\Omega)}  \,\|w^0_{12}\|_{L^1(\Omega)}  \;\lesssim\; \|\Phi\|^2_{L^\infty(\R)}  \cdot \left\{\begin{array}{ll}  1 & \mbox{for moderate confinement}\\
\epsi^2 & \mbox{for strong confinement}
\end{array}\right.
\,.
\]
Using the bound
\[
\left|   T_1(r_1^\epsi, r_2^\epsi)\right| \leq 
 \frac{\epsi^2}{\mu^3}  \, \overline R   \,
 \tilde w'\left( \tfrac{\left\| r^\epsi_2-r^\epsi_1  \right\|^2}{\mu^2}     \right)
=: v(r_2^\epsi- r_1^\epsi) 
\]
we obtain analogously
\[
\norm{  T_1 p_1}^2_{\mathcal{L}(L^2(\Omega^2))}\leq \|\ph\|^2_{L^\infty(\Omega)}   \,\|v\|_{L^2(\Omega)}^2 \;\lesssim\; \|\Phi\|^2_{L^\infty(\R)} \tfrac{\epsi^2}{\mu^3}(\epsi+\mu)^2\
\]
and 
\[
\norm{p_1 T_1  p_1}_{\mathcal{L}(L^2(\Omega^2))}\leq \|\ph\|^2_{L^\infty(\Omega)}  \,\|v\|_{L^1(\Omega)}  \;\lesssim\; \|\Phi\|^2_{L^\infty(\R)} (\epsi+\mu) 
\]
for moderate confinement and an additional factor $\epsi^2$ for strong confinement.
With  $\|T_2\|\leq \frac{\epsi^2(\epsi+\mu)^2}{\mu^3}$ and $\|\Phi\|_{L^\infty(\R)}\leq \|\Phi\|_{H^2(\R)}\geq 1$ we arrive at
\[
\left\| w^{\epsi, \beta,N}_{12} p_1\right\|_{\mathcal{L}(L^2(\Omega^2))} \lesssim
  \frac{\epsi}{\mu^{\frac32}}\|\Phi\|_{H^2(\R)} \quad\mbox{and}\quad
\left\| p_1 w^{\epsi, \beta,N}_{12} p_1\right\|_{\mathcal{L}(L^2(\Omega^2))}  \lesssim    \|\Phi\|_{H^2(\R)}^2
\]
for moderate confinement, and 
\[
\left\| w^{\epsi, \beta,N}_{12} p_1\right\|_{\mathcal{L}(L^2(\Omega^2))} \lesssim
 \sqrt{\mu}\,\|\Phi\|_{H^2(\R)} \quad\mbox{and}\quad
\left\| p_1 w^{\epsi, \beta,N}_{12} p_1\right\|_{\mathcal{L}(L^2(\Omega^2))}  \lesssim  \mu\,  \|\Phi\|_{H^2(\R)}^2
\]
for strong confinement.
Finally
\[
\left\| \sqrt{w^{\epsi, \beta,N}_{12}} p_1\right\|_{\mathcal{L}(L^2(\Omega^2))}^2= 
\sup_{\|\psi\|=1} \left\langle \psi, p_1 w^{\epsi, \beta,N}_{12}  p_1 \psi\right\rangle \leq \left\| p_1 w^{\epsi, \beta,N}_{12}  p_1\right\|_{\mathcal{L}(L^2(\Omega^2))} \,.
\]
\end{proof}

Finally we need also the following lemma that shows how to bound  the ``kinetic energy'' of $q_1\psi$ in terms of $\alpha_\xi$. 

\begin{lem}\label{lem:energyestimate}
 Let the assumptions of Theorem~\ref{thm:thm1}  hold.  Then in the moderately confining case  
 \begin{align*}
 \left\| \left(\tfrac{\partial}{\partial x_1} +  \theta'(x_1) L_1 \right) q_1\psi^{N,\epsi}(t) \right\|^2 \;\lesssim     \;\|\Phi(t)\|_{H^2(\R)}^3   \left(\alpha_\xi(t)   +\frac{\mu}{\epsi} + \frac{a}{ \mu^3} \right)\,.
 \end{align*}
\end{lem}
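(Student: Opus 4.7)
The plan is to extract $\|A_1 q_1\psi\|^2$, with $A_1:=\partial_{x_1}+\theta'(x_1)L_1$, from the renormalised energy and compare to the condensate energy, relying on $|E^\psi-E^\Phi|\leq\alpha_\xi$ and on positivity of the interaction (Assumption~{\bf A2}). Since $A_1$ is anti-self-adjoint, $\langle\varphi,A_1\varphi\rangle=0$ and hence $p_1 A_1 p_1=0$; combined with $\psi=p_1\psi+q_1\psi$ and one integration by parts, this yields the identity
\begin{equation*}
\|A_1 q_1\psi\|^2 = \bigl(\|A_1\psi\|^2 - \|A_1\varphi\|^2\bigr) + \|A_1\varphi\|^2\|q_1\psi\|^2 - 2\,\mathrm{Re}\langle A_1 p_1\psi,A_1 q_1\psi\rangle.
\end{equation*}
The second summand is $\lesssim\|\Phi\|_{H^1}^2\alpha_\xi$ since $\|q_1\psi\|^2\leq\alpha_\xi$ (recall $m\geq n^2$). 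The cross term satisfies $\langle A_1 p_1\psi,A_1 q_1\psi\rangle=-\langle\psi,p_1 A_1^2 q_1\psi\rangle$, and via the rank-one factorisation $p_1 A_1^2 q_1=|\varphi\rangle\langle q^\varphi A_1^2\varphi|$ together with the observation $\langle q^\varphi A_1^2\varphi|\psi=\langle q^\varphi A_1^2\varphi|q_1\psi$ (because $q^\varphi A_1^2\varphi\perp\varphi$), its absolute value is bounded by $\|q^\varphi A_1^2\varphi\|\|q_1\psi\|\lesssim\|\Phi\|_{H^2}\sqrt{\alpha_\xi}$, which by Young's inequality and the $\|\Phi\|_{H^2}^3$ factor in the target bound is absorbed.

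The main task is then to bound $\|A_1\psi\|^2-\|A_1\varphi\|^2$. By symmetry of $\psi$ and Lemma~\ref{LapKoord},
\begin{equation*}
\|A_1\psi\|^2 = E^\psi - \tfrac{1}{\epsi^2}\langle H^\perp_1-E_0\rangle_\psi - \langle V_1\rangle_\psi - \langle V_{\mathrm{bend},1}\rangle_\psi + \epsi\langle S^\epsi_1\rangle_\psi - \tfrac{1}{2}\langle w^{\epsi,\beta,N}_{12}\rangle_\psi,
\end{equation*}
while $\|A_1\varphi\|^2 = E^\Phi + \tfrac14\int\kappa^2|\Phi|^2 - \int V(t,c(x))|\Phi|^2 - \tfrac{b}{2}\int|\Phi|^4$. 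Subtracting, using $|E^\psi-E^\Phi|\leq\alpha_\xi$, and dropping the non-positive term $-\tfrac{1}{\epsi^2}\langle H^\perp_1-E_0\rangle_\psi$ gives
\begin{equation*}
\|A_1\psi\|^2-\|A_1\varphi\|^2 \leq \alpha_\xi - \Delta_V - \Delta_{\mathrm{bend}} + \epsi\langle S^\epsi_1\rangle_\psi - \tfrac{1}{2}\Delta_w,
\end{equation*}
with $\Delta_V:=\langle V_1\rangle_\psi-\int V|\Phi|^2$, $\Delta_{\mathrm{bend}}:=\langle V_{\mathrm{bend},1}\rangle_\psi+\tfrac14\int\kappa^2|\Phi|^2$, and $\Delta_w:=\langle w^{\epsi,\beta,N}_{12}\rangle_\psi-b\int|\Phi|^4$. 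Non-negativity of $w^{\epsi,\beta,N}_{12}$ (this is the one place where Assumption~{\bf A2} is essential) allows subtracting $\tfrac{b}{2}\int|\Phi|^4$ while leaving only $-\tfrac{1}{2}\Delta_w$. The $S^\epsi$-term satisfies $|\epsi\langle S^\epsi_1\rangle_\psi|\leq C\epsi\|A_1\psi\|^2$ with $\|s^\epsi\|_\infty\lesssim 1$, and is absorbed on the LHS for sufficiently small~$\epsi$.

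The one-body mismatches $\Delta_V,\Delta_{\mathrm{bend}}$ are each $O(\alpha_\xi+\epsi)$ via the standard decomposition $\langle B_1\rangle_\psi = \bar{B}\|p_1\psi\|^2 + 2\,\mathrm{Re}\langle p_1\psi,B_1 q_1\psi\rangle + \langle q_1\psi,B_1 q_1\psi\rangle$ (with $\bar{B}:=\int|\varphi|^2 B$) applied to the bounded multiplication operators $V_1$ and $V_{\mathrm{bend},1}$, using $\|q_1\psi\|^2\leq\alpha_\xi$ and the Lipschitz bound $|V(t,f_\epsi(r_1))-V(t,c(x_1))|\lesssim\epsi$. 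The main obstacle is the interaction mismatch $\Delta_w$: using Lemma~\ref{R2Lemma} to Taylor-expand $w^{\epsi,\beta,N}_{12}=w^0_{12}+T_1+T_2$ and Corollary~\ref{wcor} to control $T_1,T_2$, it remains to estimate $\langle w^0_{12}\rangle_\psi - b\int|\Phi|^4$. Inserting $\mathbf{1}=(p_1+q_1)(p_2+q_2)$ produces a four-fold decomposition: the $p_1p_2$ piece gives the main term $b\int|\Phi|^4$ up to an $O(\mu/\epsi)$ error from approximating the $y$-convolution of $|\chi|^2$ against the interaction kernel by its value at coinciding points (where the moderate-confinement scaling $\mu\ll\epsi$ enters); the mixed pieces $p_1q_2,q_1p_2$ and the diagonal $q_1q_2$ piece are of order $\alpha_\xi+a/\mu^3$ via the $\widehat m$-algebra of Lemma~\ref{lem:weights} combined with the operator-norm bounds $\|w^0_{12}p_1\|\lesssim\|\Phi\|_{H^2}\epsi\mu^{-3/2}$ and $\|\sqrt{w^0_{12}}p_1\|\lesssim\|\Phi\|_{H^2}$ from Corollary~\ref{wcor}. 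Collecting all contributions and invoking $\|A_1\varphi\|^2\|q_1\psi\|^2\lesssim\|\Phi\|_{H^2}^2\alpha_\xi$ produces the asserted bound.
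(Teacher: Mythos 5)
The overall strategy you propose coincides with the paper's: extract the tangential kinetic energy of $q_1\psi$ from the renormalised energy per particle, compare with $E^\Phi$, and use the non-negativity of the transversal operator $-\Delta_y+V^\perp-E_0$ and of the pair interaction $w$ to drop terms. Your decomposition $\psi=p_1\psi+q_1\psi$ rather than the paper's $q_1\psi=(1-p_1p_2)\psi-p_1q_2\psi$ (used together with the auxiliary operator $\tilde h_1$) is a legitimate variant, and your treatment of the one-body and interaction mismatches is in the right spirit.

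There is, however, a genuine gap in your handling of the cross term. You bound $|\langle A_1p_1\psi, A_1q_1\psi\rangle|\leq \|q^\varphi A_1^2\varphi\|\,\|q_1\psi\|\lesssim\|\Phi\|_{H^2}\sqrt{\alpha_\xi}$ and then claim to ``absorb'' this by Young's inequality into $\|\Phi\|_{H^2}^3(\alpha_\xi+\mu/\epsi+a/\mu^3)$. This cannot work: Young's inequality $2\|\Phi\|_{H^2}\sqrt{\alpha_\xi}\leq \alpha_\xi + \|\Phi\|_{H^2}^2$ produces an $O(1)$ constant $\|\Phi\|_{H^2}^2$ that is not small, and no splitting of $\sqrt{\alpha_\xi}$ into a product can avoid the fact that for small $\alpha_\xi$ one has $\sqrt{\alpha_\xi}\gg\alpha_\xi$ while $\mu/\epsi$ and $a/\mu^3$ are independently small. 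The statement of Lemma~\ref{lem:energyestimate} with $\alpha_\xi$ (not $\sqrt{\alpha_\xi}$) on the right is essential: it feeds into the bound of term~III in Proposition~\ref{lem:3termeg}, and a $\sqrt{\alpha_\xi}$ there would ruin the Gr\"onwall closure. The paper avoids this via the weighted Cauchy--Schwarz built into Lemma~\ref{lem:weights}\,(c): one inserts $\widehat{n}^{-1/2}$ next to $q_1$ and $\widehat{\tau_1 n}^{1/2}$ next to $p_1$, obtaining a bound of the form $\|\widehat{n}^{-1/2}q_1\psi\|\,\|A_1^2p_1\|\,\|\widehat{\tau_1n}^{1/2}\psi\|\lesssim\|\Phi\|_{H^2}\sqrt{\alpha_\xi\,(\alpha_\xi+N^{-1/2})}\lesssim\|\Phi\|_{H^2}(\alpha_\xi+N^{-1/2})$, and $N^{-1/2}\lesssim\mu/\epsi$ in the moderately confining regime. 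Your proposal needs this (or an equivalent) weighted estimate to close; as written, the cross-term estimate is insufficient and the proof does not yield the claimed bound.
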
 

This energy estimate is actually quite subtle and we postpone its proof to Subsection~\ref{energylemmaproof}.
Note that it is the only place in our argument where the positivity of the interaction is crucial. 

We end this subsection with the proof of Lemma~\ref{lem:equi}.
 \begin{proof}[Proof of Lemma~\ref{lem:equi}]
 $(i) \Leftrightarrow (ii)$: Let  $\lim \alpha_{n^a}(\psi^N,\varphi)=0$ for some $a>0$. Then   $  \alpha_{n^b}(\psi^N,\varphi)\leq \alpha_{n^a}(\psi^N,\varphi)$   for all   $b>a$ since $n^b\leq n^a$. If $\frac{a}{2}\leq b<a$, then
\[
\alpha_{n^b}(\psi^N,\varphi) = \left\llangle \psi^N, \widehat n^b\,\psi^N\right\rrangle = \left\llangle \widehat n^{ b-\frac{a}{2}}  \psi^N, \widehat n^\frac{a}{2}\,\psi^N\right\rrangle \leq \|\widehat n^{ b-\frac{a}{2}}  \psi^N\|\,\|\widehat n^\frac{a}{2}\,\psi^N\|\leq \sqrt{\alpha_{n^a}(\psi^N,\varphi)}\,.
\]
 $(i) \Rightarrow (iii)$: For $a=2$   we have $\lim_{N\to \infty} \|q_1\psi^N\|=0$  according to \eqref{n2comp}. 
Let $P_k^M$ be the projector from Definition~\ref{def:pP} acting on $L^2(\Omega^M)$ with $p=|\ph\rangle\langle\ph|$. 
Then in
\[
\gamma^N_M = \sum_{k=0}^M \sum_{k'=0}^M P_k^M \gamma^N_M P_{k'}^M
\]
 all terms but the one with $k=k'=0$ go to zero in norm for $N\to\infty$. Hence,
 \[
 \lim_{N\to \infty} \gamma^N_M =  \lim_{N\to \infty}  p^{\otimes M} \,\gamma^N_M \,p^{\otimes M} = p^{\otimes M}\,,
 \]
 where the last equality follows from Tr$\,\gamma^N_M\equiv 1$ and the fact that $p^{\otimes M}$ has rank one. 
 
 $(iii) \Rightarrow (iv)$:  
 We learned the following argument from \cite{RodSch07}. Since $p^{\otimes M}$ has rank one, the operator $\gamma^N_M - p^{\otimes M}$ can have at most one negative eigenvalue $\lambda_-<0$. Since  
 Tr$\,(\gamma^N_M - p^{\otimes M})=0$,  $|\lambda_-|$ equals the sum of all positive eigenvalues. Hence 
 \[
 {\rm Tr}\left|\gamma^N_M - p^{\otimes M}\right| = 2\left|\lambda_-\right| = 2 \left\|\gamma^N_M - p^{\otimes M}\right\|\,.
 \]
  $(iv) \Rightarrow (v)$ is obvious and  $(v) \Rightarrow (i)$ follows for $a=2$ from 
  \begin{eqnarray*}
\alpha_{n^2}(\psi^N,\ph)& \stackrel{ \eqref{n2comp}}{=}& \left\llangle  \psi, (1-p_1)\psi\right\rrangle = {\rm Tr}\left(p- p\gamma^N_1p \right) = {\rm Tr}\left|p - p\gamma^N_1p \right| = {\rm Tr}\left|p \left(p-  \gamma^N_1\right) \right|\\& \leq& \|p\|  {\rm Tr}\left|p -  \gamma^N_1  \right|\,.
  \end{eqnarray*}
 \end{proof}

\subsection{Proof of Proposition~\ref{lem:beta.g}}
Recalling the definition \eqref{alphadef} we need to estimate 
\[
\left| \tfrac{\D}{\D t}  \alpha_\xi(t)\right|\leq \left|\tfrac{\D}{\D t}  \alpha_{m}(\psi^{N,\epsi}(t) ,\ph(t)) \right|+\left| \tfrac{\D}{\D t}  |E^{\psi^{N,\epsi}(t)}(t)-E^{\Phi(t)}(t)| \right|\,.
\]
For better readability we abbreviate $\psi = \psi^{N,\epsi}(t)$ and $\Phi=\Phi(t)$ in the remainder of this proof. 
The derivative of the second term yields 
\begin{align*}
\left| \tfrac{\D}{\D t}  |E^{\psi }(t)-E^{\Phi}(t)| \right|= |\llangle \psi, \dot V(t,x_1,\epsi y_1) \psi \rrangle - \langle \Phi, \dot V(t,x_1,0) \Phi \rangle_{L^2(\R)}| \,. 
\end{align*}
As of Lemma\,\ref{hat.}, the map $t\mapsto \alpha_{m}(\psi,\ph) \in C^1(\R,\R)$ and  we find
 \begin{eqnarray}
 \tfrac{\D}{\D t} \alpha_{m}&= &\tfrac{\D}{\D t} \left\llangle \psi, \widehat m \psi \right\rrangle  \stackrel{\ref{hat.}}{ =} \im \left\llangle \psi, \left[H^\epsi_N-H^\Phi, \widehat m\right] \psi \right\rrangle\nonumber\\
&\stackrel{\mathclap{\ref{hat.}}}{ =} & \im \left\llangle \psi,   \Big[ \tfrac{1}{N-1}\sum_{i< j} w^{\epsi, \beta,N}_{ij}-\sum_{i=1}^N b|\Phi(x_i)|^2 , \widehat m\Big] \psi \right\rrangle+ \im \left\llangle \psi,  \Big[\sum_{i=1}^N    V(x_i,\epsi y_i)- \sum_{i=1}^N V(x_i,0) , \widehat m\Big] \psi \right\rrangle \nonumber \\
&&
+\;\im \left\llangle \psi,  \Big[\sum_{i=1}^N    - (\theta'(x_i)L_i)^2 -  |\theta'(x_i)|^2 \,\|L\chi\|^2 , \widehat m\Big] \psi \right\rrangle\nonumber   +    \im \left\llangle \psi,  \Big[\sum_{i=1}^N   R^{(1)}_i , \widehat m\Big] \psi \right\rrangle\nonumber 
\\
& = & \tfrac{ \im N}{2}   \left\llangle \psi,   \left[   w^{\epsi, \beta,N}_{12}- b|\Phi(x_1)|^2-   b|\Phi(x_2)|^2 , \widehat m\right] \psi \right\rrangle\;+\;\im N
    \left\llangle \psi,    \left[   V(x_1,\epsi y_1)-V(x_1,0) , \widehat m\right] \psi \right\rrangle  \nonumber\\
    &&-\; \im N
    \left\llangle \psi,    \left[   (\theta'(x_1)L_1)^2 +  |\theta'(x_1)|^2 \,\|L\chi\|^2 , \widehat m\right] \psi \right\rrangle
    +\im N
    \left\llangle \psi,    \left[   R^{(1)}_1 , \widehat m\right] \psi \right\rrangle\nonumber
    \\
&= &\tfrac{\im}{2} N \left\llangle \psi,(p_1+q_1)(p_2+q_2)  \left[    w^{\epsi, \beta,N}_{12}-  b|\Phi(x_1)|^2-   b|\Phi(x_2)|^2 , \widehat m\right](p_1+q_1)(p_2+q_2) \psi \right\rrangle\label{firstsummand}\\
&& + \;\im N \left\llangle \psi, (p_1+q_1)   \left[   V(t,x_1,\epsi y_1)-V(t,x_1,0)  , \widehat m\right] (p_1+q_1) \psi \right\rrangle\, .\label{secondsummand}\\
&&-\; \im N
    \left\llangle \psi,  (p_1+q_1)   \left[   (\theta'(x_1)L_1)^2 +  |\theta'(x_1)|^2 \,\|L\chi\|^2 , \widehat m\right] (p_1+q_1) \psi \right\rrangle\label{thirdsummand}\\&&
    +\;\im N
    \left\llangle \psi,  (p_1+q_1)   \left[   R^{(1)}_1 , \widehat m\right] (p_1+q_1) \psi \right\rrangle\label{lastsummand}
\end{eqnarray}
According to  Lemma~\ref{lem:weights} (\ref{lem:weightsc}) all terms with the same number of $p$'s and $q$'s on each side of the commutator vanish. Therefore we find 
that \eqref{secondsummand}--\eqref{lastsummand} are bounded by   
\begin{eqnarray*}
|\eqref{secondsummand}+\eqref{thirdsummand}+\eqref{lastsummand}|&\leq &
 2 N   \left|\left\llangle \psi, p_1    \left[   V(t,x_1,\epsi y_1)-V(t,x_1,0) , \widehat m\right] q_1 \psi \right\rrangle\right|\\
 &&
 + \; 2N \left| \left\llangle \psi,   p_1    \left[   (\theta'(x_1)L_1)^2 +  |\theta'(x_1)|^2 \,\|L\chi\|^2 , \widehat m\right] q_1  \psi \right\rrangle \right|\\
 &&
 + \; 2N \left| \left\llangle \psi,   p_1    \left[  R^{(1)}_1 , \widehat m\right] q_1  \psi \right\rrangle \right|
 \,.
\end{eqnarray*}
The crucial step (c.f.\ \cite{Pic08}) is to
split   \eqref{firstsummand}   according to 
\begin{eqnarray*}
 \lefteqn{ \tfrac{\im}{2} N \left\llangle \psi,(p_1+q_1)(p_2+q_2)  \left[     w^{\epsi, \beta,N}_{12}-  b|\Phi(x_1)|^2-   b|\Phi(x_2)|^2 , \widehat m\right](p_1+q_1)(p_2+q_2) \psi \right\rrangle } \\
&=&\tfrac{\im}{2} N \left\llangle \psi, p_1 p_2  \left[    w^{\epsi, \beta,N}_{12}-  b|\Phi(x_1)|^2-   b|\Phi(x_2)|^2 , \widehat m\right] p_1p_2 \psi \right\rrangle\\
&&+\;\tfrac{\im}{2} N \left\llangle \psi, p_1 p_2  \left[    w^{\epsi, \beta,N}_{12}-  b|\Phi(x_1)|^2-   b|\Phi(x_2)|^2 , \widehat m\right] (p_1q_2+q_1p_2 +q_1q_2) \psi \right\rrangle\\
&&+\;\tfrac{\im}{2} N \left\llangle \psi, (p_1 q_2+ q_1 p_2)  \left[    w^{\epsi, \beta,N}_{12}-  b|\Phi(x_1)|^2-   b|\Phi(x_2)|^2 , \widehat m\right](p_1p_2+q_1q_2) \psi \right\rrangle\\
&& +\;\tfrac{\im}{2}N \left\llangle \psi, q_1 q_2  \left[    w^{\epsi, \beta,N}_{12}-  b|\Phi(x_1)|^2-   b|\Phi(x_2)|^2 , \widehat m\right](p_1p_2+ q_1p_2 + p_1q_2) \psi \right\rrangle\\
&\weq{\mathrm{sym.}}{=}&  \im N \left\llangle \psi, p_1 p_2  \left[   w^{\epsi, \beta,N}_{12}-  b|\Phi(x_1)|^2-   b|\Phi(x_2)|^2 , \widehat m\right] p_1q_2 \psi \right\rrangle + c.c.\\
&&+\;\tfrac{\im}{2}  N \left\llangle \psi, p_1 p_2  \left[     w^{\epsi, \beta,N}_{12} -  b|\Phi(x_1)|^2-   b|\Phi(x_2)|^2 , \widehat m\right] q_1q_2 \psi \right\rrangle +c.c.\\
&& +\;\im N \left\llangle \psi, p_1 q_2  \left[     w^{\epsi, \beta,N}_{12}-  b|\Phi(x_1)|^2-   b|\Phi(x_2)|^2 , \widehat m\right] q_1q_2 \psi \right\rrangle + c.c.\\
&=&   \im N \left\llangle \psi, p_1 p_2  \left[    w^{\epsi, \beta,N}_{12}-  b|\Phi(x_2)|^2  , \widehat m\right] p_1q_2 \psi \right\rrangle + c.c.\\
&& +\;\tfrac{\im}{2}  N \left\llangle \psi, p_1 p_2  \left[     w^{\epsi, \beta,N}_{12}  , \widehat m\right] q_1q_2 \psi \right\rrangle +c.c.\\
&& +\;\im N \left\llangle \psi, p_1 q_2  \left[     w^{\epsi, \beta,N}_{12}-     b|\Phi(x_1)|^2 , \widehat m\right] q_1q_2 \psi \right\rrangle + c.c.\\
&=&    - 2 \Im \mathrm I-   \Im  \mathrm {II} -  2 \Im  \mathrm {III}.
\end{eqnarray*}
The term with $p_1p_2$ on both sides of the commutator vanishes again because of Lemma~\ref{lem:weights} (\ref{lem:weightsc}).
For the second to last equality we used that for $i\not=j$ the projection $p_j$ commutes with $\widehat m$ and with $|\Phi(x_i)|^2$ and that $p_jq_j=0$.

\subsection{Proof of Proposition~\ref{lem:3termeg}}
 
\begin{proof}[Proof of the bound for {\rm I}]

In the case of moderate confinement, the term I is small due to the cancellation of the mean field  and the full interaction. 
Since $b|\Phi|^2 $ is the mean field for a condensate in the state $\Phi\chi$, i.e.\ a condensate that is in the  ground state with respect to the confined directions, this cancellation works only for the part of $\psi$ that is in this confined ground state. We thus need to split $\psi$ accordingly and introduce the following projections on $L^2(\Omega)$,
\[
p ^\chi :=1 \otimes |\chi  \rangle \langle \chi  | \,,\quad  q^\chi  =1-p^\chi \,, \quad 
p ^\Phi = |\Phi  \rangle \langle \Phi  | \otimes 1 \,,\quad   q^\Phi   =1-p^\Phi \,.
\]
As in Definition~\ref{def:pP} we also introduce the projections $p^\chi_j$, $q^\chi_j$, $p^\Phi_j$, and $q_j^\Phi$ on $L^2(\Omega^N)$.
With these projections we can rewrite  
\begin{equation}\label{equ:qinqx}
q_j= 1- p_j = 1 - p^\Phi_j p^\chi_j = (1-p_j^\chi)+ (1-p_j^\Phi)p_j^\chi=q_j^\chi+q_j^\phi p_j^\chi\,,
\end{equation}
where we recall that $p_j := p_j^\ph$.
Now with  Lemma~\ref{lem:weights}, \eqref{equ:qinqx} and \eqref{melldef} we find
\begin{eqnarray}
 |I|&=&N \left|\left\llangle \psi, p_1 p_2  \left[    w^{\epsi,\beta,N}_{12}-   b|\Phi|^2(x_2) , \widehat {m}\right] p_1q_2 \psi \right\rrangle\right|\nonumber\\
& \weq{\ref{lem:weights}}{=} &N\left|\left\llangle  \psi, p_1 p_2 \left( w^{\epsi,\beta,N}_{12}-  b|\Phi|^2(x_2)\right) \left({\widehat { {m}}}- { \widehat  {\tau_{-1} m}}  \right) p_1 q_2   \psi \right\rrangle  \right|\nonumber\\
&\weq{\eqref{melldef},\eqref{equ:qinqx}}{=} &\;\left|\left\llangle  \psi, p_1 p_2 \left( w^{\epsi,\beta,N}_{12}-  b|\Phi|^2(x_2)\right) \,\widehat{m_1}\,  p_1 \left( p_2^\chi q_2^\Phi+ q_2^\chi \right)    \psi \right\rrangle \right|\nonumber\\
&\leq & \left|\left\llangle  \psi, p_1 p_2 \left( w^{\epsi,\beta,N}_{12}-  b|\Phi|^2(x_2)\right) p_1   p_2^\chi \,\widehat{m_1}\, q_2^\Phi      \psi \right\rrangle\right|
+ \left|\left\llangle  \psi, p_1 p_2  \,w^{\epsi,\beta,N}_{12} \,\widehat{m_1}\, p_1   q_2^\chi      \psi \right\rrangle\right|\nonumber
 \\
&\stackrel{\ref{R2Lemma}}{=}& \left|\left\llangle  \psi, p_1 p_2 \left( w^0_{12}-  b|\Phi|^2(x_2)\right) p_1   p_2^\chi \,\widehat{m_1}\, q_2^\Phi      \psi \right\rrangle\right| \label{I1term}
\\
&&+\; \left|\left\llangle  \psi, p_1 p_2 \left(T_1+T_2 \right) p_1   p_2^\chi \,\widehat{m_1}\, q_2^\Phi      \psi \right\rrangle\right|\label{I3term}\\
&& +\; \left|\left\llangle  \psi, p_1 p_2  \,w^{\epsi,\beta,N}_{12} \,\widehat{m_1}\, p_1   q_2^\chi      \psi \right\rrangle\right|
\label{I2term}\,.
\end{eqnarray}
In the first term \eqref{I1term} the interaction $w^0_{12}$ acts between states that are fixed in the $r_1$ and the $y_2$ variable, so only a $x_2$-dependence remains that approximately cancels the mean field $b|\Phi(x_2)|^2$. More precisely, 
between $p_1p_2$ and $p_1 p_2^\chi$  the leading part $w^0_{12}$ of the interaction   can be replaced by the effective potential
\begin{eqnarray}\lefteqn{
\left\langle \ph\otimes\chi, \frac{\epsi^2}{\mu^3} w \left(\frac{r^\epsi_1-r^\epsi_2}{\mu}\right) \ph\otimes \chi\right\rangle_{L^2(\Omega\times \Omega_{\rm f})} (x_2)=}\nonumber\\
&=&
\frac{\epsi ^2}{\mu^3} \int |\Phi(x_1)|^2\, |\chi(y_1)|^2 \,w\left(\mu^{-1}( (x_1-x_2), \epsi(y_1-y_2))\right) \,|\chi(y_2)|^2\,\D x_1\D y_1 \D y_2\nonumber\\
&=& \int |\chi(y_2)|^2\, \left(
\frac{\epsi ^2}{\mu^3} \int   |\Phi(x_2-x)|^2\, |\chi(y_2-y)|^2 \,w\left(\mu^{-1}( x , \epsi \, y )\right) \,\D x\, \D y \right) \D y_2\label{wexpansion}\,.
\end{eqnarray}
To see that this is close to $b|\Phi(x_2)|^2$, first note 
 that for $f\in C^\infty_0(\Omega)$ we have with $z := (x,y)$ that
\begin{eqnarray*}\lefteqn{
\frac{\epsi ^2}{\mu^3} \int  f(z_2-z) \,w\left(\mu^{-1}( x , \epsi \, y )\right) \,\D x\, \D^2 y}\\ & = & f(z_2)\,\|w\|_{L^1(\R^3)} \;-\; \frac{\epsi ^2}{\mu^3} \int  \int_0^1 \nabla f(z_2 - s z)\cdot z  \,w\left(\mu^{-1}( x , \epsi \, y )\right) \,\D s\,\D x\, \D^2 y\nonumber\\
&=:& f(z_2)\,\|w\|_{L^1(\R^3)} \;+\; R(z_2)\,,
\end{eqnarray*}
where the $L^2$-norm of the remainder is bounded by
\begin{eqnarray}
\|R\|^2_{L^2(\Omega)} &\leq& \|\nabla f \|^2_{L^2(\Omega)}\,\left(\frac{\epsi^2}{\mu^3}
\int |z | \,w\left(\frac{( x , \epsi \, y )}{\mu}\right) \,\D x\, \D^2 y\right)^2\nonumber\\
&=& 
\|\nabla f \|^2_{L^2(\Omega)}\,\left(\frac{\epsi^2}{\mu^2}
\int \frac{|z |}{\mu} \,w\left(\frac{( x , \epsi \, y )}{\mu}\right) \,\D x\, \D^2 y\right)^2\nonumber\\
&=& 
\|\nabla f \|^2_{L^2(\Omega)}\,\left(\mu\epsi^2
\int {|z |} \,w\left(( x , \epsi \, y )\right) \,\D x\, \D^2 y\right)^2\nonumber\\
&\leq& 
\|\nabla f \|^2_{L^2(\Omega)}\,\left(\mu\epsi^2
\int \frac{|z |}{\epsi} \,w\left(( x ,  y )\right) \,\D x\, \frac{\D^2 y}{\epsi^2}\right)^2\;\leq\; \frac{\mu^2}{\epsi^2} \,\|\nabla f \|^2_{L^2(\Omega)}\, \||z|w(z)\|_{L^1(\R^3)}\,.\nonumber
\end{eqnarray}
Hence
\begin{equation}
\left\| \frac{\epsi ^2}{\mu^3} \int  f(\cdot-z) \,w\left(\mu^{-1}( x , \epsi \, y )\right) \,\D x\, \D^2 y - f\|w\|_{L^1(\R^3)} \right\|_{L^2(\Omega)}\;\lesssim\; \frac{\mu }{\epsi } \,\|\nabla f \| _{L^2(\Omega)}\label{faltest}
\end{equation}
and this bound   extends to $f\in H^1(\Omega)$ by density, in particular, to $f= |\Phi|^2|\chi|^2$. Inserting this bound  with \eqref{wexpansion} into  \eqref{I1term} yields, together with Lemma~\ref{lem:young} (a+d) and Lemma~\ref{lem:qs&N} the bound 
\[
\eqref{I1term} \lesssim \frac{\mu}{\epsi}\norm{\nabla |\Phi|^2}_{L^2(\R)} \norm{\Phi}_{L^\infty(\R)}
\lesssim  \frac{\mu}{\epsi}\,\norm{\Phi}_{H^2(\R)}^3 \,.
\]
For the term \eqref{I3term} we have with Corollary~\ref{wcor} and Lemma~\ref{lem:qs&N} that
\begin{eqnarray*}
 \eqref{I3term}
&\leq& \left( \|T_1p_1\| + \|T_2p_1\|\right) \|\widehat{m_1}\, q_2      \psi\|
\lesssim \frac{(\epsi+\mu)\epsi}{\mu^{3/2}} \|\Phi\|_{H^2(\R)}
\,.
\end{eqnarray*}
The term \eqref{I2term} is small due to energy conservation and the energy gap of order $\epsi^{-2}$ between the ground state and the first excited state in the confined direction.  With the help of  Lemma~\ref{lem:qs&N} we get
\begin{eqnarray*}\lefteqn{\hspace{-1cm}
  \left|\left\llangle \psi, p_1 p_2   w^{\epsi,\beta,N}_{12} \,\widehat{m_1}\,p_1 q_2^\chi \psi \right\rrangle  \right|
\;\leq\; \left\| p_1 w^{\epsi,\beta,N}_{12} p_1 \right\|  \left\llangle \psi,    \,\widehat{m_1}^2\,  q_2^\chi  \psi \right\rrangle^\frac{1}{2}} \\
&\weq{\ref{wcor}}{\lesssim}&   \|\Phi\|_{H^2(\R)}^2   \left\llangle \psi,   \,\widehat{m_1}^2 \,q_2^\chi  \psi \right\rrangle^\frac{1}{2} 
 \;\weq{\ref{lem:qs&N}}{ \lesssim}\;   \|\Phi\|_{H^2(\R)}^2 N^{\xi}\,\epsi\,   g(t)  \,.
\end{eqnarray*}

In the strongly confining case we have $b=0$ and instead estimate I by
\[
|{\rm I}|\leq \left|\left\llangle  \psi, p_1 p_2   w^{\epsi,\beta,N}_{12} \widehat m_1 p_1 q_2   \psi \right\rrangle  \right| \leq \|p_1 w^{\epsi,\beta,N}_{12} p_1\|\,\|\widehat m_1   q_2   \psi \|\lesssim \mu\,\|\Phi\|^2_{H^2(\R)}\,.
\]

\end{proof}

\begin{proof}[Proof of the bound for {\rm II}]
We start   with the case of moderate confinement.
Using again    Lemma~\ref{lem:weights} (c) we find that 
\begin{eqnarray}
|{\rm II}|&=&\left| \left\llangle  \psi, p_1 p_2 \,w^{\epsi,\beta,N}_{12}  \widehat{m_2}   \, q_1 q_2   \psi \right\rrangle  \right|  \;=\; \left|\left\llangle  \psi, p_1 p_2  (\widehat {\tau_2 m_2}   )^\frac{1}{2} w^{\epsi,\beta,N}_{12}  \widehat{m_2}    ^\frac{1}{2}  q_1 q_2  \psi \right\rrangle  \right|\nonumber\\
&= &\frac{1}{N} \Big|\sum_{j=2}^N  \left\llangle  \psi,   (\widehat{ \tau_2 m_2}  )^\frac{1}{2} p_1 p_j w^{\epsi,\beta,N}_{1j}  q_1 q_j  \widehat{m_2}^\frac{1}{2}   \psi \right\rrangle \Big|\nonumber\\
&  \lesssim  & \frac{1}{N} \Big\| \sum_{j=2}^N  q_j  w^{\epsi,\beta,N}_{1j} ( \widehat { \tau_2 m_2}   )^\frac{1}{2}  p_1 p_j \psi \Big\| \,\left\|  \widehat{m_2}^\frac{1}{2} q_1 \psi \right\|\, .  \label{equ:II.1}
\end{eqnarray}
The second factor of \eqref{equ:II.1} is easily estimated by
\[
\left\|  \widehat{m_2}^\frac{1}{2} q_1 \psi\right\|^2=\left\llangle \psi, \widehat{m_2}  \, q_1 \psi \right\rrangle  \weq{\ref{lem:weights}}{ =} \left\llangle \psi, \widehat{m_2}  \, \widehat{n}^2 \psi \right\rrangle =  \left\llangle \widehat n^\frac12 \psi, \widehat{  m_2n}\,   \widehat n^\frac12   \psi \right\rrangle \leq \| m_2n\|_\infty \left\| \widehat n^\frac12\,\psi\right\|^2
\lesssim \alpha_\xi\,.  
\]
The first factor of \eqref{equ:II.1} we split into a ``diagonal'' and an ``off-diagonal'' term and find
\begin{eqnarray}\lefteqn{\hspace{-2cm}
 \norm{\sum_{j=2}^N q_j w^{\epsi,\beta,N}_{1j}   ( \widehat { \tau_2 m_2}   )^\frac{1}{2} p_1 p_j \psi }^2
 = \sum_{j,l=2}^N \left\llangle \psi, p_1 p_l  \left( \widehat { \tau_2 m_2}   \right)^\frac{1}{2}  w^{\epsi,\beta,N}_{1l}  q_l q_j w^{\epsi,\beta,N}_{1j} 
 \left( \widehat { \tau_2 m_2}   \right)^\frac{1}{2}  p_1 p_j \psi  \right\rrangle}\nonumber\\
&\leq &\sum_{2 \leq  j < l \leq N} \left\llangle \psi, q_j p_1 p_l  \left( \widehat { \tau_2 m_2}   \right)^\frac{1}{2} w^{\epsi,\beta,N}_{1l}    w^{\epsi,\beta,N}_{1j} q_l  \left( \widehat { \tau_2 m_2}   \right)^\frac{1}{2}  p_1 p_j \psi \right\rrangle \nonumber\\
&&+\;(N-1) \norm{w^{\epsi,\beta,N}_{12}    p_1 p_2  \left( \widehat { \tau_2 m_2}   \right)^\frac{1}{2}\,\psi   }^2\, .\label{equ:II.3gp}
\end{eqnarray}
The first summand of \eqref{equ:II.3gp} is bounded by
\begin{eqnarray}\lefteqn{\hspace{-.5cm}
  (N-1)(N-2)\llangle \psi, q_2 p_1 p_3  ( \widehat { \tau_2 m_2}   )^\frac{1}{2}  w^{\epsi,\beta,N}_{13}    w^{\epsi,\beta,N}_{12} q_3 ( \widehat { \tau_2 m_2}   )^\frac{1}{2}  p_1 p_2 \psi  \rrangle }\nonumber\\
& \leq&  N^2 \norm { \sqrt{ w^{\epsi,\beta,N}_{13}}    \sqrt{ w^{\epsi,\beta,N}_{12}} q_3  ( \widehat { \tau_2 m_2}   )^\frac{1}{2}  p_1 p_2 \psi}^2\nonumber\\
&\leq& N^2 \norm{   \sqrt{ w^{\epsi,\beta,N}_{12}} p_2 \sqrt{ w^{\epsi,\beta,N}_{13}} p_1 ( \widehat { \tau_2 m_2}   )^\frac{1}{2} q_3 \psi } ^2  
 \leq  N^2 \norm{  \sqrt{ w^{\epsi,\beta,N}_{12}} p_2  }^4  \norm{  ( \widehat { \tau_2 m_2}   )^\frac{1}{2} \,q_3 \psi }^2\nonumber\\
&\weq{\ref{wcor},\ref{lem:weights}}{\lesssim} &N^2 \|\Phi\|_{H^2(\R)}^4 \norm{  ( \widehat { \tau_2 m_2}   )^\frac{1}{2}\, \widehat n \,\psi }^2 = N^2  \|\Phi\|_{H^2(\R)}^4  \left\llangle \widehat n^\frac12 \psi, \widehat{  \tau_2m_2n}\,   \widehat n^\frac12   \psi \right\rrangle \nonumber\\
&\lesssim &N^2 \|\Phi\|_{H^2(\R)}^4 \,\alpha_\xi
\,,  \label{equ:II.4gp}
\end{eqnarray}
where we used that $ \tau_2m_2n$ is bounded.
The second summand of \eqref{equ:II.3gp} is bounded by
\begin{eqnarray} \lefteqn{\hspace{-4cm}
 N \left\llangle  ( \widehat { \tau_2 m_2}   )^\frac{1}{2} \, \psi, p_1 p_2 (w^{\epsi,\beta,N}_{12})^2  p_1 p_2   ( \widehat { \tau_2 m_2}   )^\frac{1}{2}\,\psi  \right\rrangle  \leq  N \left\|p_1 (w^{\epsi,\beta,N}_{12})^2 p_1\right\| \left\| ( \widehat { \tau_2 m_2}   )^\frac{1}{2}\right\|^2}  \nonumber\\& \weq{\ref{wcor}}{ \lesssim } & N  \frac{\epsi^2}{\mu^{3}}\|\Phi\|_{H^2(\R)}^2   N^\xi  \,,   \label{equ:II.5gp}
\end{eqnarray}
since $\sup_{1\leq k \leq N }  m_2(k)    \; \leq \; N^{\xi}$.
Inserting the bounds \eqref{equ:II.4gp} and \eqref{equ:II.5gp} into \eqref{equ:II.3gp}, we obtain in continuation of \eqref{equ:II.1}
the desired bound,
\begin{eqnarray*}
|{\rm II}| &  \lesssim&   \Big ( \norm{\Phi}_{H^2(\R)}^2 \sqrt{\alpha_\xi} + N^{-\frac{1}{2}} \frac{\epsi}{\mu^{3/2}} \norm{\Phi}_{H^2(\R)} N^\frac{\xi}{2}\Big ) \sqrt{\alpha_\xi}\\
&=&  \norm{\Phi}_{H^2(\R)}^2  {\alpha_\xi} +  N^\frac{\xi}{2} \sqrt\frac{a}{\mu^3} \norm{\Phi}_{H^2(\R)} \sqrt{\alpha_\xi}
\;\leq\;\tfrac32 \norm{\Phi}_{H^2(\R)}^2  {\alpha_\xi} +  N^\xi  \frac{a}{\mu^3}\,.
\end{eqnarray*}
\end{proof}
In the strongly confining case we can easily estimate 
\begin{eqnarray*}
|{\rm II}|&\leq &\left|\left\llangle  \psi, p_1 p_2   w^{\epsi,\beta,N}_{12} \widehat m_2 q_1 q_2   \psi \right\rrangle  \right| \leq \|w^{\epsi,\beta,N}_{12} p_1\|\,\|\widehat m_2 q_1  q_2   \psi \|\lesssim \sqrt{\mu}\,\|\Phi\|_{H^2(\R)}\sqrt{\alpha_\xi}\\&\leq&  \|\Phi\|_{H^2(\R)}(\alpha_\xi+\mu)\,.
\end{eqnarray*}

\begin{proof}[Proof of the bound for {\rm III}]
The same  manipulations as before  yield
\begin{eqnarray}
 |\mathrm{III}|&=& \left|N\left\llangle \psi, p_1 q_2  \left[    w^{\epsi, \beta,N}_{12}-   b|\Phi|^2(x_1) , \widehat m\right] q_1q_2 \psi \right\rrangle\right| \nonumber\\&
 \weq{\ref{lem:weights}}{=} &
 \left|\left\llangle \psi, p_1 q_2  \left(   w^{\epsi, \beta,N}_{12}-   b|\Phi|^2(x_1) \right)  \widehat {m_1} \,q_1q_2 \psi \right\rrangle\right| \nonumber\\&
\leq&
 \left|\left\llangle \psi, p_1 q_2 \, w^{\epsi, \beta,N}_{12}\, \widehat {m_1} \,q_1q_2 \psi \right\rrangle\right| +  \left|\left\llangle \psi, p_1 q_2 \,  b|\Phi|^2(x_1) \, \widehat {m_1} \,q_1q_2 \psi \right\rrangle\right|\,. \label{equ:III.1gp}
\end{eqnarray}
The second summand of \eqref{equ:III.1gp} is easily bounded by
\[
 \left|\left\llangle \psi, p_1 q_2 \,  b|\Phi|^2(x_1) \, \widehat {m_1} \,q_1q_2 \psi \right\rrangle\right|
 \lesssim \|q_2\psi\| \,\|\widehat {m_1}q_1q_2\psi\|\lesssim \alpha_\xi\,.
\]
 
 For the first term of \eqref{equ:III.1gp} we use $q=q^\chi + p^\chi q^\Phi$ to obtain four terms
\begin{eqnarray}
|\llangle  \psi, p_1  q_2  w^{\epsi,\beta,N}_{12}  \, \widehat {m_1} \, q_1 q_2   \psi \rrangle  | &\leq&|\llangle  \psi, p_1  q_2^\chi  w^{\epsi,\beta,N}_{12}  \, \widehat {m_1} \, q_1 q_2   \psi \rrangle  |\;+\;|\llangle  \psi, p_1   p_2^\chi q_2^\Phi  w^{\epsi,\beta,N}_{12} \, \widehat {m_1} \, q_1 q_2^\chi   \psi \rrangle  |.
  \nonumber\\
&&+\;|\llangle  \psi, p_1   p_2^\chi q_2^\Phi  w^{\epsi,\beta,N}_{12} \, \widehat {m_1} \,  q_1^\chi q_2  \psi \rrangle  |\nonumber\\
&& +\;|\llangle  \psi, p_1  p_2^\chi q_2^\Phi  w^{\epsi,\beta,N}_{12}  \, \widehat {m_1} \, p_1^\chi q_1^\Phi p_2^\chi q_2^\Phi   \psi \rrangle  |\,.
 \label{equ:III.2gp}
\end{eqnarray}
All terms but the last are easy to handle.
The first term of \eqref{equ:III.2gp} can be estimated by
\begin{eqnarray}\lefteqn{\hspace{-1cm}
 |\llangle  \psi, p_1  q_2^\chi  w^{\epsi,\beta,N}_{12}  \, \widehat {m_1} \, q_1 q_2   \psi \rrangle  | \leq \norm{ q_2^\chi \psi} \norm{w^{\epsi,\beta,N}_{12}p_1}  \norm{ \, \widehat {m_1} \,q_1 q_2 \psi}}\nonumber\\
&\lesssim& \epsi g(t) \,\tfrac{\epsi}{\mu^{3/2}}  \norm{\Phi}_{H^2(\R)} \,\sqrt{\alpha_\xi}\leq g(t) \norm{\Phi}_{H^2(\R)} \left( \alpha_\xi + \tfrac{\epsi^4}{\mu^3}\right)\,, \label{equ:III.8gp}
\end{eqnarray}
where we used Lemmas~\ref{lem:weights}  and \ref{lem:qs&N} (b)  and Corollary~\ref{wcor}  in the second step.   
For the second (and completely analogous the third) term in  \eqref{equ:III.2gp} we find in the same way
\begin{eqnarray}\lefteqn{\hspace{-1cm}
|\llangle  \psi, p_1   p_2^\chi q_2^\Phi  w^{\epsi,\beta,N}_{12} \, \widehat {m_1} \,  q_1 q_2^\chi  \psi \rrangle  |=|\llangle  \psi, p_1   p_2^\chi q_2^\Phi  \, \widehat {\tau_1 m_1}^\frac12 \, w^{\epsi,\beta,N}_{12}  \, \widehat {m_1}^\frac12 \,q_1  q_2^\chi  \psi \rrangle  |}\nonumber\\
&\leq& \norm{ \widehat{\tau_1 m_1}^\frac12q_2  \psi} \left\|w^{\epsi,\beta,N}_{12}p_1\right\| \norm{ \, \widehat {m_1}^\frac12 \,q_1  q_2^\chi \psi}\nonumber\\
&\lesssim&\sqrt{\alpha_\xi} \,\tfrac{\epsi}{\mu^{3/2}} \,\norm{\Phi}_{H^2(\R)}\, \epsi g(t) \leq  g(t)\norm{\Phi}_{H^2(\R)} \left( \alpha_\xi + \tfrac{\epsi^4}{\mu^3}\right)\,,  \label{equ:III.7gp}
\end{eqnarray}
where we used
\begin{eqnarray*}
\norm{ \, \widehat {m_1}^\frac12 \,q_1  q_2^\chi \psi}^2&=& \left\llangle q_2^\chi\psi, \widehat {m_1}q_1\psi\right\rrangle \;= \;\frac{1}{N-1}\sum_{j=2}^N
\left\llangle q_j^\chi\psi, \widehat {m_1}q_1\psi\right\rrangle\\
&=& \frac{1}{N-1} 
\left\llangle \sum_{j=1}^Nq_j^\chi\psi, \widehat {m_1}q_1\psi\right\rrangle - \frac{1}{N-1} 
\left\llangle q_1^\chi\psi, \widehat {m_1}q_1\psi\right\rrangle\\
&=&\frac{1}{N-1} 
\left\llangle \sum_{j=1}^Nq_j^\chi\psi, \widehat {m_1}\widehat{n}^2\psi\right\rrangle -  
\left\llangle q_1^\chi\psi, \frac{\widehat {m_1}}{N-1}q_1^\chi\psi\right\rrangle\\
&\leq&
\frac{N}{N-1}  \|q_1^\chi\psi\|^2 +\|q_1^\chi\psi\|^2\;\lesssim\; \epsi^2  g(t)^2 \,.
\end{eqnarray*}

In the last term of \eqref{equ:III.2gp} we again split the interaction according to Lemma~\ref{R2Lemma}
\begin{eqnarray*}
|\llangle  \psi, p_1  p_2^\chi q_2^\Phi  w^{\epsi,\beta,N}_{12}  \, \widehat {m_1} \, p_1^\chi q_1^\Phi p_2^\chi q_2^\Phi   \psi \rrangle  | &=& |\llangle  \psi, p_1  p_2^\chi q_2^\Phi  w^{0}_{12}  \, \widehat {m_1} \, p_1^\chi q_1^\Phi p_2^\chi q_2^\Phi   \psi \rrangle  |\\ &&+\;|\llangle  \psi, p_1  p_2^\chi q_2^\Phi (T_1+T_2)  \, \widehat {m_1} \, p_1^\chi q_1^\Phi p_2^\chi q_2^\Phi   \psi \rrangle  | 
\end{eqnarray*}
and bound the second term with the help of Corollary~\ref{wcor} and Lemmas~\ref{lem:weights}  and \ref{lem:qs&N} (b),
\[
|\llangle  \psi, p_1  p_2^\chi q_2^\Phi (T_1+T_2)  \, \widehat {m_1} \, p_1^\chi q_1^\Phi p_2^\chi q_2^\Phi   \psi \rrangle  |\lesssim \tfrac{(\epsi+\mu)\epsi}{\mu^{3/2}} \|\Phi\|_{H^2(\R)}\sqrt{\alpha_\xi} \leq \|\Phi\|_{H^2(\R)}\left(\alpha_\xi +  \tfrac{(\epsi+\mu)^2\epsi^2}{\mu^{3}}
\right)\,.
\]
 For the leading term containing $w^0_{12}$ we have to use a different approach. Here we know that the potential only acts on the function $\chi$ in the confined directions. Thus,
 we  can replace 
\[
p_1  p_2^\chi   \, w^0_{12} \,   p_1^\chi   p_2^\chi  = p_1  p_2^\chi \,   \overline w^0_{12}    \,p_1^\chi   p_2^\chi
\]
with
\[
 \overline w^0(x_1-x_2): =  \frac{1}{\mu } \int_{\Omega_{\rm f}^2} \frac{\epsi^2}{\mu^2} \,  w\Big (\mu^{-1} \big(x_1-x_2,\epsi(y_1-y_2)\big) \Big) 
  |\chi(y_1)|^2 |\chi(y_2)|^2 \D y_1 \D y_2 \,.
\]
By inspection of the above formula on checks that $\| \overline w^0\|_{L^1(\R)} \lesssim 1$ 
and thus its anti-derivative   
\begin{align*}
\overline W^0(x):= \int_{-\infty}^{x'} \overline  w^0(x') \D x' \leq \left\| \overline w^0\right\|_{L^1(\R)} 
\end{align*}
remains bounded. Integration by parts therefore yields
\begin{eqnarray*}\lefteqn{
\left\llangle  \psi, p_1  p_2^\chi q_2^\Phi  w^0_{12}  \, \widehat {m_1} \, p_1^\chi q_1^\Phi p_2^\chi q_2^\Phi   \psi \right\rrangle = \left\llangle  \psi, p_1  p_2^\chi q_2^\Phi  \left(\tfrac{\partial}{\partial x_1} \overline W^0_{12}\right)  \, \widehat {m_1} \, p_1^\chi q_1^\Phi p_2^\chi q_2^\Phi   \psi \right\rrangle}\\
&=&  -\,\left\llangle  \psi,  \left(\tfrac{\partial}{\partial x_1} p_1 \right)  p_2^\chi q_2^\Phi  \overline W^0_{12} \, \widehat {m_1} \, p_1^\chi q_1^\Phi p_2^\chi q_2^\Phi   \psi \right\rrangle - 
\left\llangle  \psi, p_1  p_2^\chi q_2^\Phi \, \overline W^0_{12}   \,\tfrac{\partial}{\partial x_1}\, \widehat {m_1} \, p_1^\chi q_1^\Phi p_2^\chi q_2^\Phi   \psi \right\rrangle\,,
\end{eqnarray*}
where the first term is easily bounded by
\begin{eqnarray*} 
\left|\left\llangle  \psi,  \left(\tfrac{\partial}{\partial x_1} p_1 \right)  p_2^\chi q_2^\Phi  \overline W^0_{12} \, \widehat {m_1} \, p_1^\chi q_1^\Phi p_2^\chi q_2^\Phi   \psi \right\rrangle \right| &\leq & \left\|\tfrac{\partial}{\partial x_1} p_1\right\| \left\| q_2\psi\right\| 
\left\|\overline W^0_{12}\right\|_\infty \left\|\widehat {m_1} q_1q_2\psi\right\| \\
&\lesssim& \|\Phi\|_{H^1(\R)} \alpha_\xi\,.
\end{eqnarray*}
The second term is 
\begin{eqnarray*}  \lefteqn{\hspace{-1cm}
\left| \left\llangle  \psi, p_1  p_2^\chi q_2^\Phi \, \overline W^0_{12}   \,\tfrac{\partial}{\partial x_1}\, \widehat {m_1} \, p_1^\chi q_1^\Phi p_2^\chi q_2^\Phi   \psi \right\rrangle\right| =}\\&=& \left| \left\llangle  \psi, p_1  p_2^\chi q_2^\Phi \,\overline W^0_{12}  (p_1+q_1) q_2 \,\tfrac{\partial}{\partial x_1}\, q_1q_2 \widehat {m_1} \, p_1^\chi q_1^\Phi p_2^\chi q_2^\Phi   \psi \right\rrangle\right| 
 \\
&=&\left| \left\llangle  \psi, p_1  p_2^\chi q_2^\Phi \,\overline W^0_{12}   (p_1 \widehat {\tau_{1} m_1} + q_1\widehat {m_1} )  q_2 \,\tfrac{\partial}{\partial x_1}\,  p_1^\chi q_1^\Phi p_2^\chi q_2^\Phi   \psi \right\rrangle\right| \\
&\leq& \left\| q_2\psi\right\| 
\left\|\overline W^0_{12}\right\|_\infty  \left\|\left(\widehat {\tau_{1} m_1} +\widehat {m_1}\right)q_2 \, \tfrac{\partial}{\partial x_1}\,q_1\psi\right\|\\
& \lesssim&   \sqrt{\alpha_\xi}  \left( \|\Phi\|_{H^2(\R)}^\frac{3}{2}   \sqrt{  \alpha_\xi   +\frac{\mu}{\epsi} +\frac{a}{\mu^3}} +\epsi g(t)\right)\\& \lesssim &
 \|\Phi\|_{H^2(\R)}^\frac{3}{2}   \left(  \alpha_\xi   +\frac{\mu}{\epsi}+\frac{a}{\mu^3} \right) + g(t) (\alpha_\xi +\epsi^2)
\,,
\end{eqnarray*}
where we used Lemma~\ref{lem:energyestimate} and for $\ell=0,1$

\begin{eqnarray*}
\left\|\widehat {\tau_{\ell} m_1} q_2 \,q_1\,\tfrac{\partial}{\partial x_1}\,p_1^\chi q_1^\Phi1\psi\right\|^2 &=& \left\llangle 
q_1\,\tfrac{\partial}{\partial x_1}\,p_1^\chi q_1^\Phi\psi, q_2 \widehat {\tau_{\ell} m_1}^2 
q_1\,\tfrac{\partial}{\partial x_1}\,p_1^\chi q_1^\Phi\psi\right\rrangle\\&\leq& \left\llangle 
q_1\,\tfrac{\partial}{\partial x_1}\,p_1^\chi q_1^\Phi\psi,  \sum_{j=1}^N q_j\frac{\widehat {\tau_{\ell} m_1}^2}{N-1}
q_1\,\tfrac{\partial}{\partial x_1}\,p_1^\chi q_1^\Phi\psi\right\rrangle \\
&=&  \left\llangle 
q_1\,\tfrac{\partial}{\partial x_1}\,p_1^\chi q_1^\Phi\psi,  \sum_{k=0}^N \frac{  {\tau_{\ell} m_1(k) }^2}{N-1} \sum_{j=1}^N q_j P_k
q_1\,\tfrac{\partial}{\partial x_1}\,p_1^\chi q_1^\Phi\psi\right\rrangle \\
&=&  \left\llangle 
q_1\,\tfrac{\partial}{\partial x_1}\,p_1^\chi q_1^\Phi\psi,  \sum_{k=0}^N \frac{  {\tau_{\ell} m_1(k) }^2}{N-1}  k P_k
q_1\,\tfrac{\partial}{\partial x_1}\,p_1^\chi q_1^\Phi\psi\right\rrangle \\&\leq&  \left\|
q_1\,\tfrac{\partial}{\partial x_1}\,p_1^\chi q_1^\Phi\psi\right\|^2
\end{eqnarray*}
and
\begin{eqnarray*}
\left\| q_1\,\tfrac{\partial}{\partial x_1}\,p_1^\chi q_1^\Phi\psi\right\| &\leq& 
\left\|\left(q_1\,p_1^\chi  \tfrac{\partial}{\partial x_1} \,p_1^\chi  q_1^\Phi + q_1\,p_1^\chi  \theta'(x_1) L_1 q_1\right)\psi\right\| + \left\|\,p_1^\chi  \theta'(x_1) L_1 q_1\psi\right\| \\&=&
\left\|q_1\,p_1^\chi  \left(\tfrac{\partial}{\partial x_1} +  \theta'(x_1) L_1 \right) q_1\psi\right\| + \left\|\,p_1^\chi  \theta'(x_1) L_1 q_1^\chi\psi\right\| \\
&\leq &
\left\| \left(\tfrac{\partial}{\partial x_1} +  \theta'(x_1) L_1 \right) q_1\psi\right\| + \left\|\,p_1^\chi  \theta'(x_1) L_1\right\| \| q_1^\chi\psi\|\\
&\lesssim & \left( \|\Phi\|_{H^2(\R)}^3   \left(\alpha_\xi   +\frac{\mu}{\epsi} + \frac{a}{ \mu^3} \right)\right)^\frac12\;+\; \epsi g(t)\,.
\end{eqnarray*}

In the strongly confining case we find again
\begin{eqnarray*}
|{\rm III}|&\leq& \left|\left\llangle  \psi, p_1 q_2   w^{\epsi,\beta,N}_{12} \widehat m_1 q_1 q_2   \psi \right\rrangle  \right| \leq \|w^{\epsi,\beta,N}_{12} p_1\|\,\|\widehat m_1 q_1  q_2   \psi \|\lesssim \sqrt{\mu}\,\|\Phi\|_{H^2(\R)}\sqrt{\alpha_\xi}
\\
&\leq &\|\Phi\|_{H^2(\R)}(\alpha_\xi+\mu)\,.
\end{eqnarray*}

 \end{proof}

\begin{proof}[Proof of the bound for {\rm IV}]
For the first two summands in IV we expand the potential around $y_1=0$. The assumption A3 guarantees that in both cases the error is a bounded operator.
Therefore, we can write
\[
 \dot V(t,x_1,\epsi y_1)= \dot V(t,x_1,0)+ \epsi Q \qquad  V(t,x_1,\epsi y_1)=  V(t,x_1,0)+ \epsi \tilde Q
\]
with $\norm{Q} , \|\tilde Q \|   \leq C$. Thus we find 
\begin{eqnarray}
   \left|\left\llangle \psi, p_1  N  \left[   V(x_1,\epsi y_1)-V(x_1,0) , \widehat m\right] q_1 \psi \right\rrangle\right| 
&=&   \left|\left\llangle \psi, p_1   \epsi\, \tilde Q\, \widehat{m}_1 q_1 \psi \right\rrangle\right| \lesssim  \epsi  \norm{  \widehat{m}_1 q_1 \psi } \weq{\ref{lem:qs&N} }{\leq} \epsi\,. \label{easybound}
\end{eqnarray}
For the term containing $\dot V$  we first note that for $f \in L^\infty(\R)$
\begin{align}\label{equ:einteilchenop}
 \left|\left\llangle \psi, f(x_1) \psi \right\rrangle -\left \langle \Phi, f(x) \Phi \right\rangle\right| \lesssim \norm{f}_{L^\infty(\R)} \alpha_\xi.
\end{align}
Thus we can estimate 
\begin{eqnarray*}
 \left|\left\llangle \psi, \dot V(x_1,\epsi y_1) \psi \right\rrangle - \left\langle \Phi, \dot V(x_1,0) \Phi \right\rangle\right|
&\lesssim& \left|\left\llangle  \psi,  \dot V(x_1,0)\psi \right\rrangle - \left\langle \Phi, \dot V(x_1,0) \Phi \right\rangle\right| +\epsi\\
& \weq{ \eqref{equ:einteilchenop}}{ \lesssim} & \norm{\dot V(\cdot,0)}_{L^\infty(\R)} \alpha_\xi +\epsi\,.
\end{eqnarray*}
Equation \eqref{equ:einteilchenop} holds since
\begin{eqnarray*}
  |\llangle  \psi, f(x_1) \psi \rrangle  - \langle \Phi, f(x) \Phi \rangle|&\leq& |\llangle  \psi, p_1 f(x_1) p_1 \psi \rrangle  - \langle \Phi, f(x) \Phi \rangle|
+ |\llangle  \psi, q_1 f(x_1) p_1 \psi \rrangle| \\ && +\; |\llangle  \psi, p_1 f(x_1) q_1 \psi \rrangle| + |\llangle  \psi, q_1 f(x_1) q_1 \psi \rrangle |  \\
&\leq& \alpha_\xi\,\langle \Phi, f(x) \Phi \rangle+ 2 |\llangle  \psi, \widehat{\tau_1 n}^{1/2} p_1 f(x_1) \widehat n^{-1/2} q_1 \psi \rrangle | 
+ \norm{f}_{L^\infty(\R)} \alpha_\xi\\
&\weq{\ref{lem:weights}}{ \lesssim}& \norm{f}_{L^\infty(\R)} \alpha_\xi.
\end{eqnarray*}

For the ``twisting'' term we find
 \begin{eqnarray*}
   \big\llangle \psi, p_1    \left(   (\theta'(x_1)L_1)^2 + |\theta'(x_1)|^2\|L\chi\|^2  \right) \widehat{m}_1 \,q_1 \psi \big\rrangle
 &=&   \big\llangle \psi, p_1    \left(   (\theta'(x_1)L_1)^2 +  |\theta'(x_1)|^2\|L\chi\|^2  \right) q_1^\Phi p_1^\chi \widehat{m}_1 \,  \psi \big\rrangle
 \\&& +  \; \big\llangle \psi, p_1    \left(   (\theta'(x_1)L_1)^2 + |\theta'(x_1)|^2\|L\chi\|^2  \right)   q_1^\chi \widehat{m}_1 \,  \psi \big\rrangle\,.
 \end{eqnarray*}
 With
 \[
   \left\langle \chi, (\theta'(x ) L )^2 \chi\right\rangle_{L^2(\Omega_{\rm f})} = - |\theta'(x) |^2 \left\langle L  \chi, L   \chi\right\rangle   
    \]
 we see that the first term vanishes identically. 
 For the second term we find with Lemma~\ref{lem:qs&N}~(b) that
 \begin{eqnarray*}\lefteqn{\hspace{-1cm}
 \left|\big\llangle \psi, p_1    \left(   (\theta'(x_1)L_1)^2 + |\theta'(x_1)|^2\|L\chi\|^2  \right)   q_1^\chi \widehat{m}_1   \psi \big\rrangle\right|}\\
 &\leq&
\| \left(  (\theta'(x_1)L_1)^2 +  |\theta'(x_1)|^2\|L\chi\|^2  \right)p_1 \psi\|\, \|\widehat{m}_1\,q_1^\chi  \,  \psi\|\lesssim g(t) N^\xi \epsi\,.
 \end{eqnarray*}
 The remaining one-body terms are  
 \[
R^{(1)} = - \partial_x  \theta' (x)  L - \theta' (x) L  \partial_x  \;+ \;\left(V_{\rm bend}(r)+\frac{\kappa(x)^2}{4}
\right) \;-\; \epsi \,S^\epsi\,.
\]
With $\langle\chi,L\chi\rangle = 0$ it holds  that
 \[
 \llangle \psi, p_1      (\partial_{x_1}  \theta' (x_1)  L_1 + \theta' (x_1) L_1  \partial_{x_1})  q_1^\Phi p_1^\chi \widehat{m}_1   \psi \big\rrangle
 = 0
 \]
 and for the remaining term
 \[
| \llangle \psi, p_1         (\partial_{x_1}  \theta' (x_1)  L_1 + \theta' (x_1) L_1  \partial_{x_1})  q_1^\chi \widehat{m}_1   \psi \big\rrangle|
 \lesssim g(t) N^\xi \epsi
 \]
 as before. With
 \[
 V_{\rm bend} (r)  + \frac{\kappa(x)^2}{4\rho_\epsi(r)^2} =  - \epsi\,\frac{T_{\theta(x)}y\cdot\kappa(x)''}{2\rho_\epsi(r)^3} -
  \epsi^2\, \frac{5(  T_{\theta(x)}y\cdot\kappa'(x))^2}{4\rho_\epsi(r)^4} = \Or(\epsi)
 \]
 we can proceed as in \eqref{easybound} for this part. For the $S^\epsi$ term first note that
 \[
s^\epsi(r):=  \epsi^{-1} (\rho_\epsi^{-2}(r) - 1) = \frac{ 2 T_{\theta(x)}y\cdot \kappa(x) - \epsi (T_{\theta(x)}y\cdot \kappa(x))^2}{(1- \epsi\, T_{\theta(x)}y\cdot \kappa(x))^2}
 \]
 is uniformly bounded on $\Omega$ with all its derivatives. Hence
 \begin{eqnarray*}
\epsi  \big\llangle \psi, p_1   S^\epsi  \widehat{m}_1  q_1 \psi \big\rrangle&=& 
\epsi  \big\llangle \psi, p_1   \left(\partial_{x_1}  +\theta'(x_1)   L_1 \right)  s^\epsi(r_1)  \left(\partial_{x_1}  +\theta'(x_1)   L_1 \right) \widehat{m}_1  q_1 \psi \big\rrangle\\
&\leq & \epsi \|   \left(\partial_{x_1}  +\theta'(x_1)   L_1 \right)  s^\epsi(r_1)  \left(\partial_{x_1}  +\theta'(x_1)   L_1 \right)p_1\psi\| \,\|\widehat{m}_1  q_1 \psi\|\lesssim \epsi\,\|\Phi\|_{H^2(\R)} \,,
 \end{eqnarray*}
 concluding the bound for $\rm IV$.
 \end{proof}

\subsection{Proof of Lemma~\ref{lem:energyestimate}}\label{energylemmaproof}

The strategy is to control the expression in terms of the energy per particle. To this end we observe that
\begin{eqnarray*}\lefteqn{
 \left\| \left(\tfrac{\partial}{\partial x_1} +  \theta'(x_1) L_1 \right) q_1\psi\right\|^2=
 - \left\llangle q_1 \psi, \left(\tfrac{\partial}{\partial x_1} +  \theta'(x_1) L_1 \right)^2q_1\psi\right\rrangle}
 \\
 &\leq &
 \left\llangle q_1 \psi, \left(\left(\tfrac{\partial}{\partial x_1} +  \theta'(x_1) L_1 \right)^2 -  \tfrac{1}{\epsi^2}\Delta_{y_1} + \tfrac{1}{\epsi^2} V^\perp(y_1)-\tfrac{E_0}{\epsi^2}
 \right)q_1\psi\right\rrangle
 \\
 &\leq &
2 \left\llangle  q_1\psi,  \left( - \left(\tfrac{\partial}{\partial x_1} + \theta'(x_1) L_1
 \right)  (1 + \epsi s^\epsi(r_1)) \left(\tfrac{\partial}{\partial x_1} + \theta'(x_1) L_1
 \right)-  \tfrac{1}{\epsi^2}\Delta_{y_1}+ \tfrac{1}{\epsi^2} V^\perp(y_1)-\tfrac{E_0}{\epsi^2}
 \right)q_1\psi\right\rrangle\\
 &=:& 2 \left\llangle q_1  \psi, \, \tilde h_1\;q_1\psi\right\rrangle\,.
\end{eqnarray*}

Hence
we have
\begin{eqnarray*}
 \left\| \left(\tfrac{\partial}{\partial x_1} +  \theta'(x_1) L_1 \right) q_1\psi\right\|^2&\leq& 2\norm{\sqrt{ \tilde h_1} q_1 \psi }^2 \leq  \norm{\sqrt {\tilde h_1}(1-p_1 p_2 )\psi}^2 +  \norm{\sqrt {\tilde h_1}p_1q_2\psi}^2 \nonumber\\&\leq&\left\llangle \psi , (1-p_1p_2 ) \tilde h_1 (1-p_1p_2) \psi  \right\rrangle+ \langle \ph,\tilde h_1\ph\rangle  {\alpha_\xi}\,.
\end{eqnarray*}
Note that
\begin{eqnarray*}
 \langle \ph,\tilde h_1\ph\rangle&=&- \left\langle \ph,\,   \left(\tfrac{\partial}{\partial x} + \theta'(x) L_1
 \right)  (1 + \epsi s^\epsi(r)) \left(\tfrac{\partial}{\partial x} + \theta'(x) L
 \right)\ph\right\rangle\\
 &\leq &-2 \left\langle \ph,\,   \left(\tfrac{\partial}{\partial x} + \theta'(x) L
 \right)^2  \ph\right\rangle\;=\; 2 \left( \left\| \tfrac{\partial}{\partial x} \Phi\right\|^2 +  \left\| |\theta'(x)|^2 \|L\chi\|^2  \Phi\right\|^2\right)\\
 &\lesssim &\|\Phi\|^2_{H^1(\R)}\,.
\end{eqnarray*}
Then, after expanding and rearranging the energy difference 
\begin{eqnarray*}
E^\psi -E^\Phi   &=&  \tfrac{1}{N}\big\llangle \psi ,H(t)\,\psi  \big\rrangle  -    \tfrac{E_0}{\epsi^2}  - 
\Big\langle \Phi , \mathcal{E}^\Phi (t)\Phi  \Big\rangle_{L^2(\R)}\\
&=& \left\llangle \psi ,\left(\tilde h_1+ \tfrac12 w^{\epsi,\beta,N}_{12}+V(x_1,\epsi y_1) + V_{\rm bend}(r_1)\right)\psi  \right\rrangle\\
&&-\; \Big\langle \Phi , \left(-\tfrac{\partial^2}{\partial x^2} - \tfrac{\kappa(x)^2}{4} +   |\theta'(x)|^2 \,\|L\chi\|^2   + V(x,0)+  \tfrac{b}{2} |\Phi |^2 \right)  \Phi  \Big\rangle_{L^2(\R)}
\end{eqnarray*}
we arrive at 
\begin{eqnarray}\label{hpg}\lefteqn{
\left\llangle \psi , (1-p_1p_2 ) \tilde h_1 (1-p_1p_2) \psi  \right\rrangle
= E^\psi- E^\Phi \notag} \\
&&-\;\left(  \left\llangle \psi , p_1p_2 \tilde h_1 p_1p_2 \psi  \right\rrangle- \left\langle \varphi, - \tfrac{\partial^2}{\partial x^2} - \tfrac{1}{\epsi^2} (\Delta_y+E_0) + |\theta'(x)|^2 \|L\chi\|^2\varphi \right\rangle\right) \label{grad2}
\\
&&-\;\left\llangle \psi , (1-p_1p_2 )\tilde h_1 p_1p_2 \psi  \right\rrangle-\left\llangle \psi , p_1p_2 \tilde  h_1 (1-p_1p_2) \psi  \right\rrangle\label{grad3} \\
&&-\;\tfrac12\left(   \llangle  \psi, p_1 p_2 w^{\epsi,\beta,N}_{12} p_1 p_2 \psi  \rrangle - \langle \Phi,   b|\Phi|^2 \Phi \rangle\right)\label{grad4} \\
&&-\; \tfrac12 \left( \left\llangle  \psi,(1- p_1 p_2)w^{\epsi,\beta,N}_{12}p_1 p_2 \psi  \right\rrangle+ \left\llangle  \psi, p_1 p_2 w^{\epsi,\beta,N}_{12}(1- p_1 p_2) \psi  \right\rrangle \right) \label{grad5} \\
&& -\; \tfrac12 \left\llangle  \psi,(1- p_1 p_2) w^{\epsi,\beta,N}_{12}(1- p_1 p_2) \psi  \right\rrangle \label{grad6} \\
&& -\;\left( \left\llangle \psi , V(x_1,\epsi y_1) \psi  \right\rrangle - \left\langle \Phi, V(x,0) \Phi \right\rangle\right) 
+\;\left( \left\llangle \psi ,  \tfrac{\kappa^2(x_1)}{4} \psi  \right\rrangle - \left\langle \Phi, \tfrac{\kappa^2(x)}{4} \Phi \right\rangle\right) 
  \label{grad7}\\
  &&-\; \left\llangle \psi ,\left(V_{\rm bend}(r_1)+\tfrac{\kappa(x_1)^2}{4}
\right) \psi  \right\rrangle\label{grad8}\,.
\end{eqnarray}
We will estimate each line separately.
For \eqref{grad2} we find
\begin{eqnarray*}
\eqref{grad2}&\leq& \left|\left\llangle \psi , p_1p_2  \tilde h_1 p_1p_2 \psi  \right\rrangle - \left\langle \varphi, \tilde h_1 \varphi \right\rangle \right|
+\epsi \langle \ph,\tilde h_1\ph\rangle  
\\&=&
\left|\left\langle \varphi, \tilde h_1 \varphi \right\rangle\left\llangle \psi , p_1p_2  \psi  \right\rrangle- \left\langle \varphi, \tilde h_1 \varphi \right\rangle \right| 
+\epsi \langle \ph,\tilde h_1\ph\rangle  \\
&= &\left|\left\langle \varphi, \tilde h_1 \varphi \right\rangle \left\llangle \psi , (1-p_1p_2 )\psi  \right\rrangle\right|+\epsi \langle \ph,\tilde h_1\ph\rangle  \\
&=& \langle \ph,\tilde h_1\ph\rangle\left( \left|\left\llangle \psi , (p_1q_2 +q_1p_2+ q_1q_2) \psi  \right\rrangle\right|+\epsi\right)\; \weq{\ref{lem:weights}}{  \lesssim }\; \|\Phi\|^2_{H^1(\R)}( \alpha_\xi+\epsi) \,,
\end{eqnarray*}
and \eqref{grad3}
is bounded in absolute value by 
\begin{eqnarray*}
 |\eqref{grad3}| &\leq& 2  \left|\left\llangle \psi, (1-p_1p_2 ) \tilde h_1 p_1p_2 \psi \right\rrangle\right| 
 =2  \left|\left\llangle \psi, q_1 \tilde h_1 p_1p_2 \psi \right\rrangle\right| =
 2\left|\left\llangle \psi, q_1 \widehat n^{-\frac12} \tilde h_1  \widehat{\tau_1n}^\frac12 p_1p_2 \psi \right\rrangle\right|
\\&\leq & 2 \left\|  \widehat n^{-\frac12}q_1\psi\right\|
 \left\| \tilde h_1 p_1\right\| \left\| \widehat{\tau_1n}^\frac 12 \psi\right\| 
\lesssim  \sqrt{\alpha_\xi} \norm{ \Phi}_{H^2(\R)}\sqrt{ \alpha_\xi + \tfrac{1}{\sqrt N}}\lesssim \norm{ \Phi}_{H^2(\R)} \left(\alpha_\xi + \tfrac{1}{\sqrt N}\right)\,.
\end{eqnarray*}
For \eqref{grad4} we first note that
\[
 \left|\left\langle \Phi,     b|\Phi|^2  \Phi \right\rangle- \left\langle \psi,  p_1 p_2    b|\Phi|^2_1  p_1 p_2 \psi \right\rangle \right| 
= \left|\left\langle \Phi,  b|\Phi|^2  \Phi \right\rangle\right|   \left|\left\langle \psi, (1-p_1p_2) \psi \right\rangle\right|  \lesssim  \norm{\Phi}_{L^\infty(\R)}^2 \alpha_\xi\,.
\]
Hence,
\begin{eqnarray*}
| \eqref{grad4}|&\leq & \left|\left\llangle \psi,  p_1 p_2 \left(b|\Phi|^2-  w^{\epsi,\beta,N}_{12}\right)  p_1 p_2 \psi \right\rrangle\right|    + \norm{\Phi}_{L^\infty(\R)}^2 \alpha_\xi\\
&\leq& \left|\left\llangle \psi,  p_1 p_2 \left(b|\Phi|^2-  w^0_{12}\right)  p_1 p_2 \psi \right\rrangle\right| +\|(T_1+T_2)p_1\|   + \norm{\Phi}_{L^\infty(\R)}^2 \alpha_\xi\\
&\stackrel{\eqref{faltest}}{\lesssim}    &\frac{\mu}{\epsi}\,\|\nabla |\Phi|^2\|_{L^2(\R)}\norm{\Phi}_{L^\infty(\R)}+ 
\frac{\epsi(\epsi+\mu)}{\mu^{3/2}}\|\Phi\|_{H^2(\R)}
+\norm{\Phi}_{L^\infty(\R)}^2 \alpha_\xi\,.
\end{eqnarray*}
For \eqref{grad5} we have that
\begin{eqnarray}
|\eqref{grad5}|& \leq & 2
\left|\left\llangle  \psi, p_1 p_2 w^{\epsi,\beta,N}_{12}(1- p_1 p_2) \psi \right\rrangle\right|  = \left|\left\llangle  \psi, p_1 p_2 w^{\epsi,\beta,N}_{12}(q_1p_2+ p_1q_2+ q_1q_2) \psi \right\rrangle\right| \nonumber\\
&\leq& 2\left| \left\llangle  \psi, p_1 p_2 w^{\epsi,\beta,N}_{12} q_1p_2 \psi \right\rrangle\right|+\left| \left\llangle  \psi, p_1 p_2 w^{\epsi,\beta,N}_{12} q_1q_2 \psi \right\rrangle \right|\,.\label{grad5s1}
\end{eqnarray}
The first summand in \eqref{grad5s1} is bounded by 
\begin{eqnarray*}
\left| \left\llangle  \psi, p_1 p_2 w^{\epsi,\beta,N}_{12} q_1p_2 \psi \right\rrangle\right|&=& \left| \left\llangle  \psi, p_1 p_2 \,\widehat {\tau_1 n}^\frac{1}{2}\,w^{\epsi,\beta,N }_{12}\, \widehat n^{-\frac{1}{2}}  \,  q_1 p_2  \psi \right\rrangle\right|\\&\leq &\norm{p_2 w^{\epsi,\beta,N }_{12} p_2} \norm{\widehat {\tau_1 n}^\frac{1}{2} \psi} \norm{\widehat n^{-\frac{1}{2}}   q_1 \psi}\lesssim \norm{\Phi}_{H^2(\R)}^2 \left(\alpha_\xi+\tfrac{1}{\sqrt N}\right) \,.
\end{eqnarray*}
For the second summand in \eqref{grad5s1} we first use symmetry to write
\begin{eqnarray*}
\left| \left\llangle  \psi, p_1 p_2 w^{\epsi,\beta,N}_{12} q_1q_2 \psi \right\rrangle \right|&=& \frac{1}{N-1} \left|\sum_{j=2}^N  \left\llangle  \psi,   p_1 p_j w^{\epsi,\beta,N}_{1j}  q_1 q_j    \psi \right\rrangle \right|\\&\leq& \frac{\norm{    q_1 \psi}}{N-1} \norm{\sum_{j=2}^N  q_j  w^{\epsi,\beta,N}_{1j}   p_1 p_j \psi }\leq\frac{\sqrt{\alpha_\xi }}{N-1} \norm{\sum_{j=2}^N  q_j  w^{\epsi,\beta,N}_{1j}   p_1 p_j \psi } \label{equ:II.1g}.
\end{eqnarray*}
Now the second factor can be  split into a ``diagonal''   and an ``off-diagonal'' term,
\begin{eqnarray*}\lefteqn{\hspace{-1cm}
  \norm{\sum_{j=2}^N  q_j w^{\epsi,\beta,N}_{12}    p_1 p_j  \psi }^2
=  \sum_{j,k=2}^N \left\llangle \psi, p_1 p_l   w^{\epsi,\beta,N}_{1l}  q_l q_j w^{\epsi,\beta,N}_{1j} 
   p_1 p_j \psi  \right\rrangle }\\
& \leq & \hspace{-10pt}\sum_{2 \leq  j < k \leq N} \left\llangle \psi,   q_j  p_1 p_l   w^{\epsi,\beta,N}_{1l}    w^{\epsi,\beta,N}_{1j}   q_l  p_1 p_j \psi  \right\rrangle+ (N-1) \norm{w^{\epsi,\beta,N}_{12}  p_1 p_2 \psi   }^2 . 
  \label{equ:II.3g}
\end{eqnarray*}
The ``off-diagonal'' term is bounded by
\begin{eqnarray*}\lefteqn{\hspace{-3cm}
  (N-1)(N-2)\left\llangle \psi,q_2 p_1 p_3    w^{\epsi,\beta,N}_{13}    w^{\epsi,\beta,N}_{12} q_3  p_1 p_2 \psi  \right\rrangle 
 \leq  N^2 \norm{   \sqrt{ w^{\epsi,\beta,N}_{12}} p_2 \sqrt{ w^{\epsi,\beta,N}_{13}} p_1 q_3 \psi } ^2} \\
&\leq &N^2 \norm{  \sqrt{ w^{\epsi,\beta,N}_{12}} p_2  }^4  \norm{ q_3 \psi }^2\;\weq{\ref{wcor}}{ \lesssim} \; N^2 \norm{\Phi}_{H^2(\R)}^4 \alpha_\xi  \label{equ:II.4g}.
\end{eqnarray*}
The ``diagonal'' term is bounded by
\begin{eqnarray*}
 N \left\llangle \psi, p_1 p_2   \left(w^{\epsi,\beta,N}_{12}\right)^2  p_1 p_2 \psi  \right\rrangle 
&\leq& N   \norm{p_1 (w^{\epsi,\beta,N}_{12})^2 p_1 }   \weq{\ref{wcor}}{ \leq}      \tfrac{N\epsi^2}{\mu^3}  \norm{\Phi}^2_{H^2(\R)}  \label{equ:II.5g}
\end{eqnarray*}
and we conclude that the second summand of \eqref{grad5s1} is bounded by
\begin{eqnarray*}
\left| \left\llangle  \psi, p_1 p_2 w^{\epsi,\beta,N}_{12} q_1q_2 \psi \right\rrangle \right| &\lesssim &\frac{\sqrt{ \alpha_\xi }}{N}  \sqrt{ N^2 \norm{\Phi}^4_{H^2(\R)} \alpha_\xi +  \tfrac{N\epsi^2}{\mu^3}\norm{\Phi}^2_{H^2(\R)}  }\\& \leq&
\norm{\Phi}^2_{H^2(\R)} \alpha_\xi + \tfrac{\sqrt{\alpha_\xi}}{\sqrt{N}}\sqrt{\tfrac{\epsi^2}{\mu^3}}\norm{\Phi}_{H^2(\R)}
\lesssim \norm{\Phi}^2_{H^2(\R)} \alpha_\xi+ \tfrac{\epsi^2}{N\mu^3}\,.
\end{eqnarray*}
In summary we thus have that
\[
|\eqref{grad5}|\;\lesssim \;\norm{\Phi}^2_{H^2(\R)} \left( \alpha_\xi+\tfrac{1}{\sqrt{N}}\right) + \frac{\epsi^2}{N\mu^3}\,.
\]
Since 
the interaction is non-negative, we have $\eqref{grad6}\leq 0$. 
With the same arguments as used in the proof of Proposition~\ref{lem:3termeg} part IV we find
\[
|\eqref{grad7}|\; \lesssim \;   \alpha_\xi +\epsi\,,
\]
and obviously $|\eqref{grad8}|\lesssim \epsi$.
In summary we thus showed
 \begin{eqnarray*}
 \left\| \left(\tfrac{\partial}{\partial x_1} +  \theta'(x_1) L_1 \right) q_1\psi\right\|^2 &\lesssim&   \|\Phi\|_{H^2(\R)}^3   \left(\alpha_\xi + \frac{1}{\sqrt{N}} + \epsi +\frac{\mu}{\epsi}+\sqrt{\mu}+ \frac{a}{ \mu^3}
 \right)
  \end{eqnarray*}
 and with 
\[
\epsi\lesssim \frac{\mu}{\epsi}\,,\quad \frac{1}{\sqrt{N}}\lesssim \frac{\mu}{\epsi}
\,,\quad \sqrt{\mu}\lesssim  \frac{\mu}{\epsi}\,,
\]
which holds for  moderate confinement, the statement of the lemma follows.

\begin{appendix}

\section{Well-posedness of the dynamical equations }\label{app:regsol}

The Hamiltonian $H_{\tube_\epsi}(t)$ given in  \eqref{hamilton1} is self-adjoint on $H^2(\tube_\epsi^N)\cap H^1_0(\tube_\epsi^N)$ for every $t\in\R$, since the potentials $V$ and $w$ are bounded by assumptions {\bf A2} and {\bf A3}. 
Hence $(U_\epsi)^{\otimes N} H_{\tube_\epsi}(t)(U_\epsi^*)^{\otimes N} + \sum_{i=1}^N \tfrac{1}{\epsi^2} V^\perp(y_i)$ is self-adjoint on $U_\epsi H^2(\tube_\epsi^N)\cap U_\epsi  H^1_0(\tube_\epsi^N)= H^2(\Omega^N)\cap H^1_0(\Omega^N)$, as $\sum_{i=1}^N \tfrac{1}{\epsi^2} V^\perp(y_i)$ is relatively bounded with respect to $(U_\epsi)^{\otimes N} H_{\tube_\epsi}(t)(U_\epsi^*)^{\otimes N} $ with relative bound smaller than one.
Finally  
 $t\mapsto V(t)\in \mathcal{L}(L^2)$ is continuous, so $H(t)$ generates an strongly continuous evolution family $U(t,0)$ such that for $\psi_0\in H^2(\Omega^N)\cap H^1_0(\Omega^N)$ the map $t\mapsto U(t,0)\psi_0$ satisfies the time-dependent Schr\"odinger equation.

Although the questions of well-posedness, global existence and conservation laws for the NLS  equation in our setting are well understood, we couldn't find a reference for global existence of $H^2$-solutions to \eqref{equ:grosspqwg} with time-dependent potential. We thus briefly comment on this point. The standard contraction argument (see e.g.\ Proposition~3.8 \cite{Tao06}) gives unique local existence of $H^s$-solutions $\Phi(t)$ for all $\frac12<s\leq 4$, since  under the hypotheses {\bf A1} and {\bf A3} on the external potential and the waveguide all potentials appearing in \eqref{equ:grosspqwg} are $C^4_{\rm b}$. Moreover, $\|\Phi(t)\|_{L^2} = \|\Phi(0)\|_{L^2}$ and the solution map $\Phi(0)\mapsto \Phi(t)$ is continuous in $H^s$. See  \cite{Sp14} for the details of this argument in the case of time-dependent potentials.

In order to show also global existence, assume without loss of generality (we can always add a real constant to the potential) that 
\[
 \inf_{t,x\in\R}   \left(  -\tfrac{\kappa^2(x)}{4} +   |\theta'(x)|^2 \,\|L\chi\|^2  + V(t,x,0) \right) \geq 0
 \]
  and recall the definition $E^{\Phi(t)}(t) := \left\langle\Phi(t), \mathcal{E}^{\Phi(t)}(t)\Phi(t)\right\rangle$ in \eqref{equ:enggross2}. Then   for $\Phi(t)\in H^2$ the map $t\mapsto E^{\Phi(t)}(t)$ is differentiable and we have
\begin{eqnarray*}
\| \Phi(t)\|^2_{H^1}   &\leq& E^{\Phi (t)}(t)  +1 \;=\; E^{\Phi (0)}(0) +1+ \int_0^t \tfrac{\D}{\D s}  E^{\Phi(s) }(s) \,\D s \\&=&E^{\Phi(0) }(0) +1+\int_0^t \left\langle \Phi(s), \dot V(s,\cdot,0) \Phi(s)\right\rangle\,\D s\\
&\leq & C\| \Phi(0)\|^2_{H^1}  +  \|\Phi(0)\|^2 \int_0^t \|\dot V(s,\cdot,0)\|_{L^\infty}\,\D s\,,
\end{eqnarray*}
which, by continuity of the solution map, extends to $\Phi(t)\in H^1$.
Hence     $\|\Phi(t)\|_{L^\infty} \leq \|\Phi(t)\|_{H^1} $ cannot blow up in finite time, which implies global existence of $H^1$-solutions. 

To control also the $H^2$-norm, first note that with
\begin{eqnarray*}
\| \mathcal{E}^{\Phi(t)}(t)\Phi(t)\|^2 &=& \Big\langle \Phi (t), \left(-\tfrac{\partial^2}{\partial x^2} - \tfrac{\kappa ^2}{4} +   |\theta' |^2 \,\|L\chi\|^2   + V(t,\cdot,0)+  \tfrac{b}{2} |\Phi (t)|^2 \right)^2  \Phi (t) \Big\rangle\\
&\geq& \left\|\tfrac{\partial^2}{\partial x^2} \Phi(t)\right\|^2 +  2\Re \Big\langle \Phi(t), \tfrac{\partial^2}{\partial x^2} \Big(\underbrace{-\tfrac{\kappa^2 }{4} +   |\theta' |^2 \,\|L\chi\|^2  + V(t,\cdot,0)}_{=:f(t,x)} + \tfrac{b}{2}|\Phi(t)|^2\Big) \Phi(t)\Big\rangle 
\end{eqnarray*}
and 
\begin{eqnarray*}
\big| \big\langle \Phi(t), \tfrac{\partial^2}{\partial x^2} \big(f+ \tfrac{b}{2}|\Phi(t)|^2\big) \Phi(t)\big\rangle \big|
&\leq &\big| \big\langle \Phi'(t),  \big(f+ b|\Phi(t)|^2\big) \Phi'(t)\big\rangle \big| 
+ \big| \big\langle \Phi'(t),  \tfrac{b}{2} \overline{\Phi'(t)}   \Phi(t)^2\big\rangle \big| 
+ \big| \big\langle \Phi'(t),   f'  \Phi(t)\big\rangle \big| \\
&\leq &     \|\Phi(t)\|^2_{H^1} \left( C +b \|\Phi(t)\|^2_{L^\infty}\right) + \tfrac{b}{2}   \|\Phi(t)\|^2_{H^1}   \|\Phi(t)\|^2_{L^\infty}
+ C \|\Phi(t)\|_{H^1} \\
&\leq&  C_1\|\Phi(t)\|^4_{H^1} 
\end{eqnarray*}
for some constant $C_1\in \R$ we have
\begin{eqnarray*}
\left\|\tfrac{\partial^2}{\partial x^2} \Phi(t)\right\|^2 
&\leq& \| \mathcal{E}^{\Phi(t)}(t)\Phi(t)\|^2 + 2 C_1\|\Phi(t)\|^4_{H^1} \,.
\end{eqnarray*}
 Moreover, for $\Phi(t)\in H^4(\R)$ we have 
\begin{eqnarray*}
 \big\| \mathcal{E}^{\Phi(t)}(t)\Phi(t)\big\|^2+1  &=&  \big\| \mathcal{E}^{\Phi(0)}(0)\Phi(0)\big\|^2+1 + \int_0^t \tfrac{\D}{\D s}  \left\langle  \mathcal{E}^{\Phi(s)}(s)\Phi(s),  \mathcal{E}^{\Phi(s)}(s)\Phi(s)\right\rangle
     \,\D s \\&=& \big\| \mathcal{E}^{\Phi(0)}(0)\Phi(0)\big\|^2 +1+ 2 \int_0^t  \Re\left\langle \dot V (s,x,0) \Phi(s),  \mathcal{E}^{\Phi(s)}(s)\Phi(s)\right\rangle
     \D s\\
     &\leq& C_2\|\Phi(0)\|_{H^2}^2   + C_3 \int_0^t  \big\| \mathcal{E}^{\Phi(s)}(s)\Phi(s)\big\| \,\D s\\
     &\leq& C_2\|\Phi(0)\|_{H^2}^2    + C_3 \int_0^t  \left( \big\| \mathcal{E}^{\Phi(s)}(s)\Phi(s)\big\|^2+1\right) \,\D s\,.
\end{eqnarray*}
An application of the Gr\"onwall inequality yields a bound of $ \left\| \mathcal{E}^{\Phi(t)}(t)\Phi(t)\right\|^2$ in terms of $ \|\Phi(0)\|_{H^2}^2$,
which, again by continuity of the solution map, extends to $\Phi(t)\in H^2(\R)$.
Hence the $H^2$-norm of $\Phi(t)$ remains bounded on bounded intervals in time.

\end{appendix}

\bibliographystyle{alphanum}

\begin{thebibliography}{99}

\bibitem{AGT}
R.\ Adami, F.\ Golse,  and A.\ Teta.
\newblock Rigorous derivation of the cubic NLS in dimension one.
\newblock {\em J.\ Stat.\ Phys.} 127(6): 1193--1220, 2007.

\bibitem{abdmehschweis05}
N.~Ben~Abdallah, F.~M\'ehats, C.~Schmeiser, and R.~Weish\"aupl.
\newblock The nonlinear Schr\"odinger equation with a strongly anisotropic
  harmonic potential.
\newblock {\em SIAM Journal on Mathematical Analysis}, 37(1):189--199, 2005.

\bibitem{BenOliSch12}
N.\ Benedikter, G.\ De~Oliveira, and B.\ Schlein.
\newblock Quantitative derivation of the Gross-Pitaevskii equation.
\newblock {\em Communications on Pure and Applied Mathematics}, 68(8):1399--1482, 2015.

\bibitem{BenPorSch15}
N.\ Benedikter, M.\ Porta, and B.\ Schlein..
\newblock {Effective Evolution Equations from Quantum Dynamics}.
\newblock ArXiv:1502.02498, 2015.	 


\bibitem{CheHol13}
X.~Chen and J.~Holmer.
\newblock On the rigorous derivation of the 2d cubic nonlinear Schr\"odinger
  equation from 3d quantum many-body dynamics.
\newblock {\em Archive for Rational Mechanics and Analysis}, 210(3):909--954,
  2013.

\bibitem{CheHol14}
X.~{Chen} and J.~{Holmer}.
\newblock {Focusing quantum many-body dynamics II: the rigorous derivation of
  the 1d focusing cubic nonlinear Schr\"odinger equation from 3d}.
\newblock ArXiv:1407.8457, 2014.

   
\bibitem{ErdSchYau07}
L.~Erd{\H{o}}s, B.~Schlein, and H.-T.\ Yau.
\newblock Derivation of the cubic non-linear Schr\"odinger equation from quantum dynamics of many-body systems.
\newblock {\em Invent.\ Math.}, 167:515--614, 2007.

\bibitem{ErdYau01}
L.~Erd{\H{o}}s and H.-T. Yau.
\newblock Derivation of the nonlinear {S}chr\"odinger equation from a many body
  {C}oulomb system.
\newblock {\em Adv.\ Theor.\ Math.\ Phys.}, 5(6):1169--1205, 2001.


\bibitem{FZ}
J.\ Fort\'agh and C.\ Zimmermann.
\newblock Magnetic microtraps for ultracold atoms.
\newblock {\em Rev.\ Mod.\ Phys.}, 79(1):235--289, 2007.


 
\bibitem{Gol13}
F.\ Golse.
\newblock {On the Dynamics of Large Particle Systems in the Mean Field Limit}.
\newblock ArXiv:1301.5494, 2013.


\bibitem{GoeVogKet01}
A.~G\"orlitz, J.~M.\ Vogels, A.~E.\ Leanhardt, C.~Raman, T.~L.\ Gustavson, J.~R.\
  Abo-Shaeer, A.~P.\ Chikkatur, S.~Gupta, S.~Inouye, T.~Rosenband, and
  W.~Ketterle.
\newblock Realization of Bose-Einstein condensates in lower dimensions.
\newblock {\em Phys.\ Rev.\ Lett.}, 87:130402,   2001.


\bibitem{GrMa13}
M.\ Grillakis and M.\ Machedon.
\newblock Pair excitations and the mean field approximation of interacting Bosons.
\newblock {\em Commun.\ Math.\ Phys.},   324:601--636, 2013. 


\bibitem{HaLaTe14}
S.\ Haag, J.\ Lampart, and S.\ Teufel.
\newblock Generalised quantum waveguides.
\newblock {\em Annales Henri Poincar{\'e}},   16:2535--2568, 2015.  

\bibitem{HeRy09}
K.~Henderson, C.~Ryu, C.~MacCormick, and M.~G.\ Boshier.
\newblock Experimental demonstration of painting arbitrary and dynamic
  potentials for Bose-Einstein condensates.
\newblock {\em New Journal of Physics}, 11(4):043030, 2009.



\bibitem{KnoPic09}
A.~Knowles and P.~Pickl.
\newblock Mean-field dynamics: singular potentials and rate of convergence.
\newblock {\em Comm.\ Math.\ Phys.}, 298(1):101--138, 2010.


\bibitem{Kre08}
D.\ Krej\v{c}i\v{r}{\'i}k.
\newblock Twisting versus bending in quantum waveguides.
\newblock {\em  Analysis on Graphs and its Applications: Proceedings of the Symposium on Pure Mathematics, American Mathematical Society}, 617--636, 2008.

\bibitem{LT} J.\ Lampart and S.\ Teufel.
\newblock The adiabatic limit of Schr\"odinger operators on fibre bundles.
\newblock To appear in {\em Mathematische Annalen} (see also  ArXiv:1402.0382, 2014).

\bibitem{LewNamRou14}
M.~{Lewin}, P.T.~{Nam}, and N.~{Rougerie}.
\newblock {The mean-field approximation and the non-linear Schr\"odinger
  functional for trapped Bose gases}.
\newblock {\em Trans.\ Amer.\ Math.\ Soc.}, 368:6131-6157, 2016.



\bibitem{LieSeiSolYng05}
E.\ H.\ Lieb, R.~Seiringer, J.\ P.\ Solovej, and J.~Yngvason.
\newblock {\em The mathematics of the Bose gas and its condensation}, Volume~34
  of {\em Oberwolfach Seminars}.
\newblock Birkh\"auser, 2005.

\bibitem{LieSeiYng03}
E.\ H.\ Lieb, R.~Seiringer, and J.~Yngvason.
\newblock One-dimensional behaviour of dilute, trapped Bose gases.
\newblock  {\em Comm.\ Math.\ Phys.}, 244(2):347--393, 2004.


\bibitem{MehRay15}
F.\ M\'ehats and N.\ Raymond.
\newblock  Strong confinement limit for the nonlinear Schr\"odinger equation constrained on a curve.
\newblock ArXiv:1412.1049, 2014.


\bibitem{NamRouSei15}
P.T.\ {Nam}, N.\ Rougerie, and R.\ Seiringer.
\newblock Ground states of large bosonic systems: The Gross-Pitaevskii limit revisited.
\newblock ArXiv:1503.07061, 2015.


\bibitem{NaNa15}
P.T.\ {Nam}  and M.\ {Napi{\'o}rkowski}.
\newblock Bogoliubov correction to the mean-field dynamics of interacting bosons,
\newblock ArXiv:1509.04631, 2015.



\bibitem{Pic08}
P.~{Pickl}.
\newblock {On the time dependent Gross-Pitaevskii- and Hartree equation}.
\newblock ArXiv:0808.1178, 2008.


\bibitem{Pic10a}
P.~{Pickl}.
\newblock {Derivation of the time dependent Gross-Pitaevskii equation with
   external fields}.
\newblock {\em Rev.\ Math.\ Phys.}, 27:1550003, 2015. (see also  arXiv:1001.4894).


\bibitem{Pic10b}
P.\ Pickl.
\newblock
Derivation of the time dependent Gross-Pitaevskii equation without positivity condition on the interaction.
\newblock {\em J.\ Stat.\ Phys.} 140(1):76--89, 2010.


\bibitem{Pic11}
P.~Pickl.
\newblock A simple derivation of mean field limits for quantum systems.
\newblock {\em Lett.\   Math.\ Phys.}, 97(2):151--164, 2011.

 
\bibitem{RodSch07}
I.~Rodnianski and B.~Schlein.
\newblock Quantum fluctuations and rate of convergence towards mean field
  dynamics.
\newblock {\em Comm.\ Math.\ Phys.}, 291(1):31--61, 2009.

 
 
\bibitem{Rou15}
N.\ Rougerie.
\newblock {De finetti theorems, mean-field limits and bose-Einstein condensation}.
\newblock ArXiv:1506.05263, 2015.


 
\bibitem{SchYng06}
K.\ Schnee and J.\ Yngvason.
\newblock Bosons in Disc-Shaped Traps: From 3D to 2D.
\newblock {\em Comm.\ Math.\ Phys.}, 269(3):659--691, 2006.


\bibitem{Sch08}
B.\ Schlein.
\newblock {Derivation of Effective Evolution Equations from Microscopic Quantum Dynamics}.
\newblock ArXiv:0807.4307, 2008.

 
\bibitem{Sp14}
C.\ Sparber.
\newblock Weakly nonlinear time-adiabatic theory.
\newblock {\em Annales Henri Poincar\'e}, Online First, 2015  (see also arXiv:1411.0335).

\bibitem{Tao06}
T.~Tao.
\newblock {\em Nonlinear dispersive equations. Local and global analysis. CBMS
  Regional Conference Series in Mathematics, 106}, Volume 200,
\newblock 2006.

\bibitem{WT} J.\ Wachsmuth and S.\ Teufel.
\newblock
Effective Hamiltonians for constrained quantum systems.
\newblock {\em Memoirs of the AMS} 1083 (2013).


\end{thebibliography}
\newcommand{\etalchar}[1]{$^{#1}$}

\end{document}